\documentclass[11pt]{article}
\usepackage{amsmath,amsthm,latexsym,amssymb,amsfonts,epsfig, psfrag}

\addtolength{\textwidth}{2cm} \addtolength{\oddsidemargin}{-1cm}





\oddsidemargin 0cm

\topmargin -1.5cm

\textwidth 16cm  
\textheight 24cm  
%
%

%
\makeatletter \@addtoreset{equation}{section} \makeatother

\pagestyle{plain}
%
\setcounter{secnumdepth}{5}

%
\def\be{\begin{equation}}
\def\ee{\end{equation}}
\def\ba{\begin{eqnarray}}
\def\ea{\end{eqnarray}}

%

%
\def\Nl{{\mathchoice
{\setbox0=\hbox{$\displaystyle\rm N$}\hbox{\hbox to0pt
{\kern0.4\wd0\vrule height0.9\ht0\hss}\box0}}
{\setbox0=\hbox{$\textstyle\rm N$}\hbox{\hbox to0pt
{\kern0.4\wd0\vrule height0.9\ht0\hss}\box0}}
{\setbox0=\hbox{$\scriptstyle\rm N$}\hbox{\hbox to0pt
{\kern0.4\wd0\vrule height0.9\ht0\hss}\box0}}
{\setbox0=\hbox{$\scriptscriptstyle\rm N$}\hbox{\hbox to0pt
{\kern0.4\wd0\vrule height0.9\ht0\hss}\box0}}}}
%
\def\Zl{{\mathchoice
{\setbox0=\hbox{$\displaystyle\rm Z$}\hbox{\hbox to0pt
{\kern0.4\wd0\vrule height0.9\ht0\hss}\box0}}
{\setbox0=\hbox{$\textstyle\rm Z$}\hbox{\hbox to0pt
{\kern0.4\wd0\vrule height0.9\ht0\hss}\box0}}
{\setbox0=\hbox{$\scriptstyle\rm Z$}\hbox{\hbox to0pt
{\kern0.4\wd0\vrule height0.9\ht0\hss}\box0}}
{\setbox0=\hbox{$\scriptscriptstyle\rm Z$}\hbox{\hbox to0pt
{\kern0.4\wd0\vrule height0.9\ht0\hss}\box0}}}}
%
\def\Ql{{\mathchoice
{\setbox0=\hbox{$\displaystyle\rm Q$}\hbox{\hbox to0pt
{\kern0.4\wd0\vrule height0.9\ht0\hss}\box0}}
{\setbox0=\hbox{$\textstyle\rm Q$}\hbox{\hbox to0pt
{\kern0.4\wd0\vrule height0.9\ht0\hss}\box0}}
{\setbox0=\hbox{$\scriptstyle\rm Q$}\hbox{\hbox to0pt
{\kern0.4\wd0\vrule height0.9\ht0\hss}\box0}}
{\setbox0=\hbox{$\scriptscriptstyle\rm Q$}\hbox{\hbox to0pt
{\kern0.4\wd0\vrule height0.9\ht0\hss}\box0}}}}
%
\def\Rl{{\mathchoice
{\setbox0=\hbox{$\displaystyle\rm R$}\hbox{\hbox to0pt
{\kern0.4\wd0\vrule height0.9\ht0\hss}\box0}}
{\setbox0=\hbox{$\textstyle\rm R$}\hbox{\hbox to0pt
{\kern0.4\wd0\vrule height0.9\ht0\hss}\box0}}
{\setbox0=\hbox{$\scriptstyle\rm R$}\hbox{\hbox to0pt
{\kern0.4\wd0\vrule height0.9\ht0\hss}\box0}}
{\setbox0=\hbox{$\scriptscriptstyle\rm R$}\hbox{\hbox to0pt
{\kern0.4\wd0\vrule height0.9\ht0\hss}\box0}}}}
%
\def\Cl{{\mathchoice
{\setbox0=\hbox{$\displaystyle\rm C$}\hbox{\hbox to0pt
{\kern0.4\wd0\vrule height0.9\ht0\hss}\box0}}
{\setbox0=\hbox{$\textstyle\rm C$}\hbox{\hbox to0pt
{\kern0.4\wd0\vrule height0.9\ht0\hss}\box0}}
{\setbox0=\hbox{$\scriptstyle\rm C$}\hbox{\hbox to0pt
{\kern0.4\wd0\vrule height0.9\ht0\hss}\box0}}
{\setbox0=\hbox{$\scriptscriptstyle\rm C$}\hbox{\hbox to0pt
{\kern0.4\wd0\vrule height0.9\ht0\hss}\box0}}}}
%
\def\Hl{{\mathchoice
{\setbox0=\hbox{$\displaystyle\rm H$}\hbox{\hbox to0pt
{\kern0.4\wd0\vrule height0.9\ht0\hss}\box0}}
{\setbox0=\hbox{$\textstyle\rm H$}\hbox{\hbox to0pt
{\kern0.4\wd0\vrule height0.9\ht0\hss}\box0}}
{\setbox0=\hbox{$\scriptstyle\rm H$}\hbox{\hbox to0pt
{\kern0.4\wd0\vrule height0.9\ht0\hss}\box0}}
{\setbox0=\hbox{$\scriptscriptstyle\rm H$}\hbox{\hbox to0pt
{\kern0.4\wd0\vrule height0.9\ht0\hss}\box0}}}}
%
\def\Ol{{\mathchoice
{\setbox0=\hbox{$\displaystyle\rm O$}\hbox{\hbox to0pt
{\kern0.4\wd0\vrule height0.9\ht0\hss}\box0}}
{\setbox0=\hbox{$\textstyle\rm O$}\hbox{\hbox to0pt
{\kern0.4\wd0\vrule height0.9\ht0\hss}\box0}}
{\setbox0=\hbox{$\scriptstyle\rm O$}\hbox{\hbox to0pt
{\kern0.4\wd0\vrule height0.9\ht0\hss}\box0}}
{\setbox0=\hbox{$\scriptscriptstyle\rm O$}\hbox{\hbox to0pt
{\kern0.4\wd0\vrule height0.9\ht0\hss}\box0}}}}
%


\newcommand{\fl}{\mathfrak{l}}







\newcommand{\eqa}{\begin{eqnarray}}
\newcommand{\neqa}{\end{eqnarray}}


\usepackage{bbm}




\newtheorem{thm}{Theorem}
\newtheorem{lem}[thm]{Lemma}

\newtheorem{cor}[thm]{Corollary}

\newtheorem{rem}[thm]{Remark}
\newtheorem{defn}[thm]{Definition}

\newcommand{\g}{\ensuremath{\mathbf{g}}}
\newcommand{\GRG}{{\itshape Gen.\ Rel.\ Grav. }}
\newcommand{\CQG}{{\itshape Class.\ Quant.\ Grav. }}
\newcommand{\JMP}{{\itshape J.\ Math.\ Phys. }}
\newcommand{\PR}{{\itshape Phys.\ Rev. }}
\newcommand{\PRL}{{\itshape Phys.\ Rev.\ Lett.\ }}

\title{Encoding cosmological futures with conformal structures}
\author{Philipp A H\"ohn\footnote{Present address: Institute for Theoretical Physics, Utrecht University, Leuvenlaan 4, NL-3584 CE Utrecht, The Netherlands} \ and Susan M Scott\\
\\
\small Centre for Gravitational Physics, Department of Physics, College of Science, \\
\small The Australian National University, Canberra ACT 0200, AUSTRALIA\\
\small E-mail: P.A.Hohn@uu.nl, Susan.Scott@anu.edu.au}

\date{}

\begin{document}

\maketitle

\begin{abstract}
Quiescent cosmology and the Weyl curvature hypothesis possess a
mathematical framework, namely the definition of an \emph{isotropic
singularity}, but only for the initial state of the universe. A
complementary framework is necessary to also encode appropriate
cosmological futures. In order to devise a new framework we
analyse the relation between regular conformal structures and
(an)isotropy, the behaviour and role of a monotonic conformal factor which
is a function of cosmic time, as well as four example cosmologies
for further guidance. Finally, we present our new definitions of
an \emph{anisotropic future endless universe} and an \emph{anisotropic future
singularity} which offer a promising realisation for the new
framework. Their irregular, degenerate conformal structures differ
significantly from those of the \emph{isotropic singularity}. The
combination of the three definitions together could then provide
the first complete formalisation of the quiescent cosmology
concept. For completeness we also present the new definitions of
an \emph{isotropic future singularity} and a \emph{future isotropic universe}. The relation to other approaches, in particular to the somewhat dual dynamical systems approach, and other asymptotic scenarios is briefly discussed.

\end{abstract}
\section{Introduction}

Since the discovery of the large scale isotropy of the cosmic microwave background, there has been a major endeavour to explain this characteristic of our universe, with the aim of eventually understanding its origin and future evolution.

From several competing schools of thought, it was Misner's \emph{chaotic cosmology} concept \cite{Misner} which was the first prominent idea to explain the isotropy of our universe. According to this picture the beginning of the universe was extremely irregular and chaotic. The advantage of this view is the avoidance of having to implement too stringent constraints on the initial state of the universe, in order to obtain its currently observed state through cosmic evolution. In a sense the current state of our universe would be fairly independent of the exact conditions at its beginning. The apparent isotropy of the universe was conjectured to have been produced through subsequent dissipative effects, such as the production and collisions of particles, neutrino viscosity, etc., and is therefore observed because we must live at a sufficiently \emph{late stage} of the evolution of our universe. Some detailed analyses, however, have indicated that \emph{chaotic cosmology} is untenable in its full generality, e.g., the expected entropy production through dissipation seems to be incompatible with the observed photon per baryon number \cite{Collins, Hawking, Barrow1977, Barrow1978, Penrose1979}.

\emph{Quiescent cosmology}, on the other hand, was introduced by Barrow \cite{Barrow1978} as an alternative concept to cope with the difficulties arising in the above scenario, and stands in direct contrast to \emph{chaotic cosmology}. According to this view, the universe must, in fact, have originated in a highly regular and smooth beginning. Gravitational attraction, which becomes dominant on large scales, is responsible for the evolution away from the initial isotropy and homogeneity of the universe. This scenario implies that we observe isotropy as a consequence of living at a somewhat \emph{early stage} of the evolution of our universe. It also suggests that a classical cosmological model describing our own universe should possess an initial singularity which is similar to the one in the isotropic FRW models.

A possible justification for the initial isotropy of the universe is offered by Penrose's notion of \emph{gravitational entropy} \cite{Penrose1979}, i.e., the contribution of the gravitational field to the total entropy, which behaves somewhat anomalously in that it increases with the clumping of matter and becomes maximal in a black hole; this might seem counter-intuitive at first sight when one considers the usual behaviour of entropy with regard to the matter distribution. This behaviour is due to the fact that the gravitational force is (at the classical level) universally attractive. As a consequence of the arrow of time the initial state of the universe must have been one of low entropy, and since matter as well as radiation were presumably initially in thermal equilibrium, it seems reasonable to search for the necessary low entropy constraint in the geometry of space-time itself rather than in a special matter distribution. A low gravitational entropy constraint is provided by a faint degree of clumping and therefore spatial isotropy. This low degree of clumping can be associated with a bounded Weyl curvature \cite{Penrose1979} and therefore provides a connection between Weyl curvature and \emph{gravitational entropy} which is known as the \emph{Weyl Curvature Hypothesis} (WCH). There exist several versions of the WCH, with most authors using the hypothesis that the Weyl curvature vanished initially. Here we will adopt a weaker version, however, namely the expectation that the Weyl curvature would initially be matter (i.e., Ricci) dominated and that the converse would be true at a cosmological future\footnote{This weaker version is necessary for cosmologies with anisotropic future evolution, since there are no known cosmologies which satisfy the stronger version other than the completely isotropic FRW universes.}.

For a detailed theoretical investigation of \emph{quiescent cosmology}, the WCH and their structural implications for cosmological models, we need to utilise a suitable mathematical framework. Goode and Wainwright \cite{GW1985} provided the geometric definition of an \emph{Isotropic Singularity}---henceforth called an \emph{Isotropic Past Singularity} (IPS)---which generalises a considerable amount of previous work on initial singularities and which gives a beautiful formalisation of \emph{quiescent cosmology}, at least for the initial state of our universe. For the motivation of the results concerning cosmological futures presented in this article, and for comparison, it is necessary to summarise the features of an IPS at this point\footnote{This version of the definition is due to S.\ M.\ Scott \cite{GVR} who has removed the inherent technical redundancies of the original definition by Goode and Wainwright.}:

\begin{defn}[\bf{Isotropic past singularity}]\label{ISdef1.def} {A space-time $(\mathcal{M},\mathbf{g})$ is said to admit an isotropic past singularity if there exists a space-time $(\tilde{\mathcal{M}},\tilde{\mathbf{g}})$, a smooth cosmic time function\footnote{A cosmic time function increases along every future-directed causal curve. Hawking and Ellis \cite{HawkEll1973} have proven the important result that a space-time $\left(\mathcal{M},\mathbf{g}\right)$ admits a cosmic time function if and only if it is \emph{stably causal}. Requiring the existence of a cosmic time function $T$ is therefore equivalent to requiring stable causality on $(\tilde{\mathcal{M}},\tilde{\mathbf{g}})$.} $T$ defined on $\tilde{\mathcal{M}}$, and a conformal factor $\Omega\left(T\right)$ which satisfy
\begin{enumerate}
\item $\mathcal{M}$ is the open submanifold $T>0$,
\item $\mathbf{g}=\Omega^{2}\left(T\right)\tilde{\mathbf{g}}$ on $\mathcal{M}$, with $\tilde{\mathbf{g}}$ regular (at least $C^{3}$ and non-degenerate) on an open neighbourhood of $T=0$,
\item $\Omega\left(0\right)=0$ and $\exists \ b>0$ such that $\Omega\in C^{0}[0,b]\cap C^{3}(0,b]$ and $\Omega\left(0,b\right]>0$,
\item $\lambda\equiv \mathop {\lim }\limits_{T \to 0^{+}}L\left(T\right)$ exists, $\lambda\neq 1$, where $L\equiv \frac{\Omega ''}{\Omega}\left(\frac{\Omega}{\Omega '}\right)^{2}$ and a prime denotes differentiation with respect to $T$.
\end{enumerate}}
\end{defn}

We will call the cosmological solution,
$(\mathcal{M},\mathbf{g})$, of the Einstein field equations (EFE),
the \emph{physical space-time}, and, correspondingly, the
conformally related space-time,
$(\tilde{\mathcal{M}},\tilde{\mathbf{g}})$---which, in general,
does not describe a physical universe---will be referred to as the
\emph{unphysical space-time} (see figure\ \ref{IPS.img}).
\begin{figure}[h!]
\centering
\includegraphics[width=0.75\textwidth, height=0.55\textwidth]{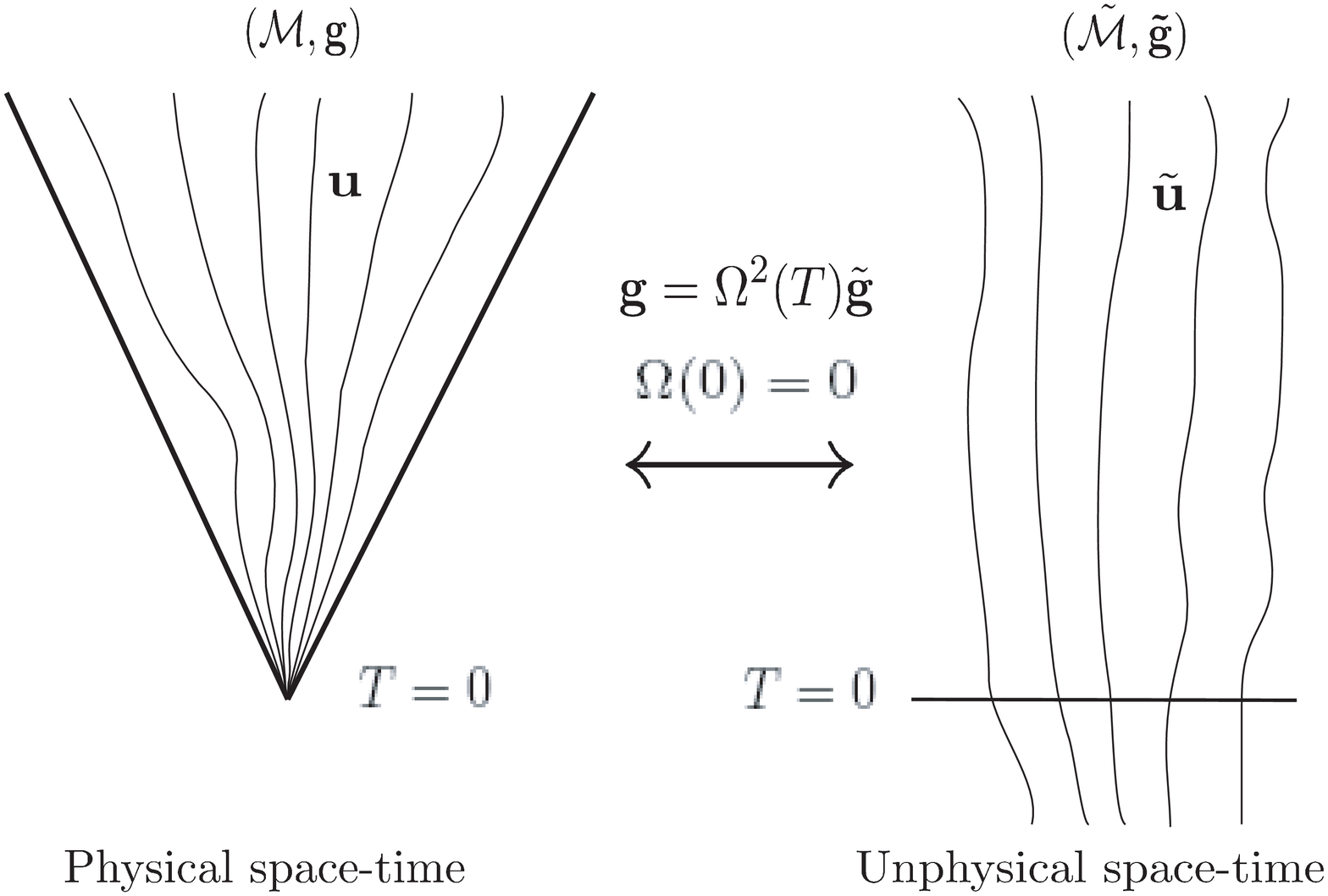}
\caption{{\footnotesize Pictorial interpretation of the definition of an IPS ($\mathbf{u}$ represents the fluid flow of Definition \ref{fluidflow1.def}). }}
\label{IPS.img}
\end{figure}
It is important to note that, due to the form of the definition, one should think here of the \emph{physical space-time} as being produced from the \emph{unphysical} one, rather than vice-versa. In this sense, the initial singularity in the \emph{physical space-time} is due only to the vanishing of $\Omega$ on the regular spacelike hypersurface $T=0$ in $\tilde{\mathcal{M}}$. Consequently, the slice $T=0$ will be referred to as the IPS.

Moving the entire singular behaviour into the vanishing of the conformal factor provides a great analytical advantage, since the regularity in the \emph{unphysical space-time} at $T=0$ guarantees simple derivations of implications of the definition through the conformal relationships for physical quantities \cite{GW1985}. As will be seen, however, this analytical advantage will not fully be realised anymore at an appropriate cosmological future.

In order to explore how isotropic the above definition really is, one can investigate several types of isotropy, all of which can be found in the FRW universes \cite{GW1985,Ellis69,Geoffthesis}. Significant for the subsequent investigations in this article are the following two types:
\begin{description}
\item [(1)] \emph{Weyl isotropy:} $C_{abcd}\equiv 0$, i.e., there do not exist preferred directions due to the principal null directions.
\item [(2)] \emph{Kinematic isotropy relative to the cosmological fluid flow $\mathbf{u}$:} $\sigma =\omega =\dot{u}^{a}\dot{u}_{a}\equiv 0$, i.e., the shear and vorticity eigenvectors, as well as the acceleration vector cannot define preferred directions orthogonal to $\mathbf{u}$.
\end{description}

In fact, it can be shown that Definition \ref{ISdef1.def} alone does not, in general, guarantee such an initial isotropy of a cosmological model \cite{GW1985}, not even asymptotically in the sense that initially the Weyl curvature would be Ricci dominated and that the isotropic kinematic quantity, namely the expansion $\theta$, would dominate over the anisotropic kinematic quantities, such as the vorticity, shear and acceleration. This is trivially the case for the isotropic FRW models which are characterised by a globally vanishing Weyl tensor, vorticity, shear and acceleration and a non-vanishing expansion.

A further constraint on the fluid flow, however, fixes the problem of the initial asymptotic isotropy, namely \cite{GW1985}:

\begin{defn}[\bf{fluid congruence}]\label{fluidflow1.def}  {With any unit timelike congruence $\mathbf{u}$ in $\mathcal{M}$ we can associate a unit timelike congruence $\tilde{{\mathbf{u}}}$ in $\tilde{\mathcal{M}}$ such that
\begin{eqnarray}
\tilde{{\mathbf{u}}}=\Omega\mathbf{u} \qquad \mbox{in} \qquad
\mathcal{M}.
\end{eqnarray}
\begin{description}
\item[(a)] If we can choose $\tilde{{\mathbf{u}}}$ to be regular (at least $C^{3}$) on an open neighbourhood of $T=0$ in $\tilde{\mathcal{M}}$, we say that $\mathbf{u}$ is \emph{regular at the isotropic past singularity.}
\item[(b)] If, in addition, $\tilde{{\mathbf{u}}}$ is orthogonal to $T=0$, we say that $\mathbf{u}$ is \emph{orthogonal to the isotropic past singularity.}
\end{description}}
\end{defn}

Including this definition---which might seem a bit \emph{ad hoc} at first sight, but which is actually fulfilled in a wide class of models---is crucial for the framework of an IPS. Goode and Wainwright \cite{GW1985} proved that for a unit timelike congruence $\mathbf{u}$ which is regular at, and orthogonal to the IPS (and which always exists), an asymptotic isotropy holds, in the sense that
\begin{eqnarray}
\mathop {\lim }\limits_{T \to 0^{+}}K = 0, \qquad \mbox{where} \qquad K=\frac{C_{abcd}C^{abcd}}{R_{ef}R^{ef}}  ,\label{K.eqn} \\
\mathop {\lim }\limits_{T \to
0^{+}}\frac{\sigma^{2}}{\theta^{2}}=0, \qquad \mathop {\lim
}\limits_{T \to 0^{+}}\frac{\omega^{2}}{\theta^{2}}=0, \qquad
\mathop {\lim }\limits_{T \to
0^{+}}\frac{\dot{u}^{a}\dot{u}_{a}}{\theta^{2}}=0.
\label{kiniso.eqn}
\end{eqnarray}
The scalar $K$ is the simplest and most frequently used measure of the relative significance of the Weyl and Ricci curvature and becomes important in probing the WCH. A discussion of $K$ and related ratios in a similar context and of their possible significance as a measure of gravitational entropy may be found in \cite{GoodeColeyWainwright1992,BarrowHervik2002,LUW2006} and references therein.

The above ratios offer the desired initial asymptotic isotropy;
the framework of the IPS therefore provides a formalisation for
\emph{quiescent cosmology} and the WCH, at least for the initial
state of the universe, and it is the combination of Definitions
\ref{ISdef1.def} and \ref{fluidflow1.def} which we mean when we
refer to the IPS. Since the definition does not necessarily imply
an exact initial isotropy, there is room for cosmologies with an
anisotropic future evolution, in which we are subsequently mainly
interested.

There are a number of other significant physical implications of
this definition e.g., a possible implementation of galaxy formation into
the theory, explanations for the \emph{uniqueness} and
\emph{flatness problem} \cite{GoodeColeyWainwright1992} and
information about curvature and the kinematics
\cite{GW1985,GVR,Geoffthesis,Scott2000,Scott2002,Nolan2001}. It was also
argued that \emph{quiescent cosmology} once formulated in the
framework of the IPS may be a viable alternative scenario to
cosmic inflation \cite{GoodeColeyWainwright1992}. Furthermore,
much is already known about the characterising feature for an IPS
to occur in a cosmology \cite{Geoffthesis}. A list of example
cosmologies which admit an IPS may be found in
\cite{Geoffthesis,Scott1998}.

In spite of this success for giving a framework for a possible initial
state of the universe, the definition of an IPS does not guarantee
a future evolution of a respective cosmology which is compatible
with the ideas of the WCH and \emph{quiescent cosmology}.
Projecting these ideas forward in time, it is clear that the local
degree of clumping of matter, and therefore the local anisotropy,
in the universe must increase if the cosmological future is to be
a high-entropy state. Since FRW universes are globally isotropic
and a subclass of them admit an IPS \cite{Geoffthesis}, it becomes
clear that the definition of an IPS is not sufficient, in itself,
to guarantee a future evolution which is consistent with the WCH.

Consequently, to complete the formalisation of \emph{quiescent
cosmology} and the WCH, it is also necessary to encode possible
final states of our universe, thereby providing a possible full
mathematical formulation of these two schools of thought for the
first time. We need a complementary framework which is independent
of models, coordinates and the equation of state of the source of
the gravitational field. In fact, since a recollapsing, as well as
an ever-expanding universe are both compatible with our current
knowledge of the universe, we would like to find, ideally, new
geometric definitions covering both scenarios.

In order to devise a new framework in terms of conformal
structures, we begin, in section 2, by analysing the connection
between such conformal rescalings and the two types of isotropy
discussed above, and showing that a completely regular conformal structure (with vanishing or diverging conformal factor) for an asymptotic state of the universe excludes the possibility that it is anisotropic. As further guidance and motivation for the
construction of the new definitions, we explore, in section
\ref{examples}, four specific example cosmologies for their
conformal relations. Our promising new frameworks for
cosmological futures are presented in section \ref{newdefs}, in
which definitions for both an anisotropic ever-expanding and an
anisotropic recollapsing universe are given along with two
definitions for isotropic cosmological futures. A new feature of the former two cases is the employment of conformal structures which are not completely regular. Essential
technical information about conformal factors, which is necessary
in proving implications of these conformal structures, but does
not represent the key information of the main body of this
article, is given in \ref{conffac}. Our discourse will close in
section \ref{discussion} with a summary of the presented work, a discussion about its relation to other approaches--in particular to the somewhat dual dynamical systems approach--and other asymptotic scenarios, as well as an outlook to a forthcoming paper. For reasons of space the present article will solely display the results forming the foundation of the new framework, while the second paper will demonstrate the derivation of results and implications using the new definitions and provide further example cosmologies. A brief outline of this topic
may furthermore be found in the essay \cite{GRF2007}.

\section{Conformal structures and isotropy}\label{confiso}

In this section we will analyse the relation between conformal
structures, with a conformal factor which is merely a function of
cosmic time, and the two types of isotropy mentioned above. We
will use reasonable constraints on the conformal factor, e.g., we
want to avoid an oscillatory behaviour of it near $T=0$ and
require a monotonic behaviour. Pathological behaviour of the
physical metric shall again be entirely encoded in the conformal
factor; this time, however, we will treat the cases of both a
vanishing and a diverging conformal factor\footnote{For brevity,
we will write $\Omega(0)=+\infty$ for $\mathop {\lim }\limits_{{T}
\to 0^{\pm}}{\Omega}\left({T}\right)=+\infty$ in the remainder of
this article (the limit may be taken along the flow lines of the
cosmological fluid).}. Furthermore, we will cover both initial and
final cosmological scenarios in this section, i.e., the cosmic
time function is allowed to approach $0$ from above (initial
state) or below (final state). Interestingly, the asymptotic
isotropy expressed in the ratios of (\ref{K.eqn}) and
(\ref{kiniso.eqn}) holds in all these cases\footnote{Note that
some of the following results are a generalisation of Theorems 3.1
and 3.3 of \cite{GW1985}.}.

We begin by analysing the relative significance of the Weyl to the
Ricci curvature in terms of the scalar $K$ (see (\ref{K.eqn})).

\begin{thm}[\bf{K-Theorem}]\label{K.thm}{Let $({\mathcal{M}},\mathbf{{g}})$ and $(\tilde{\mathcal{M}},\mathbf{\tilde{g}})$ be two space-times which are related via the conformal rescaling $\mathbf{g}={\Omega}^{2}({T})\mathbf{\tilde{g}}$, where ${T}$ is a smooth cosmic time function defined on $\tilde{\mathcal{M}}$ and $\mathbf{\tilde{g}}$ is non-degenerate and at least $C^{2}$ on an open neighbourhood of ${T}=0$. Let, furthermore, one of the following conditions be true:
\begin{enumerate}
\item ${T}\rightarrow 0^{-}$, ${\Omega}(0)=\infty$ and ${\Omega}$ is positive, $C^{2}$ and strictly monotonically increasing on some interval $[-c,0)$,
$c>0$.
\item ${T}\rightarrow 0^{-}$, $\Omega(0)=0$ and ${\Omega}$ is positive, $C^{2}$ and strictly monotonically decreasing on some interval $[-c,0)$, $c>0$, and continuous on $[-c,0]$.
\item ${T}\rightarrow 0^{+}$, ${\Omega}(0)=\infty$ and ${\Omega}$ is positive, $C^{2}$ and strictly monotonically decreasing on some interval $(0,c]$, $c>0$.
\item ${T}\rightarrow 0^{+}$, ${\Omega}(0)=0$ and ${\Omega}$ is positive, $C^{2}$ and strictly monotonically increasing on some interval $(0,c]$, $c>0$, and continuous on $[0,c]$.
\end{enumerate}
If ${\lambda}$ exists and ${\lambda}\neq1$, then $\mathop {\lim }\limits_{{T} \to 0^{\pm} }K=0$.}
\end{thm}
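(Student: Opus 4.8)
The plan is to exploit the conformal transformation laws for the Weyl and Ricci tensors and thereby reduce the statement to a single analytic fact about the conformal factor. Since the Weyl tensor with one raised index is conformally invariant, lowering with $\mathbf{g}=\Omega^{2}\tilde{\mathbf{g}}$ gives $C_{abcd}=\Omega^{2}\tilde{C}_{abcd}$, and raising all four indices with $\mathbf{g}^{-1}=\Omega^{-2}\tilde{\mathbf{g}}^{-1}$ yields $C_{abcd}C^{abcd}=\Omega^{-4}\,\tilde{C}_{abcd}\tilde{C}^{abcd}$, the tilded contraction being taken with $\tilde{\mathbf{g}}$. Because $\tilde{\mathbf{g}}$ is non-degenerate and $C^{2}$ near $T=0$, the factor $\tilde{C}_{abcd}\tilde{C}^{abcd}$ is continuous, hence bounded, on a neighbourhood of the limit slice, so all of the singular behaviour of the numerator of $K$ is carried by $\Omega^{-4}$.

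Writing $\phi=\ln\Omega$, so that $\tilde{\nabla}_{a}\phi=\phi'\,\tilde{\nabla}_{a}T$ with $\phi'=\Omega'/\Omega$, the standard conformal transformation of the Ricci tensor expresses $R_{ab}$ as $\tilde{R}_{ab}$ plus terms linear in $\phi'$ with regular, $T$-dependent coefficients, plus a term quadratic in $\phi'$. The quadratic part is decisive: using the identity $\phi''=(L-1)(\phi')^{2}$, which is precisely where $L$ (and therefore $\lambda$) enters, it takes the form $(\phi')^{2}S_{ab}$, where $S_{ab}$ is a regular tensor built from $\tilde{\nabla}_{a}T\,\tilde{\nabla}_{b}T$, $\tilde{g}_{ab}$ and $N:=\tilde{g}^{ab}\tilde{\nabla}_{a}T\,\tilde{\nabla}_{b}T$. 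A short computation gives the $\tilde{\mathbf{g}}$-norm $\langle S,S\rangle_{\tilde{\mathbf{g}}}$ equal to a positive numerical constant times $N^{2}(L^{2}-L+1)$. Since $T$ is a cosmic time function, $\tilde{\nabla}T$ is timelike in the regular metric $\tilde{\mathbf{g}}$, so $N$ is continuous and bounded away from zero near $T=0$, while $L^{2}-L+1=(L-\tfrac12)^{2}+\tfrac34\ge\tfrac34$ is bounded below. Assuming for the moment that $|\phi'|\to\infty$, the quadratic term dominates and $R_{ef}R^{ef}=\Omega^{-4}\langle R,R\rangle_{\tilde{\mathbf{g}}}=\Omega^{-4}\,(\text{const})\,(\phi')^{4}\,N^{2}\,(L^{2}-L+1)\,[1+o(1)]$. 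Forming the ratio, the factor $\Omega^{-4}$ cancels between numerator and denominator, leaving $K=\tilde{C}_{abcd}\tilde{C}^{abcd}\big/\big((\text{const})\,(\phi')^{4}N^{2}(L^{2}-L+1)\big)\,[1+o(1)]$, whose numerator is bounded and whose denominator diverges; hence $K\to0$.

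Everything therefore rests on the analytic claim that $|\Omega'/\Omega|\to\infty$ as $T\to0^{\pm}$, which I expect to be the main obstacle and which is supplied by the auxiliary results of \ref{conffac}. I would argue as follows. Set $\psi=1/\phi'=\Omega/\Omega'$; strict monotonicity of $\Omega$ on the relevant one-sided interval guarantees $\Omega'\neq0$ and a constant sign of $\phi'$, so $\psi$ is well defined. Differentiating gives $\psi'=-\phi''/(\phi')^{2}=1-L\to1-\lambda$, a finite nonzero limit precisely because $\lambda\neq1$. A function whose derivative has a finite limit at an endpoint is bounded near that endpoint, so $\psi$ extends continuously and attains a finite limit $\psi_{0}$ as $T\to0^{\pm}$. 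Were $\psi_{0}\neq0$, then $\phi'$ would be bounded near $T=0$ and hence $\phi=\ln\Omega$ would be bounded; but the hypothesis $\Omega(0)=0$ or $\Omega(0)=\infty$ forces $\phi\to\mp\infty$, a contradiction. Therefore $\psi_{0}=0$, i.e. $|\phi'|=|\Omega'/\Omega|\to\infty$, as required.

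Finally, this argument is insensitive to the four possibilities in the hypothesis: the direction of approach and the sense of monotonicity only fix the signs of $\phi'$ and $\phi$, whereas $K$, $C_{abcd}C^{abcd}$, $R_{ef}R^{ef}$ and the quadratic $L^{2}-L+1$ depend only on their squares. The sole place where the individual cases are distinguished is the last step, where one pairs ``$\Omega\to0$'' or ``$\Omega\to\infty$'' with the sign of $\phi$; in every case the conclusion $\phi\to\pm\infty$ contradicts a bounded $\phi'$, so $|\Omega'/\Omega|\to\infty$ and $K\to0$ uniformly.
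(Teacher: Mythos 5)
Your main computation follows the paper's own route: you use $C_{abcd}C^{abcd}=\Omega^{-4}\tilde{C}_{abcd}\tilde{C}^{abcd}$, expand the conformally transformed Ricci tensor in powers of $\Omega'/\Omega$, identify the leading coefficient (your $\langle S,S\rangle_{\tilde{\mathbf{g}}}$ computation reproduces exactly the paper's leading term $12\left(T_{,a}T^{,a}\right)^{2}\left(L^{2}-L+1\right)$ in its expansion of $R_{ab}R^{ab}$), cancel $\Omega^{-4}$ in the ratio, and reduce everything to the single analytic fact $|\Omega'/\Omega|\to\infty$. Where you genuinely differ is in how you establish that fact. The paper does it in its appendix (Lemmas \ref{conffac1.lem}--\ref{conffac4.lem}): it first pins down the sign of $\lambda-1$, then uses the sign of $\phi''=(L-1)(\phi')^{2}$ to make $\phi'=\Omega'/\Omega$ monotone near $T=0$, and finally rules out a finite limit by an integral contradiction. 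You instead work with the reciprocal $\psi=\Omega/\Omega'$ and the identity $\psi'=1-L$, which is more economical and bypasses the preliminary sign lemmas; your uniform bound $L^{2}-L+1\ge\tfrac{3}{4}$ likewise replaces the paper's case split in the final limit.

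There is, however, a genuine gap in your argument for $|\Omega'/\Omega|\to\infty$: you assume that $1-\lambda$ is a \emph{finite} nonzero limit. The hypothesis ``$\lambda$ exists'' is intended to include $\lambda=\pm\infty$: the paper's proof of Theorem \ref{K.thm} devotes its case (ii) precisely to $\lambda=\pm\infty$, Corollary \ref{regmet4.thm} invokes the theorem's hypotheses with $\lambda=+\infty$, and the appendix lemmas only give $\lambda\geq 1$ (resp.\ $\lambda\leq 1$), leaving $+\infty$ (resp.\ $-\infty$) available. When $\lambda=\pm\infty$ your key step breaks down: $\psi'=1-L$ diverges, and the principle ``a function whose derivative has a finite limit at an endpoint is bounded near that endpoint'' no longer applies --- a function whose derivative tends to $-\infty$ may itself diverge (e.g.\ $\ln(-T)$ as $T\to 0^{-}$), so you cannot conclude that $\psi$ extends continuously. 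The gap is easy to close within your own scheme: if $\lambda=\pm\infty$, then $\psi'=1-L$ has a definite sign near $T=0$, so $\psi$ is eventually monotone; since $\psi$ also has a fixed sign (that of $\Omega'$), in each of the four cases it is monotone towards the endpoint and bounded by $0$ on the relevant side, hence converges to some $\psi_{0}$, and your contradiction argument (a nonzero $\psi_{0}$ would make $\phi'$ bounded, hence $\ln\Omega$ bounded, contradicting $\Omega(0)\in\{0,\infty\}$) again forces $\psi_{0}=0$. Alternatively, note that the paper's monotonicity argument uses only the sign of $L-1$, never its finiteness, so it covers infinite $\lambda$ directly. With this case supplied, your proof is complete and correct.
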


\begin{proof}
See \ref{proofs}.
\end{proof}

The K-Theorem clearly demonstrates that, if we retain the
reasonable constraints on $\Omega$, we need to exclude regular
conformal metrics from the definition of an initial or final
cosmological state where an asymptotic Weyl anisotropy is supposed
to hold. The proof of the theorem provides us with some further
results, concerning $C_{abcd}C^{abcd}$ and $R_{ab}R^{ab}$, the
proofs of which can be found in \ref{proofs}.

\begin{cor}\label{regmet1.thm}{Suppose condition (ii) or (iv) of Theorem \ref{K.thm} is true. Then, $\mathop {\lim }\limits_{{T} \to 0^{\pm} }R_{ab}R^{ab}=+\infty$, and $\mathop {\lim }\limits_{{T} \to 0^{\pm} }|C_{abcd}C^{abcd}|=+\infty$ (unless possibly if $\mathop {\lim }\limits_{{T} \to 0^{\pm} }\tilde{C}_{abcd}\tilde{C}^{abcd}=0$).}
\end{cor}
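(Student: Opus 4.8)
The plan is to extract both limits from the explicit conformal expressions for $C_{abcd}C^{abcd}$ and $R_{ab}R^{ab}$ assembled in the proof of the K-Theorem (see \ref{proofs}), and then specialise to conditions (ii) and (iv), in which $\Omega(0)=0$. The Weyl part is then immediate: since the Weyl tensor is conformally invariant as a $(1,3)$ tensor, counting the powers of $\Omega$ produced by raising and lowering indices with $\mathbf{g}=\Omega^{2}\tilde{\mathbf{g}}$ gives
\[
C_{abcd}C^{abcd}=\Omega^{-4}\,\tilde{C}_{abcd}\tilde{C}^{abcd}.
\]
Because $\tilde{\mathbf{g}}$ is regular at $T=0$, the unphysical scalar $\tilde{C}_{abcd}\tilde{C}^{abcd}$ possesses a finite limit there; under (ii) or (iv) we have $\Omega\to 0$, so $\Omega^{-4}\to+\infty$ and hence $|C_{abcd}C^{abcd}|\to+\infty$, the only escape being $\tilde{C}_{abcd}\tilde{C}^{abcd}\to 0$, which is exactly the exceptional clause recorded in the statement.

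For the Ricci part I would start from the conformal transformation law for $R_{ab}$ in four dimensions. Since $\Omega=\Omega(T)$, every derivative of $\ln\Omega$ collapses to $\tilde{\nabla}_{a}\ln\Omega=(\Omega'/\Omega)\,\tilde{\nabla}_{a}T$ and to second derivatives involving $(\Omega''/\Omega)-(\Omega'/\Omega)^{2}=(L-1)(\Omega'/\Omega)^{2}$, so the terms carrying the top power of $\Omega'/\Omega$ collect into $R_{ab}=(\Omega'/\Omega)^{2}S_{ab}+O(\Omega'/\Omega)$, where $S_{ab}=2(2-\lambda)\,\tilde{\nabla}_{a}T\,\tilde{\nabla}_{b}T-(\lambda+1)\,\mu\,\tilde{g}_{ab}$ and $\mu=\tilde{g}^{cd}\tilde{\nabla}_{c}T\,\tilde{\nabla}_{d}T\neq 0$ (the gradient of a cosmic time function is timelike and $\tilde{\mathbf{g}}$ is non-degenerate). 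Raising the two indices with $\mathbf{g}$ contributes a further $\Omega^{-4}$, so
\[
R_{ab}R^{ab}=\Omega^{-4}\Big[(\Omega'/\Omega)^{4}\,\tilde{g}^{ac}\tilde{g}^{bd}S_{ab}S_{cd}+O\big((\Omega'/\Omega)^{3}\big)\Big].
\]
A direct contraction gives $\tilde{g}^{ac}\tilde{g}^{bd}S_{ab}S_{cd}=12\,\mu^{2}(\lambda^{2}-\lambda+1)$, and since $\lambda^{2}-\lambda+1>0$ for every real $\lambda$ this leading coefficient is strictly positive; in particular $S_{ab}$ is not a null direction for the contraction, so the displayed term genuinely dominates.

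It then remains to show $\Omega^{-4}(\Omega'/\Omega)^{4}\to+\infty$, for which I would appeal to the asymptotics of $\Omega$ collected in \ref{conffac}. Setting $\phi=\Omega/\Omega'$ one has $\phi'=1-L\to 1-\lambda\neq 0$, so $\phi$ extends continuously to $T=0$; were its limit nonzero, integrating $\Omega'/\Omega=1/\phi$ near $T=0$ would force $\Omega(0)\neq 0$, contradicting (ii) and (iv). Hence $\phi\to 0$, i.e. $|\Omega'/\Omega|\to+\infty$, and together with $\Omega^{-4}\to+\infty$ this gives $\Omega^{-4}(\Omega'/\Omega)^{4}\to+\infty$. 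Combined with the strict positivity of the leading coefficient, this yields $R_{ab}R^{ab}\to+\infty$.

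The main obstacle is the Ricci part, and it rests on two points that both use $\lambda\neq 1$. First, the indefinite Lorentzian signature means the contraction $\tilde{g}^{ac}\tilde{g}^{bd}S_{ab}S_{cd}$ could in principle vanish through cancellation; the explicit evaluation as a positive multiple of $\mu^{2}(\lambda^{2}-\lambda+1)$ shows it never does and fixes its sign. Second, it is the relation $\phi'=1-L\to 1-\lambda\neq 0$, together with $\Omega(0)=0$, that forces $|\Omega'/\Omega|$ to diverge rather than settle at a finite value; without $\lambda\neq 1$ this argument collapses. Once these are in hand, the unphysical curvature $\tilde{R}_{ab}$, the terms linear in $\Omega'/\Omega$, and the corresponding cross terms in the contraction are all subdominant and cannot spoil either the divergence or the sign.
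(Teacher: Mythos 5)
Your proof is correct, and its skeleton coincides with the paper's own (very terse) proof: the Weyl clause is read off from $C_{abcd}C^{abcd}=\Omega^{-4}\tilde{C}_{abcd}\tilde{C}^{abcd}$, i.e.\ (\ref{weric2.eqn}), and the Ricci clause from the leading $(\Omega'/\Omega)^{4}$ term of the conformal expansion (\ref{weric1.eqn}), whose coefficient $12\left(T_{,a}T^{,a}\right)^{2}\left(L^{2}-L+1\right)$ you recompute correctly and which has a strictly positive limit since $\lambda^{2}-\lambda+1>0$ for all real $\lambda$ and $T_{,a}T^{,a}\neq0$ for a cosmic time function. The genuine difference is how you establish the key technical fact $|\Omega'/\Omega|\to\infty$. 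The paper outsources this to \ref{conffac}: Lemma \ref{conffac2.lem} first forces $\lambda\leq1$, then Lemma \ref{conffac4.lem} (with Remark \ref{conffac2.rem} for condition (iv)) uses the resulting sign of $(\ln\Omega)''$ to make $(\ln\Omega)'$ monotone near $T=0$, so that its limit exists, and finally derives the contradiction between a finite limit and $\ln\Omega\to-\infty$. You instead set $\phi=\Omega/\Omega'$, note $\phi'=1-L\to1-\lambda$, extend $\phi$ continuously to $T=0$ via the Lipschitz bound, and kill a nonzero limit by the same integration argument. Your route is shorter and needs no prior control of the sign of $\lambda-1$; in fact it is stronger than you claim, since the dichotomy on $\lim\phi$ only uses finiteness of $\lambda$, not $\lambda\neq1$ (so it would even settle the $\lambda=1$ case, which the paper explicitly leaves open in \ref{conffac}). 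What the paper's monotonicity route buys in exchange is the case $\lambda=-\infty$, which the paper's conventions do admit (cf.\ the case split $\lambda=\pm\infty$ in the proof of Theorem \ref{K.thm} and Corollary \ref{regmet4.thm}): there your Lipschitz extension fails because $\phi'$ is unbounded. This is only a pinhole, not a collapse: when $\lambda=-\infty$, $\phi'=1-L$ is eventually positive, so $\phi$ is eventually monotone and of fixed sign (as $\Omega>0$ and $\Omega'$ has fixed sign), hence still convergent, after which your dichotomy applies verbatim. With that one-line patch your argument covers everything the paper's does.
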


\begin{cor}\label{regmet3.thm}{Let condition (i) or (iii) of Theorem \ref{K.thm} be valid. Furthermore, let ${\lambda}\neq1,2,+\infty$, or let ${\lambda}=2$ and ${L}$ be strictly monotonic on some interval $[-c,0]$, $c>0$, if $T\rightarrow 0^{-}$, or on an interval $[0,c]$, $c>0$, if $T\rightarrow 0^{+}$. Then, if  ${M}_{0}\equiv\mathop {\lim }\limits_{{T} \to 0^{\pm} }{M}$, where $M\equiv\frac{{\Omega}'}{{\Omega}^{2}}$,
\begin{eqnarray}
\mathop {\lim }\limits_{{T} \to 0^{\pm} }R_{ab}R^{ab}=\begin{cases} 0 &
\mbox{if ${M}_{0}=0$,}
\\
finite > 0  & \mbox{if $0<|{M}_{0}|<\infty$,}\\
\infty & \mbox{if  $|{M}_{0}|=\infty$.}
\end{cases}
\end{eqnarray}}
\end{cor}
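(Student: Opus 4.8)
The plan is to feed the statement to the standard four-dimensional conformal transformation of the Ricci tensor under $\mathbf{g}=\Omega^{2}\tilde{\mathbf{g}}$ and to exploit that $\Omega$ depends on $T$ alone. Writing $\phi:=\ln\Omega$, so that $\tilde\nabla_a\phi=\phi'\,T_{,a}$ and $\tilde\nabla_a\tilde\nabla_b\phi=\phi''\,T_{,a}T_{,b}+\phi'\,\tilde\nabla_a\tilde\nabla_b T$, the transformation law reads
\begin{equation}
R_{ab}=\tilde R_{ab}-2\left(\tilde\nabla_a\tilde\nabla_b\phi-\tilde\nabla_a\phi\,\tilde\nabla_b\phi\right)-\left(\tilde\nabla^c\tilde\nabla_c\phi+2\,\tilde\nabla_c\phi\,\tilde\nabla^c\phi\right)\tilde g_{ab}.
\end{equation}
From the definition $L=(\Omega''/\Omega)(\Omega/\Omega')^{2}$ one gets the clean identity $\phi''=(L-1)(\phi')^{2}$, and from $M=\Omega'/\Omega^{2}$ one gets $\phi'=M\Omega$. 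Substituting both, all dependence on the conformal factor organises into two tensors with purely $T$- and $\tilde{\mathbf{g}}$-dependent (hence regular) coefficients,
\begin{equation}
R_{ab}=(\phi')^{2}\,\hat A_{ab}+\phi'\,\hat B_{ab}+\tilde R_{ab},
\end{equation}
where $\hat A_{ab}=2(2-L)\,T_{,a}T_{,b}-(L+1)(\tilde\nabla T)^{2}\,\tilde g_{ab}$ and $\hat B_{ab}=-2\,\tilde\nabla_a\tilde\nabla_b T-(\tilde\nabla^c\tilde\nabla_c T)\,\tilde g_{ab}$, with $(\tilde\nabla T)^{2}:=\tilde g^{cd}T_{,c}T_{,d}$.

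Next I would compute $R_{ab}R^{ab}=\Omega^{-4}R_{ab}R_{cd}\,\tilde g^{ac}\tilde g^{bd}$. Since $\phi'=M\Omega$, the dominant contribution is $\Omega^{-4}(\phi')^{4}\hat A_{ab}\hat A^{ab}=M^{4}\hat A_{ab}\hat A^{ab}$, the cross and remainder terms being smaller by a factor $O(1/\phi')$. A direct contraction (in four dimensions, where $\tilde g_{ab}\tilde g^{ab}=4$) gives
\begin{equation}
\hat A_{ab}\hat A^{ab}=12\left[(\tilde\nabla T)^{2}\right]^{2}\left(L^{2}-L+1\right),
\end{equation}
so that, letting $L\to\lambda$ and $(\tilde\nabla T)^{2}\to(\tilde\nabla T)^{2}|_{0}\neq 0$ (the gradient of a cosmic time function is timelike and $\tilde{\mathbf{g}}$ is non-degenerate at $T=0$), one obtains $R_{ab}R^{ab}\to 12\,M_0^{4}\,[(\tilde\nabla T)^{2}|_0]^{2}(\lambda^{2}-\lambda+1)$. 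Because $\lambda^{2}-\lambda+1=(\lambda-\tfrac12)^{2}+\tfrac34>0$ for every real $\lambda$, and the bracket is strictly positive, the vanishing, finiteness, or divergence of the limit is controlled \emph{entirely} by $M_0^{4}$, which is precisely the claimed three-way split: $R_{ab}R^{ab}\to 0$ if $M_0=0$, a finite positive value if $0<|M_0|<\infty$, and $+\infty$ if $|M_0|=\infty$.

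The two points that need care—and where the hypotheses enter—are the existence of $M_0$ and the domination $|\phi'|\to\infty$. Differentiating $M$ yields $M'=(L-2)M^{2}\Omega$, so $M'$ carries the sign of $L-2$. For $\lambda\neq 2$ the factor $L-2$ has a fixed sign near $T=0$, whence $M$ is eventually monotonic and $M_0$ exists in $[-\infty,+\infty]$; for $\lambda=2$ the same conclusion follows from the assumed strict monotonicity of $L$, which confines $L-2$ to one side of $0$. The domination is secured because $\phi''=(L-1)(\phi')^{2}$ has a fixed sign whenever $\lambda\neq 1$, so $\phi'$ is monotonic near $T=0$, and since $\Omega\to\infty$ at the \emph{finite} value $T=0$ forces $\phi'$ to be unbounded, one concludes $|\phi'|\to\infty$; the exclusion $\lambda\neq+\infty$ is what keeps the leading tensor $\hat A_{ab}$ of order unity so that the order-counting in (2.13) is legitimate. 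I expect the genuine obstacle to be exactly this bookkeeping of limits for a monotone, diverging conformal factor rather than the algebra; the supporting statements are collected in \ref{conffac}, and the combined exclusion $\lambda\neq 1,2,+\infty$ (or $\lambda=2$ with $L$ monotone) is precisely what makes each of them available.
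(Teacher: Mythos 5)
Your proposal is correct and follows essentially the same route as the paper: the same conformal decomposition of $R_{ab}$ (your $\hat A_{ab}$, $\hat B_{ab}$ are exactly the tensors in the paper's expression for $R_{ab}R^{ab}$ in the proof of the K-Theorem), the same contraction identity $\hat A_{ab}\hat A^{ab}=12\left[(\tilde\nabla T)^{2}\right]^{2}(L^{2}-L+1)$ giving the dominant term $M^{4}\cdot 12\,(T_{,a}T^{,a})^{2}(\lambda^{2}-\lambda+1)$, and the same positivity observation $\lambda^{2}-\lambda+1>0$ reducing everything to $M_{0}$. Your inline derivations of $M'=(L-2)M^{2}\Omega$ and of $\phi'\to\infty$ are precisely the content of Lemmas \ref{conffac5.lem}, \ref{conffac6.lem} and \ref{conffac3.lem} (with Remarks \ref{conffac2.rem}, \ref{conffac4.rem}), which the paper simply cites, so the two arguments coincide in substance.
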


\begin{cor}\label{regmet4.thm}{Let condition (i) or (iii) of Theorem \ref{K.thm} be valid. Furthermore, let ${\lambda}=+\infty$. Then, $\mathop {\lim }\limits_{{T} \to 0^{\pm} }R_{ab}R^{ab}=\infty$. }
\end{cor}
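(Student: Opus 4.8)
The plan is to re-use the leading-order expression for $R_{ab}R^{ab}$ that is already assembled in the proof of the K-Theorem, and to read off its behaviour when $\lambda=+\infty$. Writing $\mathbf g=\Omega^2\tilde{\mathbf g}$ with $\Omega=\Omega(T)$, the conformal transformation of the Ricci tensor produces terms built from $\Omega'/\Omega$ and $\Omega''/\Omega$ multiplying the unphysical tensors $T_{,a}T_{,b}$, $\tilde\nabla_a\tilde\nabla_b T$, $\tilde g_{ab}$ and $\tilde R_{ab}$. Using $\Omega''/\Omega=L\,(\Omega'/\Omega)^2$, the contributions carrying the factor $\Omega''/\Omega$ exceed the $(\Omega'/\Omega)^2$ contributions by the factor $L\to\infty$, and (since $\Omega'/\Omega=M\,\Omega\to\infty$, see below) they also dominate the $\Omega'/\Omega$ and $\tilde R_{ab}$ terms, which remain lower order. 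Collecting the dominant part and raising indices with $g^{ab}=\Omega^{-2}\tilde g^{ab}$ gives the same leading-order expression that underlies Corollary \ref{regmet3.thm},
\[
R_{ab}R^{ab}\;\sim\;12\,(L^{2}-L+1)\,\mathcal Q^{2}\,M^{4}\;\sim\;12\,\mathcal Q^{2}\,L^{2}M^{4}=12\,\mathcal Q^{2}\Big(\frac{\Omega''}{\Omega^{3}}\Big)^{2},\qquad \mathcal Q:=\tilde g^{cd}T_{,c}T_{,d},
\]
the positive constant being $12$ in four dimensions. By regularity of $\tilde{\mathbf g}$ and $T$ on a neighbourhood of $T=0$, the scalar $\mathcal Q$ (the squared norm of the timelike gradient $\tilde\nabla T$) tends to a finite, nonzero limit.

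The heart of the argument is to show that $M=\Omega'/\Omega^2$ is bounded away from zero as $T\to 0^\pm$, which is exactly what prevents the $M_0=0$ behaviour permitted in Corollary \ref{regmet3.thm} from spoiling the divergence. For this I would use the identity
\[
\frac{d}{dT}\Big(\frac1M\Big)=\frac{d}{dT}\Big(\frac{\Omega^2}{\Omega'}\Big)=2\Omega-\Omega\,\frac{\Omega\,\Omega''}{(\Omega')^2}=\Omega\,(2-L),
\]
obtained by direct differentiation. Since $L\to+\infty$, there is a one-sided neighbourhood of $T=0$ on which $L>2$, so that $\tfrac{d}{dT}(1/M)=\Omega(2-L)<0$ and $1/M$ is strictly monotone there. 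Under condition (i) we have $\Omega'>0$, hence $M>0$, and the positive function $1/M$ decreases towards a limit $\ell\ge 0$ as $T\to 0^-$; under condition (iii) we have $\Omega'<0$, hence $M<0$, and the negative function $1/M$ increases towards a limit $\ell\le 0$ as $T\to 0^+$. In either case monotonicity together with the fixed sign confines $1/M$ to a bounded interval, so that $|M|$ is bounded below by a positive constant near $T=0$ and $M_0=\lim M$ exists with $|M_0|\in(0,+\infty]$.

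Combining the two steps finishes the proof: $L^{2}\to+\infty$, while $M^{4}$ is bounded below by a positive constant and $\mathcal Q^{2}$ tends to a positive limit, whence $R_{ab}R^{ab}\sim 12\,\mathcal Q^{2}L^{2}M^{4}\to+\infty$. The delicate point, and the main obstacle, is the second step: a priori $\lambda=+\infty$ leaves open the case $M_0=0$, in which $L^{2}M^{4}$ is an $\infty\cdot 0$ indeterminacy and the conclusion could fail. The monotonicity of $1/M$ forced by $L>2$ is the structural fact that rules this out. One should also verify carefully that the $\Omega''/\Omega$ terms genuinely dominate the remaining ones, i.e.\ that $\Omega'/\Omega=M\,\Omega\to\infty$; this uses precisely the lower bound on $|M|$ from the second step together with $\Omega\to\infty$, and it is what licenses discarding the $\tilde\nabla_a\tilde\nabla_b T$, $\tilde{\Box}T$ and $\tilde R_{ab}$ contributions retained in the general formula.
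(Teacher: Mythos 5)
Your proposal is correct and follows essentially the same route as the paper: it reduces the claim to the asymptotic relation $R_{ab}R^{ab}\approx 12\,(T_{,a}T^{,a})^{2}L^{2}M^{4}$ extracted from the K-Theorem computation, and then shows $L^{2}M^{4}\to\infty$ by using $\lambda=+\infty>2$ to force monotonicity near $T=0$ and thereby exclude the $M_{0}=0$ indeterminacy. The only (cosmetic) difference is that the paper packages this last step as Lemma \ref{conffac5.lem} and Lemma \ref{conffac8.lem}, working with $M'=(L-2)(\Omega')^{2}/\Omega^{3}$ directly, whereas you run the equivalent computation for $1/M$ via $\frac{d}{dT}(1/M)=\Omega(2-L)$.
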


\begin{cor}\label{regmet5.thm}{Let condition (i) or (iii) of Theorem \ref{K.thm} be valid. Then, $\mathop {\lim }\limits_{{T} \to 0^{\pm}}C_{abcd}C^{abcd}=0$.}
\end{cor}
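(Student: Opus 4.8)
The plan is to exploit the conformal covariance of the Weyl tensor, which collapses the entire statement into a single power of $\Omega^{-1}$ multiplying a regular quantity. Recall that the mixed-index Weyl tensor $C^{a}{}_{bcd}$ is pointwise invariant under the rescaling $\mathbf{g}=\Omega^{2}\tilde{\mathbf{g}}$, so that $C^{a}{}_{bcd}=\tilde{C}^{a}{}_{bcd}$ throughout $\mathcal{M}$. Writing the scalar $C_{abcd}C^{abcd}$ in terms of this invariant, one lowering with $\mathbf{g}$ contributes a factor $\Omega^{2}$, while three raisings with $\mathbf{g}^{-1}=\Omega^{-2}\tilde{\mathbf{g}}^{-1}$ contribute a factor $\Omega^{-6}$, so that a straightforward count of the powers of $\Omega$ yields
\[
C_{abcd}C^{abcd}=\Omega^{-4}\,\tilde{C}_{abcd}\tilde{C}^{abcd}.
\]
This single identity is really all one needs; everything else is a limit argument.

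Next I would argue that the unphysical Weyl scalar $\tilde{C}_{abcd}\tilde{C}^{abcd}$ remains finite as $T\to 0^{\pm}$. Since $\tilde{\mathbf{g}}$ is assumed non-degenerate and at least $C^{2}$ on an open neighbourhood of $T=0$, its curvature tensor---being algebraic in the metric, its inverse, and the second derivatives of the metric---is continuous there, and hence so is the Weyl tensor constructed from it. Contracting with the continuous, non-degenerate inverse metric $\tilde{\mathbf{g}}^{-1}$ then shows that $\tilde{C}_{abcd}\tilde{C}^{abcd}$ is continuous on a neighbourhood of $T=0$, so it is bounded there and possesses a finite limit as $T\to 0^{\pm}$.

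Finally I would combine the two observations. Under either hypothesis (i) or (iii) of Theorem \ref{K.thm} one has $\Omega(0)=\infty$, so that $\Omega^{-4}\to 0$ as $T\to 0^{\pm}$. Multiplying the vanishing factor $\Omega^{-4}$ by the bounded factor $\tilde{C}_{abcd}\tilde{C}^{abcd}$ gives $\mathop{\lim}\limits_{T\to 0^{\pm}}C_{abcd}C^{abcd}=0$, as claimed. It is worth remarking that this argument uses nothing about $\lambda$ or the monotonicity class of $\Omega$ beyond the divergence $\Omega(0)=\infty$; this is precisely why the present corollary, unlike Corollaries \ref{regmet3.thm} and \ref{regmet4.thm} for $R_{ab}R^{ab}$, is indifferent to the finer data of the conformal factor.

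I do not anticipate a genuine obstacle here. The only point demanding care is the index bookkeeping that fixes the conformal weight as $\Omega^{-4}$---miscounting the four inverse-metric contractions against the single lowering would produce the wrong power and could even reverse the conclusion---together with the observation that $C^{2}$-regularity of $\tilde{\mathbf{g}}$ is genuinely enough to keep $\tilde{C}_{abcd}\tilde{C}^{abcd}$ bounded, and not merely well defined.
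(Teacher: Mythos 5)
Your proof is correct and takes essentially the same route as the paper: the paper's own proof likewise rests on the conformal relation $C_{abcd}C^{abcd}=\Omega^{-4}\tilde{C}_{abcd}\tilde{C}^{abcd}$ (established in the proof of Theorem \ref{K.thm} from the invariance of ${C^{a}}_{bcd}$), combined with the observation that $\tilde{C}_{abcd}\tilde{C}^{abcd}$ is well-behaved on an open neighbourhood of $T=0$ by the regularity of $\tilde{\mathbf{g}}$, while $\Omega(0)=\infty$ forces the prefactor to vanish. Your closing remark---that neither $\lambda$ nor the monotonicity of $\Omega$ enters beyond its divergence---accurately reflects why the paper's proof of this corollary is a one-line consequence of that relation.
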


We will now turn to the kinematic isotropy mentioned in the introduction. As one would expect, under the same conditions as above, we also find an asymptotic kinematic isotropy for a unit timelike congruence which is regular at, and orthogonal to the slice $T=0$.

\begin{thm}[\bf{asymptotic kinematic isotropy}]\label{asykiniso.thm}{Let one of the four conditions of Theorem \ref{K.thm} be valid. Then, for a unit timelike congruence $\mathbf{u}$, which is regular at, and orthogonal to the slice $T=0$, we find that
\begin{eqnarray}
\mathop {\lim }\limits_{{T} \to
0^{\pm}}\frac{\sigma^{2}}{\theta^{2}}=0, \qquad \mathop {\lim
}\limits_{{T} \to 0^{\pm}}\frac{\omega^{2}}{\theta^{2}}=0, \qquad
\mathop {\lim }\limits_{{T} \to
0^{\pm}}\frac{\dot{u}^{a}\dot{u}_{a}}{\theta^{2}}=0.
\end{eqnarray} }
\end{thm}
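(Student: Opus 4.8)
The plan is to express each physical kinematic scalar through the corresponding regular unphysical one plus a single asymptotic fact about the conformal factor, $\lim_{T\to0^\pm}\Omega/\Omega'=0$, which I would take from the conformal-factor lemmas in \ref{conffac} (it follows from the monotonicity and the prescribed limiting value of $\Omega$ in the four conditions). First I would record the transformation laws for the kinematics of $\mathbf{u}=\Omega^{-1}\tilde{\mathbf{u}}$ relative to $\tilde{\mathbf{u}}$, obtained by comparing the two Levi-Civita connections and using that $\Omega$ depends on $T$ alone. The shear and vorticity scalars are conformally homogeneous, $\sigma^2=\Omega^{-2}\tilde\sigma^2$ and $\omega^2=\Omega^{-2}\tilde\omega^2$, whereas the expansion acquires an inhomogeneous piece, $\theta=\Omega^{-1}\tilde\theta+3M\tilde{\dot T}$ with $M=\Omega'/\Omega^2$ and $\tilde{\dot T}=\tilde u^c\partial_c T$, and the acceleration transforms as $\dot u_a=\tilde{\dot u}_a+(\Omega'/\Omega)\,w_a$, where $w_a=h_a{}^b\partial_b T$ is the part of $\nabla T$ orthogonal to $\mathbf{u}$.

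Next I would read off the two inputs supplied by the hypotheses. Since $\tilde{\mathbf{g}}$ is $C^2$ and non-degenerate and $\tilde{\mathbf{u}}$ is regular near $T=0$, all unphysical scalars $\tilde\theta,\tilde\sigma^2,\tilde\omega^2,\tilde{\dot u}^a\tilde{\dot u}_a$ and the function $\tilde{\dot T}$ extend continuously to $T=0$ and are hence bounded there; moreover $\tilde{\dot T}\to\tilde{\dot T}_0\neq0$, because $T$ is a cosmic time function (timelike, nowhere-zero gradient) and $\tilde{\mathbf{u}}$ is aligned with $\nabla T$ on the slice. Orthogonality of $\tilde{\mathbf{u}}$ to $T=0$ enters in exactly one further place: on that slice $\tilde u_a\parallel\partial_aT$, so $w_a|_{T=0}=0$ and therefore $\tilde w^2\equiv\tilde g^{ab}w_aw_b\to0$ by continuity.

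I would then form the three ratios, clear the powers of $\Omega$, and divide numerator and denominator by $(\Omega')^2$. For the shear, $\sigma^2/\theta^2=\Omega^2\tilde\sigma^2/(\Omega\tilde\theta+3\Omega'\tilde{\dot T})^2=(\Omega/\Omega')^2\,\tilde\sigma^2/\big((\Omega/\Omega')\tilde\theta+3\tilde{\dot T}\big)^2$, whose denominator tends to $9\tilde{\dot T}_0^2>0$ and whose numerator tends to $0$ as $\Omega/\Omega'\to0$; the vorticity is word-for-word the same. For the acceleration, $\dot u^a\dot u_a=\Omega^{-2}\big[\tilde{\dot u}^a\tilde{\dot u}_a+2(\Omega'/\Omega)\tilde{\dot u}^aw_a+(\Omega'/\Omega)^2\tilde w^2\big]$, so the same manoeuvre gives $\dot u^a\dot u_a/\theta^2=\big[(\Omega/\Omega')^2\tilde{\dot u}^a\tilde{\dot u}_a+2(\Omega/\Omega')\tilde{\dot u}^aw_a+\tilde w^2\big]/\big((\Omega/\Omega')\tilde\theta+3\tilde{\dot T}\big)^2\to0/9\tilde{\dot T}_0^2=0$.

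The one genuinely delicate point is the acceleration ratio, which has a $0\cdot\infty$ character: the inhomogeneous term of $\dot u_a$ carries the diverging factor $\Omega'/\Omega$, so a priori it could compete with $\theta^2$, which also diverges like $(\Omega'/\Omega)^2$. The bookkeeping above resolves this, since after dividing by $(\Omega')^2$ the leading $(\Omega'/\Omega)^2\tilde w^2$ term reduces exactly to $\tilde w^2$, which is then killed precisely by orthogonality ($\tilde w^2\to0$), while the cross term is suppressed by the surviving factor $\Omega/\Omega'\to0$. Thus orthogonality is indispensable for the acceleration estimate, but it is not needed for shear and vorticity, which require only regularity together with $\Omega/\Omega'\to0$ and $\tilde{\dot T}_0\neq0$; the latter is what keeps every denominator bounded away from zero. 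The only ingredient external to this elementary limit-taking is the conformal-factor fact $\Omega/\Omega'\to0$, established separately in \ref{conffac}. This mirrors, and generalises to a diverging $\Omega$ and to $T\to0^-$, the corresponding computation behind Theorem 3.3 of \cite{GW1985}.
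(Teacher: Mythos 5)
Your proposal is correct and follows essentially the same route as the paper's proof: the conformal kinematic relations $\sigma^2=\Omega^{-2}\tilde\sigma^2$, $\omega^2=\Omega^{-2}\tilde\omega^2$, $\theta=\Omega^{-1}\bigl[3\frac{\Omega'}{\Omega}T_{,a}\tilde u^a+\tilde\theta\bigr]$ and the acceleration formula, combined with regularity of the unphysical quantities, the divergence $\Omega'/\Omega\to\pm\infty$ from the appendix lemmas, the non-vanishing of $T_{,a}\tilde u^a$, and orthogonality forcing $\tilde h^{ab}T_{,a}T_{,b}\to 0$, which is exactly what kills the acceleration ratio. Your explicit retention of all three terms in the acceleration numerator (rather than the paper's asymptotic ``$\approx$'' shortcut) and your remark that orthogonality is needed only for that ratio are minor refinements of, not departures from, the paper's argument.
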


\begin{proof}
See \ref{proofs}.
\end{proof}

Again, the proof of this theorem provides us with some results
concerning the behaviour of the kinematic quantities, the proofs
of which may be found in \ref{proofs}.

\begin{cor}\label{regkin1.thm}{Let condition (ii) or (iv) of Theorem \ref{K.thm} be valid. Then, for a unit timelike congruence $\mathbf{u}$, which is regular at the slice $T=0$, $\mathop {\lim }\limits_{{T} \to 0^{\pm}}\theta^{2} =\infty$ and $\mathop {\lim }\limits_{{T} \to 0^{\pm}}\dot{u}^{a}\dot{u}_{a}=\infty$ (unless possibly if $\mathbf{u}$ is orthogonal to ${T}=0$ and $\mathop {\lim }\limits_{{T} \to 0^{\pm}}\dot{\tilde{u}}^{a}\dot{\tilde{u}}_{a}=0$, where $\ \tilde{} \ $ denotes quantities of $(\tilde{\mathcal{M}},\mathbf{\tilde{g}})$). Also $\mathop {\lim }\limits_{{T} \to 0^{\pm}}\sigma^{2}=\infty$ and $\mathop {\lim }\limits_{{T} \to 0^{\pm}}\omega^{2}=\infty$ (unless possibly if $\mathop {\lim }\limits_{{T} \to 0^{\pm}}\tilde{\sigma}^{2}=0$ or $\mathop {\lim }\limits_{{T} \to 0^{\pm}}\tilde{\omega}^{2}=0$).}
\end{cor}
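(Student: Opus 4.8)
The plan is to reduce the whole statement to two inputs: the conformal transformation laws for the kinematic quantities of $\mathbf{u}$ under $\mathbf{g}=\Omega^{2}\tilde{\mathbf{g}}$ with $\tilde{\mathbf{u}}=\Omega\mathbf{u}$, and the behaviour of $\Omega$ near $T=0$ recorded in \ref{conffac}. Since $\mathbf{u}$ is regular at $T=0$, the unphysical scalars $\tilde{\theta},\tilde{\sigma}^{2},\tilde{\omega}^{2},\dot{\tilde{u}}^{a}\dot{\tilde{u}}_{a}$ are all finite at $T=0$; under conditions (ii)/(iv) one has $\Omega\to0$, and I would record the two consequences $|M|=|\Omega'|/\Omega^{2}\to\infty$ and $\Omega/\Omega'\to0$ (with $M$ as in Corollary \ref{regmet3.thm}). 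The latter follows since $\tfrac{d}{dT}(\Omega/\Omega')=1-L\to1-\lambda\neq1$ while $\Omega/\Omega'$ cannot tend to a nonzero limit without keeping $\ln\Omega$ bounded, contradicting $\Omega\to0$; the former then follows from $|M|=\Omega^{-1}(\Omega/|\Omega'|)^{-1}$. A short computation (cf.\ \cite{GW1985}) gives, writing $\dot{\tilde{T}}\equiv\tilde{u}^{a}\partial_{a}T$, the projector $\tilde{h}_{ab}=\tilde{g}_{ab}+\tilde{u}_{a}\tilde{u}_{b}$, and $P_{a}\equiv\tilde{h}_{a}{}^{b}\partial_{b}T$,
\begin{equation}
\sigma^{2}=\Omega^{-2}\tilde{\sigma}^{2},\qquad
\omega^{2}=\Omega^{-2}\tilde{\omega}^{2},\qquad
\theta=\Omega^{-1}\tilde{\theta}+3M\,\dot{\tilde{T}},\qquad
\dot{u}_{a}=\dot{\tilde{u}}_{a}+\frac{\Omega'}{\Omega}\,P_{a}.
\end{equation}

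The shear and vorticity are then immediate: $\tilde{\sigma}^{2},\tilde{\omega}^{2}$ are finite and $\Omega^{-2}\to\infty$, so $\sigma^{2}\to\infty$ and $\omega^{2}\to\infty$ precisely when the corresponding unphysical scalar does not tend to $0$, which is the stated exception. For the expansion I would factor out $M$,
\begin{equation}
\theta=M\Bigl(\frac{\Omega}{\Omega'}\,\tilde{\theta}+3\,\dot{\tilde{T}}\Bigr),
\end{equation}
and note that the bracket tends to $3\,\dot{\tilde{T}}(0)$ since $\Omega/\Omega'\to0$ and $\tilde{\theta}$ is finite. As $T$ is a cosmic time function and $\tilde{\mathbf{u}}$ is future-directed timelike, $\dot{\tilde{T}}>0$ is bounded away from zero near $T=0$, so the bracket has a nonzero limit; hence $|\theta|=|M|\,|\mathrm{bracket}|\to\infty$ with no exceptional case, giving $\theta^{2}\to\infty$.

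The acceleration is the substantive part. Raising the index with $\Omega^{-2}\tilde{g}^{ab}$ yields
\begin{equation}
\dot{u}^{a}\dot{u}_{a}=\Omega^{-2}\dot{\tilde{u}}^{a}\dot{\tilde{u}}_{a}
+\frac{2M}{\Omega}\,\dot{\tilde{u}}^{a}P_{a}
+M^{2}\,P^{a}P_{a},
\end{equation}
with $P^{a}P_{a}=\tilde{h}^{ab}\partial_{a}T\,\partial_{b}T\ge0$, the squared $\tilde{\mathbf{g}}$-norm of the part of $\nabla T$ orthogonal to $\tilde{\mathbf{u}}$. If $\mathbf{u}$ is \emph{not} orthogonal to $T=0$ then $P^{a}P_{a}$ has a positive limit and $M^{2}P^{a}P_{a}$ dominates: the remaining terms are smaller by factors $(\Omega/\Omega')^{2}\to0$ and $\Omega/\Omega'\to0$, so $\dot{u}^{a}\dot{u}_{a}\to\infty$. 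If $\mathbf{u}$ \emph{is} orthogonal to $T=0$ (Definition \ref{fluidflow1.def}(b)), then $P_{a}\to0$, the leading surviving contribution is $\Omega^{-2}\dot{\tilde{u}}^{a}\dot{\tilde{u}}_{a}$, and this forces divergence exactly when $\dot{\tilde{u}}^{a}\dot{\tilde{u}}_{a}\not\to0$; the residual case is the one flagged by ``unless possibly''.

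The hard part will be the orthogonal sub-case of the acceleration. There $P_{a}\to0$ while $\Omega'/\Omega\to\pm\infty$, so both $M^{2}P^{a}P_{a}$ and the cross term $\tfrac{2M}{\Omega}\dot{\tilde{u}}^{a}P_{a}$ are of indeterminate $\infty\cdot0$ type and cannot simply be discarded; moreover the sign of the cross term is not controlled. I would handle it by keeping $\dot{u}_{a}=\dot{\tilde{u}}_{a}+\tfrac{\Omega'}{\Omega}P_{a}$ as a single covector lying in the positive-definite subspace $\tilde{\mathbf{g}}$-orthogonal to $\tilde{\mathbf{u}}$, estimating the decay rate of $P_{a}$ along the slice against the growth of $\Omega'/\Omega$, and applying the reverse triangle inequality to bound its $\tilde{\mathbf{g}}$-norm below by a positive constant whenever $\dot{\tilde{u}}^{a}\dot{\tilde{u}}_{a}\not\to0$; multiplication by $\Omega^{-2}\to\infty$ then gives $\dot{u}^{a}\dot{u}_{a}\to\infty$. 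The obstruction is that this lower bound can fail through a borderline cancellation when the correction $\tfrac{\Omega'}{\Omega}P_{a}$ conspires against $\dot{\tilde{u}}_{a}$, and it is exactly this inability to exclude such behaviour (subsumed in, and most acute together with, $\dot{\tilde{u}}^{a}\dot{\tilde{u}}_{a}\to0$) that justifies stating the conclusion with the hedge ``unless possibly''. A secondary point needing care is fixing the transformation law for $\dot{u}_{a}$ with the correct projector $\tilde{h}_{a}{}^{b}$, so that the $\tilde{u}$-components cancel against the normalisation term and the role of orthogonality, i.e.\ $\nabla T\parallel\tilde{\mathbf{u}}$, enters only through $P_{a}$.
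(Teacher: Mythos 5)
Your proposal is correct and takes essentially the same approach as the paper: the paper's proof of this corollary consists of exactly your ingredients, namely the conformal relations (\ref{regkin1.eqn}), (\ref{regkin2.eqn}) and (\ref{regkin4.eqn}), the regularity of the tilded quantities near $T=0$, and the limits $\lim_{T\to0^{\pm}}\frac{\Omega'}{\Omega}=\pm\infty$ and $\Omega\to0$, with your inline derivations of $\Omega/\Omega'\to0$ and $|M|\to\infty$ simply reproving Lemma \ref{conffac4.lem}, Remark \ref{conffac2.rem} and Lemma \ref{conffac7.lem}. The only place you go beyond the paper is the orthogonal acceleration sub-case, where you rightly observe that $M^{2}P^{a}P_{a}$ and the cross term are indeterminate $\infty\cdot0$ forms---the paper's one-line proof silently takes termwise limits there---so your hedged resolution is no weaker than the published argument; just note that your claim that a borderline cancellation is ``subsumed in'' the case $\dot{\tilde{u}}^{a}\dot{\tilde{u}}_{a}\to0$ is not literally accurate (such cancellation could occur with $\dot{\tilde{u}}^{a}\dot{\tilde{u}}_{a}\not\to0$, i.e.\ outside the stated exception), an imprecision inherited from the corollary and its proof rather than introduced by you.
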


\begin{cor}\label{regkin3.thm}{Let condition (i) or (iii) of Theorem \ref{K.thm} hold. Then, for a unit timelike congruence $\mathbf{u}$, which is regular at the slice $T=0$,$\mathop {\lim }\limits_{{T} \to 0^{\pm}}\sigma^{2}=\mathop {\lim }\limits_{{T} \to 0^{\pm}}\omega^{2}=0$.}
\end{cor}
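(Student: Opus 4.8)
The plan is to reduce the statement to the elementary conformal scaling of the shear and vorticity \emph{scalars}, using the congruence relation $\tilde{\mathbf{u}}=\Omega\mathbf{u}$ of Definition \ref{fluidflow1.def} together with $\mathbf{g}=\Omega^{2}\tilde{\mathbf{g}}$. First I would recall (these identities should already be available from the proof of Theorem \ref{asykiniso.thm}) the transformation laws for the kinematic tensors. Writing $u_{a}=\Omega\tilde{u}_{a}$ and expanding the physical velocity gradient $\nabla_{a}u_{b}$ in terms of the unphysical connection, the correction terms generated by the difference of the two Levi-Civita connections are all proportional either to $\tilde{u}_{a}$ (annihilated by the physical projector $h_{ab}=g_{ab}+u_{a}u_{b}$, since $h_{a}{}^{c}\tilde{u}_{c}=0$) or to $\tilde{g}_{ab}$ (which feeds only into the trace, i.e.\ the expansion $\theta$). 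Hence the projected trace-free symmetric and antisymmetric parts pick up only a single power of $\Omega$,
\begin{eqnarray}
\sigma_{ab}=\Omega\,\tilde{\sigma}_{ab}, \qquad \omega_{ab}=\Omega\,\tilde{\omega}_{ab}.
\end{eqnarray}

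Next I would raise indices with $g^{ab}=\Omega^{-2}\tilde{g}^{ab}$ to get $\sigma^{ab}=\Omega^{-3}\tilde{\sigma}^{ab}$ and $\omega^{ab}=\Omega^{-3}\tilde{\omega}^{ab}$, so that the scalars obey
\begin{eqnarray}
\sigma^{2}=\Omega^{-2}\,\tilde{\sigma}^{2}, \qquad \omega^{2}=\Omega^{-2}\,\tilde{\omega}^{2},
\end{eqnarray}
with the factor of $\Omega$ independent of the normalisation convention chosen for the scalars (the same convention being used on both sides). This is the one genuinely computational ingredient, and it is precisely the identity whose opposite-sign limit produced the divergence and the attendant caveats in Corollary \ref{regkin1.thm}.

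The conclusion is then immediate. Under condition (i) or (iii) of Theorem \ref{K.thm} we have $\Omega(0)=\infty$, hence $\Omega^{-2}\to 0$ as $T\to 0^{\pm}$. Since $\mathbf{u}$ is regular at the slice $T=0$, the associated $\tilde{\mathbf{u}}$ is $C^{3}$ on a neighbourhood of $T=0$ and $\tilde{\mathbf{g}}$ is $C^{2}$ there, so $\tilde{\sigma}_{ab}$ and $\tilde{\omega}_{ab}$ --- being algebraic combinations of $\tilde{g}$, $\tilde{u}$ and the continuous object $\tilde{\nabla}\tilde{u}$ --- are continuous at $T=0$, and the unphysical scalars $\tilde{\sigma}^{2},\tilde{\omega}^{2}$ therefore admit finite limits there. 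Multiplying a finite quantity by $\Omega^{-2}\to 0$ gives $\mathop {\lim }\limits_{T \to 0^{\pm}}\sigma^{2}=\mathop {\lim }\limits_{T \to 0^{\pm}}\omega^{2}=0$.

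I do not anticipate a real obstacle. In contrast to Corollary \ref{regkin1.thm}, no case distinction on the unphysical limit is needed, precisely because $\Omega^{-2}\to 0$ annihilates \emph{any} finite value of $\tilde{\sigma}^{2}$ or $\tilde{\omega}^{2}$; consequently the result holds with no caveats, and --- unlike the expansion and acceleration results --- without requiring $\mathbf{u}$ to be orthogonal to $T=0$. The only points demanding care are bookkeeping: verifying that the connection-difference terms genuinely contribute nothing to the shear and vorticity (so that the clean single-power scaling is exact), and confirming that the stated regularity of $\tilde{\mathbf{u}}$ and $\tilde{\mathbf{g}}$ delivers \emph{finiteness}, not merely boundedness, of $\tilde{\sigma}^{2}$ and $\tilde{\omega}^{2}$ at the slice.
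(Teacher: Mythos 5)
Your proposal is correct and follows essentially the same route as the paper: the paper's proof simply invokes the conformal relations $\sigma^{2}=\Omega^{-2}\tilde{\sigma}^{2}$, $\omega^{2}=\Omega^{-2}\tilde{\omega}^{2}$ (its equation (\ref{regkin2.eqn}), quoted from the literature), together with regularity in $(\tilde{\mathcal{M}},\mathbf{\tilde{g}})$ and $\Omega(0)=\infty$. The only difference is that you sketch a derivation of the scaling law that the paper cites, which is fine.
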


\begin{cor}\label{regkin4.thm}{Suppose condition (i) or (iii) of Theorem \ref{K.thm} holds. Then, for a unit timelike congruence $\mathbf{u}$, which is regular at the slice $T=0$, if ${\lambda}\neq 1,2$,
\begin{eqnarray}
\mathop {\lim }\limits_{{T} \to 0^{\pm}}\theta=\begin{cases} 0 &
\mbox{if ${M}_{0}=0$,}
\\
\mp \vartheta,\qquad \mbox{where} \qquad 0<\vartheta<\infty   & \mbox{if $0<|{M}_{0}|<\infty$,}\\
\mp\infty & \mbox{if  $|{M}_{0}|=\infty$,}
\end{cases}
\end{eqnarray}}
and if, furthermore, $\mathbf{u}$ is not orthogonal to
${T}=0$,
\begin{eqnarray}
\mathop {\lim }\limits_{{T} \to
0^{\pm}}\dot{u}^{a}\dot{u}_{a}=\begin{cases} 0 & \mbox{if ${M}_{0}=0$,}
\\
\alpha,\qquad \mbox{where} \qquad 0<\alpha<\infty   & \mbox{if $0<|{M}_{0}|<\infty$,}\\
\infty & \mbox{if  $|{M}_{0}|=\infty$.}
\end{cases}
\end{eqnarray}
\end{cor}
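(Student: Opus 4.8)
The plan is to reduce both physical quantities to unphysical ones through the conformal transformation laws already used in establishing Theorem \ref{asykiniso.thm}, and then to exploit the fact that under conditions (i) or (iii) one has $\Omega(0)=\infty$, so that every term carrying a negative power of $\Omega$ is annihilated in the limit while the regular unphysical data survive.

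First I would record the transformation of the expansion. Writing $\tilde{\mathbf u}=\Omega\mathbf u$ and using $\mathbf g=\Omega^2\tilde{\mathbf g}$, one obtains in four dimensions
\[\theta=\Omega^{-1}\tilde\theta+3M\dot T,\qquad \dot T\equiv\tilde u^a\tilde\nabla_aT,\]
with $M=\Omega'/\Omega^2$. Since $\tilde{\mathbf g}$ is $C^2$ and $\mathbf u$ is regular at $T=0$ (Definition \ref{fluidflow1.def}(a)), the quantities $\tilde\theta$ and $\dot T$ extend continuously to $T=0$; moreover $\dot T(0)$ is finite and strictly positive, because $T$ is a cosmic time function and $\tilde{\mathbf u}$ is future-directed timelike. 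As $\Omega\to\infty$ the term $\Omega^{-1}\tilde\theta$ vanishes, leaving $\lim\theta=3\dot T(0)\,M_0$. The trichotomy in the statement is then immediate from the value of $M_0$, and the sign is fixed by the hypotheses: under (i) ($T\to0^-$, $\Omega$ increasing) one has $\Omega'>0$ hence $M_0\geq0$, whereas under (iii) ($T\to0^+$, $\Omega$ decreasing) one has $\Omega'<0$ hence $M_0\leq0$, which reproduces the $\mp$ correlated with $0^\pm$.

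For the acceleration I would use the corresponding conformal law for $a^b=u^a\nabla_au^b$; computing the squared norm $\dot u^a\dot u_a=g_{ab}a^aa^b$ one finds
\[\dot u^a\dot u_a=\Omega^{-2}\,\tilde a^a\tilde a_a+2M\Omega^{-1}\tilde a_a s^a+M^2\,s^a s_a,\]
where $s^a=(\tilde g^{ab}+\tilde u^a\tilde u^b)T_{,b}$ is the projection of the unphysical gradient of $T$ orthogonal to $\tilde{\mathbf u}$, and $s_a=\tilde g_{ab}s^b$. Again all tilde quantities are regular at $T=0$, so as $\Omega\to\infty$ only the last term persists and $\lim\dot u^a\dot u_a=M_0^2\,(s^a s_a)|_{T=0}$. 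The hypothesis that $\mathbf u$ is \emph{not} orthogonal to $T=0$ is exactly the statement that $s^a$ does not vanish there, so $(s^as_a)|_{T=0}$ is strictly positive and finite; the trichotomy then follows, with $\alpha=M_0^2\,(s^as_a)|_{T=0}$ in the finite case.

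The limit-taking itself is routine; the two points requiring care are the derivation of the acceleration law including its cross term (so that one is certain only the $M^2$-term survives) and, more substantively, the guarantee that $M_0=\lim M$ actually exists. The latter is where $\lambda\neq1,2$ enters: the existence and the exhaustiveness of the three cases for $M_0$ rest on the monotonicity and asymptotic results for $\Omega$ collected in \ref{conffac}, $\lambda=1$ being the degenerate value already excluded throughout, and $\lambda=2$ being the borderline value (compare the $L$-monotonicity caveat in Corollary \ref{regmet3.thm}) at which $M$ need not settle to a limit. I expect establishing the existence of $M_0$ to be the main obstacle, the remainder being a direct consequence of $\Omega\to\infty$ together with the regularity of the unphysical congruence and metric.
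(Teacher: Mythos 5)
Your proposal is correct and follows essentially the same route as the paper: you use the same conformal relations for $\theta$ and $\dot{u}^{a}\dot{u}_{a}$ (the paper's (\ref{regkin1.eqn}) and (\ref{regkin4.eqn})), kill the regular unphysical terms using $\Omega\to\infty$ and $\Omega'/\Omega\to\pm\infty$, and invoke the appendix results (Lemma \ref{conffac5.lem} and Remark \ref{conffac4.rem}) for the existence, sign and range of $M_{0}$ under $\lambda\neq 1,2$, exactly as the paper does. The one loose phrase is that ``only the last term persists'' in the acceleration norm: when $|M_{0}|=\infty$ the cross term $2M\Omega^{-1}\tilde{a}_{a}s^{a}$ need not vanish (e.g.\ $\Omega=(-T)^{-1/4}$ gives $M\Omega^{-1}\to\infty$), but it is dominated by the $M^{2}$-term because $1/(M\Omega)=\Omega/\Omega'\to 0$, so the stated limit $M_{0}^{2}\,(s^{a}s_{a})|_{T=0}$ survives and your conclusion is unaffected.
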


The above results clearly demonstrate that, similarly to the situation at an IPS, we unavoidably obtain an asymptotic Weyl and kinematic istropy at ${T}=0$, if we employ conformal structures with regular conformal metrics and reasonable constraints on the conformal factor (note that the properties of the conformal factor play a key role here), irrespective of whether the cosmic time function approaches $0$ from below or above, or whether ${\Omega}$ diverges or vanishes at ${T}=0$. 

Consequently, we can infer that conformal structures for less symmetric asymptotic states, such as asymptotically anisotropic cosmological futures, are required to be significantly different and to involve irregularities, such as e.g., degeneracy. In particular, this leads us to the important conclusion that: if we are to follow the ideas of \emph{quiescent cosmology} and the WCH with conformal structures involving a conformal factor which is a function of cosmic time, then choosing a conformal structure with regular conformal metric for the initial state of the universe necessarily produces the desired asymptotic isotropy and a conformal structure with irregular conformal metric is required for the anisotropic future infinity or future singularity of the universe. For the latter cases, the analytical advantage of the IPS definition cannot be sustained in the previous manner. In the forthcoming article, however, we will provide evidence that such degenerate structures are, in fact, useful \cite{HoehnScott2}. 

The kinematic and curvature quantities do not necessarily show a
behaviour which is compatible with a cosmological singularity in
the case $\Omega(0)=\infty$, e.g., $\theta$ and $R_{ab}R^{ab}$ do
not, in general, diverge. On the other hand, the case
${\Omega}(0)=0$ necessarily leads to a singular behaviour of the
physical quantities, independently of whether $T$ approaches $0$
from below or above. This state of affairs justifies the
definition of an IPS, which involves the choice of $\Omega(0)=0$
(instead of $\Omega(0)=\infty$) and of a regular conformal metric,
by indicating that the form of a conformal structure with the
desired isotropic and singular physical implications is fairly
constrained.

The presented line of arguments is furthermore important for the general understanding of the problem; it puts the IPS in a broader context and emphasises its special nature from a geometric point of view by implying that different conditions in the conformal structure either possibly lead to less symmetry or to an isotropic asymptotic state which does not necessarily correspond to a singularity. The analysis of some example cosmological models will explicitly demonstrate the conclusions of this section and provide further guidance in elaborating a new framework.

\section{Conformal structures in example cosmologies}\label{examples}

Specific example space-times provide valuable guidance and motivation in the elaboration of reasonable definitions for cosmological futures. We will therefore probe a few example cosmological models for conformal rescalings with a conformal factor as a function of a cosmic time and for the (an)isotropic behaviour of the kinematic and curvature quantities at the respective cosmological futures.

Examining proper time along the fluid flow will become useful in determining whether a metric pathology occurring at the cosmological future actually corresponds to a physical space-time singularity, i.e., whether it occurs at finite or infinite values of proper time along the flow. The proper time $\tau$ for a timelike curve with tangent vector $u^{a}(s)$ and parametrisation $s$ is given by
\begin{eqnarray}
  \tau =\int (-u_{a}u^{a})^{1/2}ds.
\end{eqnarray}
If the fluid flow is parametrised by $\tau$, we find that $u_{a}u^{a}=-1$. All the cosmological models to follow are presented in comoving normal coordinates, hence proper time along the (normalised) fluid flow can be determined via
\begin{eqnarray}
  \tau =\int \frac{d\tau}{dt}dt,\label{proptime.eqn}
\end{eqnarray}
where $t$ is the comoving coordinate time and $\frac{d\tau}{dt}$ can be obtained from the components for $\mathbf{u}$.

Several scalar quantities have been analysed with the aid of the program GRTensorII which runs under Maple. Since some calculations are somewhat extensive, only the results will be presented here. Entities in the forthcoming conformal space-times, as well as the new conformal factor $\bar{\Omega}$ and the cosmic time function $\bar{T}$ itself, will be equipped with a $\bar{} \ $ from now on to avoid confusion with the respective entities at the IPS.

\subsection{The \emph{big crunch} in two closed FRW models}

Although the isotropic FRW cosmologies trivially possess a vanishing $K$ throughout and do not allow for any anisotropies to evolve, we will, nevertheless, analyse two simple cases here for technical interest, namely to reveal a few structural characteristics which distinguish these models from other more physically realistic cosmologies.

For the investigations of \emph{big crunch} singularities we are solely interested in \emph{closed} (k=+1) FRW universes. In this case it will only be possible to find conformal relations in which the conformal factor vanishes at the singularity, due to the vanishing of the scale factor.

Recall that in synchronous coordinates the line-element of the FRW metric is given by
\begin{eqnarray}
ds^{2}&=&-dt^{2}+a^{2}\left(t\right)d\sigma^{2},\label{FRWmetric.eqn}
\end{eqnarray}
where $a(t)$ denotes the scale factor and $d\sigma^{2}$, which is independent of time, is the 3-metric of a spacelike hypersurface of constant curvature.

\subsubsection{A radiation filled, \emph{closed} FRW universe}\label{radFRW}

A radiation filled, \emph{closed} FRW model, with equation of state $p=\frac{1}{3}\mu$, where $\mu$ denotes the energy density, and $p$ denotes the pressure, possesses the scale factor given by
\begin{eqnarray}
a\left(t\right)=C\sqrt{1-\left(1-\frac{t}{C}\right)^{2}},\qquad
\mbox{with}\qquad C^{2}=\frac{\mu a^{4}}{3}=\mbox{const},\qquad
C>0.
\end{eqnarray}
Thus, $a=0$ for $t_{s_{1}}=0$ and $t_{s_{2}}=2C$, where
$t_{s_{2}}$ corresponds to the \emph{big crunch}, which according
to (\ref{proptime.eqn}), occurs at finite proper time. Rewriting
the scale factor as
\begin{eqnarray}
a\left(t\right)=C\sqrt{\left(2-\frac{t}{C}\right)\frac{t}{C}},
\end{eqnarray}
and choosing a cosmic time function $\bar{T}$ such that $\bar{T}=0$ $\Leftrightarrow$ $t=t_{s_{2}}$ and which approaches zero from below:
\begin{eqnarray}
\bar{T}=-\sqrt{2-\frac{t}{C}}, \qquad
t\in[0,2C]\,,\label{radFRW0.eqn}
\end{eqnarray}
yields the following conformal relationship for
(\ref{FRWmetric.eqn})
\begin{eqnarray}
ds^{2}=\bar{\Omega}^{2}\left(\bar{T}\right)\left[-4d\bar{T}^{2}+(2-\bar{T}^{2})d\sigma^{2}\right],
\qquad \mbox{where} \qquad
\bar{\Omega}\left(\bar{T}\right)=-C\bar{T}.\label{radFRW1.eqn}
\end{eqnarray}
The conformal metric $d\bar{s}^{2}$ in the square brackets is $C^{\infty}$ and non-degenerate at $\bar{T}=0$.
The conformal factor $\bar{\Omega}$ vanishes at $\bar{T}=0$ and is positive for $\bar{T}<0$ and $C^{\infty}$ for $\bar{T}\leq0$. The derivative ratios of $\bar{\Omega}$, as given in Definition \ref{ISdef1.def}, behave as
\begin{eqnarray}
\mathop {\lim }\limits_{\bar{T} \to 0^{-}}\frac{\bar{\Omega}
'}{\bar{\Omega}}=\mathop {\lim }\limits_{\bar{T} \to
0^{-}}\frac{1}{\bar{T}}= -\infty, \qquad
\bar{L}=\frac{\bar{\Omega}
''\bar{\Omega}}{{\bar{\Omega'}}^{2}}\equiv 0.
\end{eqnarray}

  (\ref{radFRW1.eqn}) implies that the physical fluid flow diverges as $\bar{T}\rightarrow 0^{-}$, since it is given by
\begin{eqnarray}
\mathbf{u}=-\frac{1}{2C\bar{T}}\frac{\partial}{\partial\bar{T}}.
\end{eqnarray}
The unphysical fluid vector is $C^{\infty}$ everywhere and takes the following form
\begin{eqnarray}
\bar{{\mathbf{u}}}=\bar{\Omega}\mathbf{u}=\frac{1}{2}\frac{\partial}{\partial\bar{T}}.
\end{eqnarray}

As one would expect for a \emph{big crunch}, the expansion scalar of the physical congruence, as well as the Ricci curvature, diverge as $\bar{T}\rightarrow 0^{-}$,
\begin{eqnarray}
\mathop {\lim }\limits_{\bar{T} \to 0^{-}}\theta= -\infty, \qquad
\mathop {\lim }\limits_{\bar{T} \to 0^{-}}R_{ab}R^{ab}= \infty.
\label{radFRW2.eqn}
\end{eqnarray}
Thus the timelike fluid congruence encounters a scalar polynomial curvature singularity in finite proper time. The conformal structure and its properties are completely analogous to that of the IPS, which in this case is, obviously, no surprise since this closed FRW model is symmetric in proper time with respect to the point of maximal expansion. Due to this symmetry we could also have chosen a time-symmetric conformal factor, namely $\bar{\Omega}\left(\bar{T}\right)=C\sqrt{1-\sin^2(\bar{T}+\pi/2)}$, and an appropriate cosmic time function, $\bar{T}=\arcsin(t/C-1)-\pi/2$, which together satisfy exactly the same conditions as the previous choice. We therefore see in this simple example, that there exists a considerable freedom in choosing the explicit cosmic time function and conformal factor, which result in essentially the same conformal structure. This state of affairs holds for all conformal structures considered in this article which therefore do not uniquely define these two functions.

For technical interest, considering the possible definition of a future singularity, it is instructive to analyse the behaviour at $\bar{T}=0$ of some unphysical quantities of the conformally related space-time. Due to the regularity of $d\bar{s}^{2}$ one finds that $\bar{R}_{ab}\bar{R}^{ab}$ approaches a finite value as $\bar{T}\rightarrow 0^{-}$, while $\bar{\theta}$ vanishes asymptotically and $\bar{C}_{abcd}\bar{C}^{abcd}$ vanishes identically (the FRW models being conformally flat).

\subsubsection{A dust, \emph{closed} FRW universe}\label{dustFRW}

The scale factor of a \emph{closed}, dust ($p=0$) FRW universe is given by ($\phi(t)$ is a \emph{development angle})
\begin{eqnarray}
a\left(\phi\right)=\frac{\tilde{C}}{2}\left(1-\cos\phi\right), \qquad \mbox{where} \qquad t=\frac{\tilde{C}}{2}\left(\phi-\sin\phi\right) \label{dustFRW1.eqn}\\
 \mbox{and} \qquad \tilde{C}=\frac{\mu a^{3}}{3}=\mbox{const}, \qquad \tilde{C}>0.
\end{eqnarray}
Hence, $a=0$ $\Leftrightarrow$ $\phi=2z\pi$, with $z\in\mathbb{Z}$. Furthermore,
\begin{eqnarray}
dt&=&\frac{\tilde{C}}{2}\left(1-\cos\phi\right)d\phi=a\left(\phi\right)d\phi\label{dustFRW2.eqn}.
\end{eqnarray}
Equation (\ref{dustFRW2.eqn}) indicates that $\phi$ is a strictly monotonically increasing function of $t$ on an interval $[\frac{\tilde{C}}{2}2z\pi,\frac{\tilde{C}}{2}2(z+1)\pi]$, with $z\in\mathbb{Z}$, thus we can use $\phi$ as a cosmic time function. We choose $\phi\in [-2\pi ,0]$, in order to have the zero-point of the cosmic time function at the future singularity\footnote{One could also choose $t=0$ as the past singularity and reset $\bar{T}=\phi -2\pi$, but the conformal structure remains the same.}, and set
\begin{eqnarray}
\bar{T}=\phi.
\end{eqnarray}
The \emph{big crunch} occurs at $\bar{T}=0$. Using this cosmic
time function and (\ref{FRWmetric.eqn}) leads to the following
conformal structure with time-symmetric conformal factor
\begin{eqnarray}
ds^{2}=\bar{\Omega}^{2}\left(\bar{T}\right)\left[-d\bar{T}^{2}+d\sigma^{2}\right],\label{dustFRW3.eqn}
\\ \mbox{where} \qquad
\bar{\Omega}\left(\bar{T}\right)=\frac{\tilde{C}}{2}\left(1-\cos\bar{T}\right)=a(\bar{T})\,.
\end{eqnarray}
The conformal metric $d\bar{s}^{2}$ in the square brackets is clearly $C^{\infty}$ and non-degenerate on $[-2\pi,0]$. The conformal factor $\bar{\Omega}$ vanishes for both $\bar{T}=-2\pi$ and $\bar{T}=0$, since it is the scale factor, which is positive on $(-2\pi,0)$ and $C^{\infty}$ on $[-2\pi,0]$.

The conformal factor is found to behave as
\begin{eqnarray}
\mathop {\lim }\limits_{\bar{T} \to 0^{-}}\frac{\bar{\Omega}
'}{\bar{\Omega}}&=&-\infty, \qquad \mathop {\lim }\limits_{\bar{T}
\to 0^{-}}\frac{\bar{\Omega}
''\bar{\Omega}}{{\bar{\Omega'}}^{2}}=\frac{1}{2}.
\end{eqnarray}
  (\ref{dustFRW3.eqn}) yields the following expressions for the physical and unphysical fluid flows
\begin{eqnarray}
\mathbf{u}=\frac{2}{\tilde{C}(1-\cos\bar{T})}\frac{\partial}{\partial\bar{T}},
\qquad \bar{{\mathbf{u}}}=\frac{\partial}{\partial\bar{T}},
\end{eqnarray}
which shows, once again, that the physical flow diverges as $\bar{T}\rightarrow 0^{-}$, while $\bar{{\mathbf{u}}}$ is $C^{\infty}$ on $[-2\pi,0]$. The expansion scalar of the physical fluid flow and the Ricci curvature behave as before, namely
\begin{eqnarray}
\mathop {\lim }\limits_{\bar{T} \to 0^{-}}\theta= -\infty, \qquad
\mathop {\lim }\limits_{\bar{T} \to 0^{-}}R_{ab}R^{ab}= \infty,
\label{dustFRW4.eqn}
\end{eqnarray}
and, once again, the proper time along the fluid flow is finite.
The conformal structure and its properties are again completely
analogous to that of the IPS, since this recollapsing FRW universe is certainly proper-time-symmetric with respect to the point of maximum expansion.

The scalars $\bar{\theta}$ and $\bar{C}_{abcd}\bar{C}^{abcd}$ of the unphysical, conformal space-time are found to vanish identically, while $\bar{R}_{ab}\bar{R}^{ab}$ remains constant.

\subsection{The Kantowski and Kantowski-Sachs models}

So far we have only seen cases with a vanishing conformal factor. The following example space-times exhibit a diverging conformal factor at the respective cosmological futures and have previously been shown to admit an IPS \cite{GW1985,Geoffthesis,Scott1998}. These models describe spatially homogeneous but not spatially isotropic cosmologies with an irrotational, geodesic, perfect fluid source, which satisfies a radiation equation of state $p=\frac{1}{3}\mu$ in the Kantowski-Sachs case \cite{KantowskiSachs1966,Wainwright1984} and a dust equation of state $p=0$ in the Kantowski case \cite{Wainwright1984,Kantowski1998}. Both models possess an isometry group $G_4$ acting on spacelike hypersurfaces, which in the case of the Kantowski models contains a three-dimensional subgroup $G_3$ of Bianchi type III acting simply transitively on the hypersurfaces of homogeneity (these cosmologies are therefore also known as locally rotationally symmetric Bianchi III models), while the Kantowski-Sachs models are the only class of spatially homogeneous solutions which do not admit such a $G_3$ \cite{DS,exact}. The irregularity in the conformal structure for these anisotropic cosmologies will express itself in a degenerate conformal metric, thereby confirming the conclusion of the previous section.

The Kantowski and Kantowski-Sachs models satisfy the WCH in the sense that $K$ increases throughout all models.

In comoving coordinates one finds the following form of the metric
\begin{eqnarray}
ds^{2}&=&-Adt^{2}+t\left[A^{-1}dx^{2}+A^{2}b^{-2}\left(dy^{2}+f^{2}dz^{2}\right)\right]\,, \label{kantsachs1.eqn}
\end{eqnarray}
where
\begin{eqnarray}
A(t)&=&1-\frac{4\epsilon b^{2}t}{9}, \ \  t>0, \qquad b=const\neq0, \qquad \mbox{and} \nonumber \\
f(y)&=&\begin{cases} \sin y & \mbox{if $\epsilon =1$ \
(\emph{Kantowski-Sachs}),}
\\
\sinh y &\mbox{if $\epsilon =-1$ \   (\emph{Kantowski}).}
\end{cases}\nonumber
\end{eqnarray}

In order to determine the future behaviour of these models, it is necessary to analyse them separately.

\subsubsection{Indefinite expansion in the Kantowski models}\label{kant}

Choosing $\epsilon =-1$, we find $A=1+(4b^{2}t)/9>0$ for all $t>0$, and $A\rightarrow\infty$ as $t\rightarrow\infty$. It is helpful to pick the cosmic time function $\bar{T}$ in the form
\begin{eqnarray}
\bar{T}=-A^{-1}=-\frac{1}{1+\frac{4b^{2}t}{9}},\label{kant1.eqn}
\end{eqnarray}
since $t=\infty \ \Leftrightarrow \ \bar{T}=0$, and the IPS occurs at $\bar{T}=-1$, i.e., the IPS and the future metric singularity are separated by a unit value difference of $\bar{T}$. Factoring out the divergences yields
\begin{eqnarray}
ds^{2}=\bar{\Omega}^{2}(\bar{T})\left[-\frac{81}{16b^{4}}d\bar{T}^{2}+\frac{9(\bar{T}^{3}+\bar{T}^{2})}{4b^{2}}\{-\bar{T}^{3}dx^{2}+b^{-2}\left(dy^{2}+\sinh^{2}ydz^{2}\right)\}\right],\label{kant2.eqn}
\end{eqnarray}
where the dimensionless conformal factor reads
\begin{eqnarray}
\bar{\Omega}(\bar{T})=(-\bar{T})^{-5/2}.\label{kant3.eqn}
\end{eqnarray}

The conformal metric $d\bar{s}^{2}$ in the square brackets becomes highly degenerate as $\bar{T}\rightarrow 0^{-}$, since all spatial components vanish,
\begin{eqnarray}
d\bar{s}^{2}\rightarrow -\frac{81}{16b^{4}}d\bar{T}^{2}, \qquad
\mbox{as} \qquad \bar{T}\rightarrow 0^{-},
\end{eqnarray}
which, from the previous section, may indicate an asymptotic anisotropy, as $\bar{\Omega}$ is monotonic. Otherwise, $d\bar{s}^{2}$ is $C^{\infty}$ for $\bar{T}\leq 0$.

The conformal factor $\bar{\Omega}(\bar{T})$ diverges as $\bar{T}\rightarrow 0^{-}$, but is positive and $C^{\infty}$ for $\bar{T}<0$. The derivative ratios are determined to satisfy
\begin{eqnarray}
\mathop {\lim }\limits_{\bar{T} \to 0^{-}}\frac{\bar{\Omega}
'}{\bar{\Omega}}=+\infty \qquad \mbox{and} \qquad
\bar{L}\equiv\frac{7}{5}.\label{kant4.eqn}
\end{eqnarray}

The fluid flows of the physical and unphysical space-times are given, respectively, by
\begin{eqnarray}
\mathbf{u}=\frac{4b^{2}(-\bar{T})^{5/2}}{9}\frac{\partial}{\partial\bar{T}},
\qquad
\bar{{\mathbf{u}}}=\frac{4b^{2}}{9}\frac{\partial}{\partial\bar{T}}.
\end{eqnarray}
The physical flow $\mathbf{u}$ vanishes at $\bar{T}=0$ and is otherwise $C^{\infty}$ for $\bar{T}<0$, while the unphysical flow $\bar{{\mathbf{u}}}$ is both regular at, and orthogonal to the slice $\bar{T}=0$.

By (\ref{proptime.eqn}) the proper time of a fluid particle from
the IPS to $\bar{T}=0$ is infinite. The fluid is expanding, i.e.,
$\theta>0$ for $\bar{T}<0$, and the non-zero kinematic quantities
and the curvature invariants satisfy
\begin{eqnarray}
\mathop {\lim }\limits_{\bar{T} \to 0^{-}}\theta=\mathop {\lim }\limits_{\bar{T} \to 0^{-}}\sigma=\mathop {\lim }\limits_{\bar{T} \to 0^{-}}R_{ab}R^{ab}=\mathop {\lim }\limits_{\bar{T} \to 0^{-}}C_{abcd}C^{abcd}=0,
\end{eqnarray}
which clearly shows that the Kantowski models do \emph{not}
possess a big crunch cosmological singularity in the future; in
fact, they correspond to ever-expanding universes. The evolution
of $\theta$ and the Weyl and Ricci curvature are shown in figures
\ref{kant1.img} and \ref{kant2.img}, respectively, for $b=1$.

\begin{figure}[h!]
\centering
\includegraphics[width=0.75\textwidth, height=0.5\textwidth]{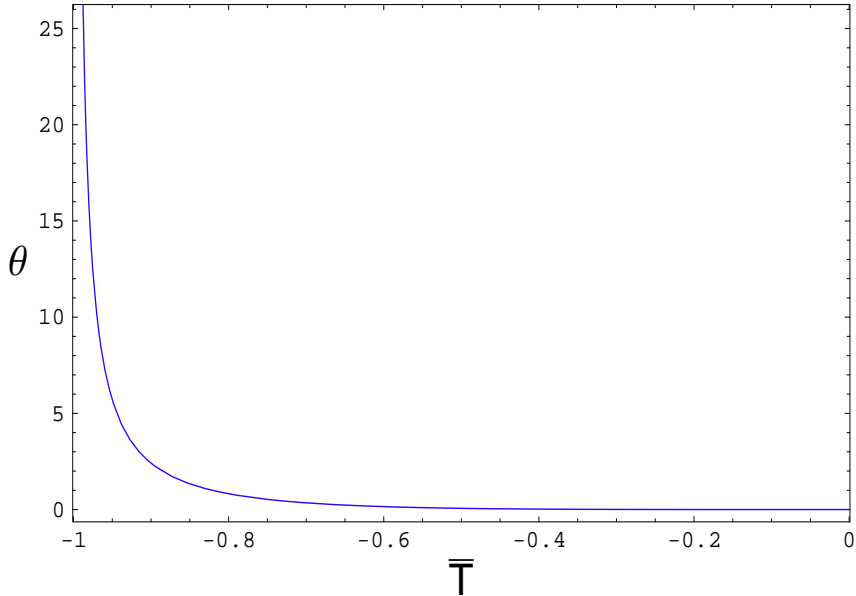}
\caption{{\footnotesize The behaviour of $\theta$ for all of cosmic time, $\bar{T}\in(-1,0)$, in the Kantowski model with $b=1$. }}
\label{kant1.img}
\end{figure}
\begin{figure}[h!]
\centering
\includegraphics[width=0.75\textwidth, height=0.5\textwidth]{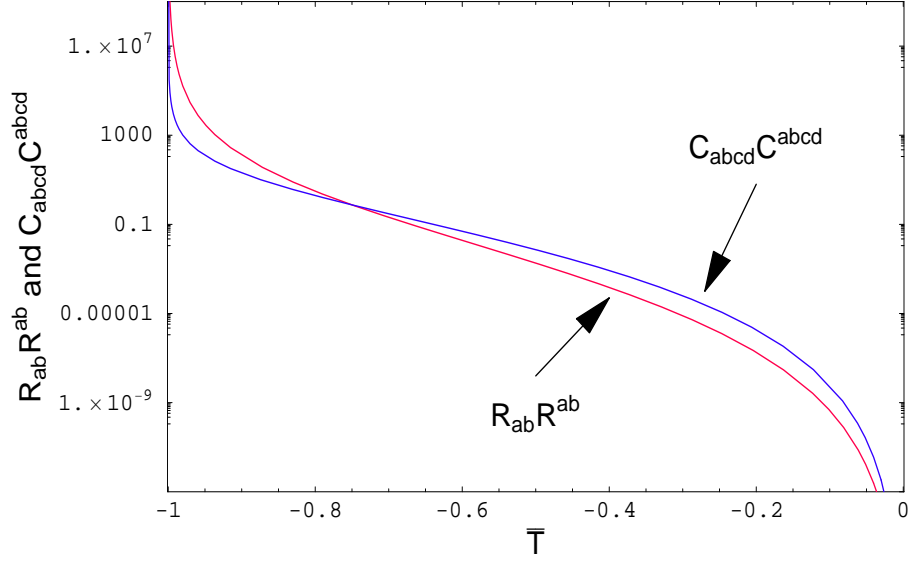}
\caption{{\footnotesize The behaviour of $R_{ab}R^{ab}$ and
$C_{abcd}C^{abcd}$ for all of cosmic time, $\bar{T}\in(-1,0)$, in
the Kantowski model with $b=1$. Note how the Weyl curvature
becomes stronger than the Ricci curvature.}}\label{kant2.img}
\end{figure}
\begin{figure}[h!]
\centering
\includegraphics[width=0.65\textwidth, height=0.45\textwidth]{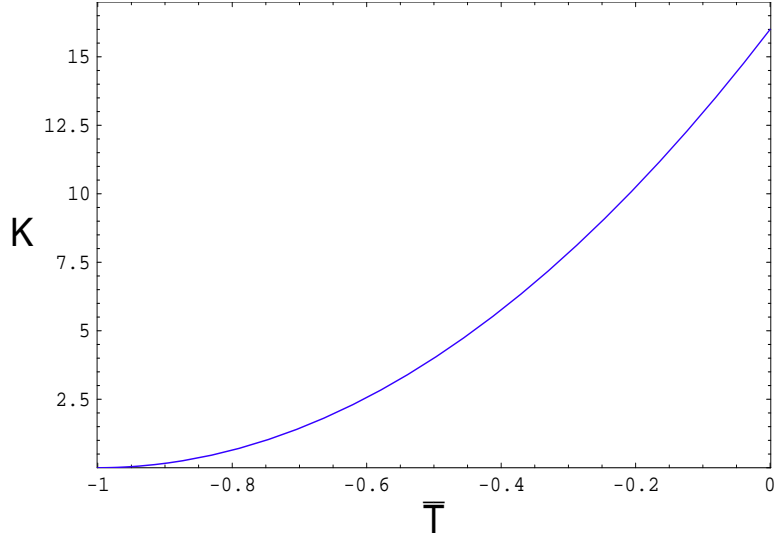}
\caption{{\footnotesize The evolution of $K$ for all of cosmic
time, $\bar{T}\in(-1,0)$, in the Kantowski models ($b$
independent). }} \label{kant3.img}
\end{figure}
The ratio of the non-zero kinematic quantities indeed shows an ``anisotropic" behaviour, since the shear is asymptotically not expansion dominated,
\begin{eqnarray}
\mathop {\lim }\limits_{\bar{T} \to 0^{-}}\frac{\sigma}{\theta}=\sqrt{\frac{1}{6}}.
\end{eqnarray}
The Weyl curvature becomes slightly stronger than the Ricci curvature, in the sense that
\begin{eqnarray}
\mathop {\lim }\limits_{\bar{T} \to 0^{-}}K=16.
\end{eqnarray}
The complete evolution of $K$ (which is independent of $b$) is
graphed in figure \ref{kant3.img}, in strong support of the WCH.

A curvature singularity can, however, now be encountered in the unphysical space-time as $\bar{T}\rightarrow 0^{-}$, since
\begin{eqnarray}
\mathop {\lim }\limits_{\bar{T} \to
0^{-}}\bar{R}_{ab}\bar{R}^{ab}=\mathop {\lim }\limits_{\bar{T} \to
0^{-}}\bar{C}_{abcd}\bar{C}^{abcd}=\infty, \qquad \mbox{and}
\qquad \mathop {\lim }\limits_{\bar{T} \to
0^{-}}\bar{\theta}=-\infty.
\end{eqnarray}

Finally, the determinant  $\bar{\Omega}^{8}\bar{g}$ (where
$\bar{g}=\det\bar{g}_{ab}$) of the physical metric diverges as
$\bar{T}\rightarrow 0^{-}$.

\subsubsection{A future singularity in the Kantowski-Sachs models}\label{kantsachs}\label{kantsachs}

Given   (\ref{kantsachs1.eqn}), it is readily seen that for
$\epsilon =1$ we encounter a future metric singularity when $A=0$,
i.e., when $t\rightarrow t_{s}=\frac{9}{4b^{2}}$. With the
following choice of the cosmic time function $\bar{T}$
\begin{eqnarray}
\bar{T}=-A^{2}=-\left(1-\frac{4b^{2}t}{9}\right)^{2},\label{kantsachs3.eqn}
\end{eqnarray}
which approaches $0$ from below and satisfies $t=t_{s} \
\Leftrightarrow \ \bar{T}=0$, it is evident that this case is
different to the previous example. The IPS occurs at $\bar{T}=-1$
and thus, solving (\ref{proptime.eqn}) shows that this future
metric singularity occurs at finite proper time for the fluid
particles, namely at $\tau =\frac{3}{2b^{2}}$. Rewriting the
metric yields
\begin{eqnarray}
 ds^{2}=\bar{\Omega}^{2}(\bar{T})\left[-\frac{81}{64b^{4}}d\bar{T}^{2}+\left(1-\sqrt{-\bar{T}}\right)\frac{9}{4b^{2}}\left[dx^{2}+\left(-\bar{T}\right)^{3/2}b^{-2}\left(dy^{2}+sin^{2}ydz^{2}\right)\right]\right],
 \label{kantsachs4.eqn}\nonumber\\
\end{eqnarray}
where $\bar{T}\in (-1,0)$, and the dimensionless conformal factor is given by
\begin{eqnarray}
\bar{\Omega}\left(\bar{T}\right)=\frac{1}{(-\bar{T})^{1/4}}.\label{kantsachs5.eqn}
\end{eqnarray}

The $y$- and $z$-components of the conformal metric $d\bar{s}^{2}$ in the square brackets vanish as $\bar{T}\rightarrow 0^{-}$,
\begin{eqnarray}
d\bar{s}^{2}\rightarrow
-\frac{81}{64b^{4}}d\bar{T}^{2}+\frac{9}{4b^{2}}dx^{2}, \qquad
\mbox{as} \qquad \bar{T}\rightarrow 0^{-},
\end{eqnarray}
i.e., $d\bar{s}^{2}$ becomes degenerate as $\bar{T}\rightarrow
0^{-}$, which again may be a structural indication for anisotropy.
(\ref{kantsachs4.eqn}) implies, moreover, that $d\bar{s}^{2}$ is
only $C^{0}$ at $\bar{T}=0$.

The conformal factor $\bar{\Omega}$ diverges as $\bar{T}\rightarrow 0^{-}$, and is positive and $C^{\infty}$ for $\bar{T}<0$. The time derivatives of $\bar{\Omega}$ behave as
\begin{eqnarray}
\mathop {\lim }\limits_{\bar{T} \to 0^{-}}\frac{\bar{\Omega}
'}{\bar{\Omega}}=+\infty, \qquad \mbox{and} \qquad
\bar{L}&=&\frac{\bar{\Omega}
''\bar{\Omega}}{{\bar{\Omega'}}^{2}}=5>1.
\end{eqnarray}

By (\ref{kantsachs4.eqn}), it is apparent that in these
coordinates the timelike fluid flows in the physical and conformal
space-times are given, respectively, by
\begin{eqnarray}
\mathbf{u}=\frac{8b^{2}\left(-\bar{T}\right)^{1/4}}{9}\frac{\partial}{\partial\bar{T}},
\qquad
\bar{{\mathbf{u}}}=\bar{\Omega}\mathbf{u}=\frac{8b^{2}}{9}\frac{\partial}{\partial\bar{T}}.
\end{eqnarray}
The physical fluid flow $\mathbf{u}$ vanishes and becomes $C^{0}$
at the metric singularity. The unphysical fluid flow
$\bar{{\mathbf{u}}}$, however, is $C^{\infty}$, regular at, and
orthogonal to the slice $\bar{T}=0$. The expansion scalar of the
physical fluid flow is plotted in figure \ref{kantsachs0.img} (for
$b=1$)
\begin{figure}[h!]
\centering
\includegraphics[width=0.75\textwidth, height=0.5\textwidth]{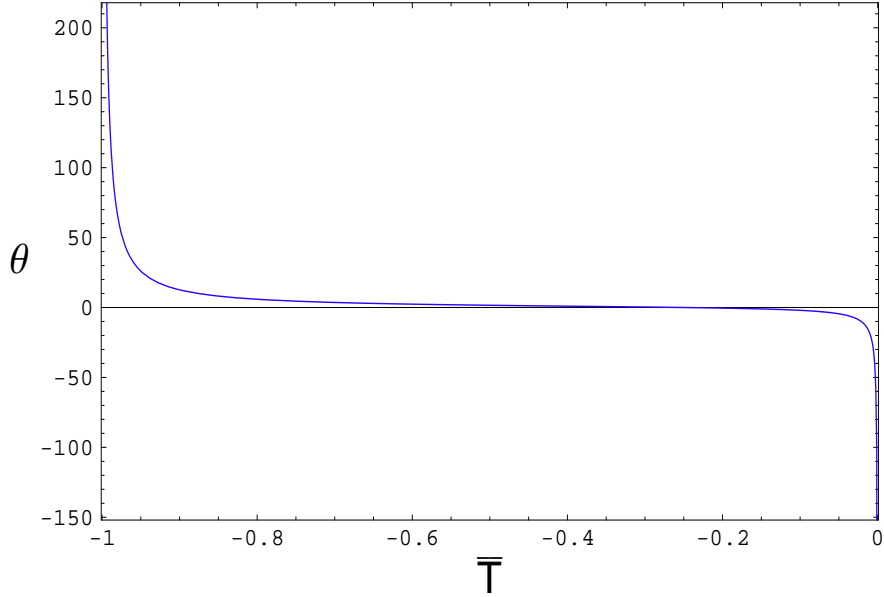}
\caption{{\footnotesize The evolution of $\theta$ between the
singularities in the Kantowski-Sachs model with $b=1$. }}
\label{kantsachs0.img}
\end{figure}
and leads to caustics in the flow lines, while the shear grows unboundedly,
\begin{eqnarray}
\mathop {\lim }\limits_{\bar{T} \to 0^{-}}\theta= -\infty, \qquad
\mathop {\lim }\limits_{\bar{T} \to 0^{-}}\sigma=
\infty.\label{kantsachs6.eqn}
\end{eqnarray}
Their ratio, moreover, shows an asymptotically anisotropic
behaviour
\begin{eqnarray}
\mathop {\lim }\limits_{\bar{T} \to
0^{-}}\frac{\sigma}{\theta}=-\sqrt{\frac{2}{15}}.
\end{eqnarray}

Analysing the Weyl and Ricci curvature invariants, it is found
that
\begin{eqnarray}
\mathop {\lim }\limits_{\bar{T} \to 0^{-}}R_{ab}R^{ab}=\mathop {\lim }\limits_{\bar{T} \to 0^{-}}C_{abcd}C^{abcd}=+\infty.
\end{eqnarray}
The overall behaviour of these curvature invariants between the
singularities is pictured in figure \ref{kantsachs1.img} for
$b=1$. The Ricci curvature diverges at both singularities and it
is readily seen how the Weyl curvature initially decreases and
eventually increases from a finite value to infinity throughout
the evolution of the model.
\begin{figure}[h!]
\centering
\includegraphics[width=0.75\textwidth, height=0.5\textwidth]{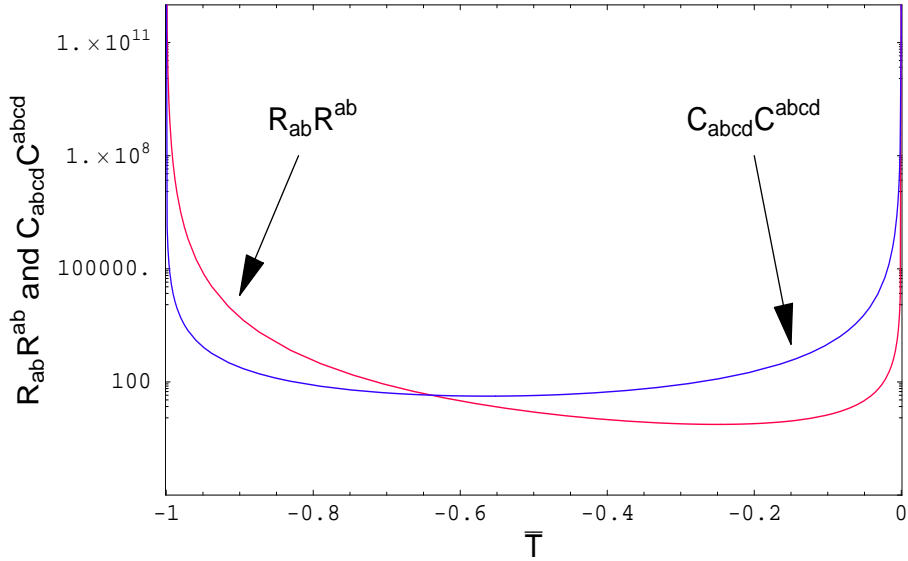}
\caption{{\footnotesize The behaviour of $R_{ab}R^{ab}$ and
$C_{abcd}C^{abcd}$ between the singularities in the
Kantowski-Sachs model with $b=1$. Note how the Weyl curvature
outgrows the Ricci curvature.}} \label{kantsachs1.img}
\end{figure}
Thus, the Kantowski-Sachs models describe recollapsing universes in which the timelike congruence encounters a \emph{physical, scalar polynomial curvature singularity} in finite proper time.

In this class of models, the WCH holds; the scalar $K$ increases
to infinity at the future singularity, $\mathop {\lim
}\limits_{\bar{T} \to 0^{-}}K=+\infty$, as can be seen for the
case $b=1$ in figure \ref{kantsachs2.img}.
\begin{figure}[h!]
\centering
\includegraphics[width=0.75\textwidth, height=0.5\textwidth]{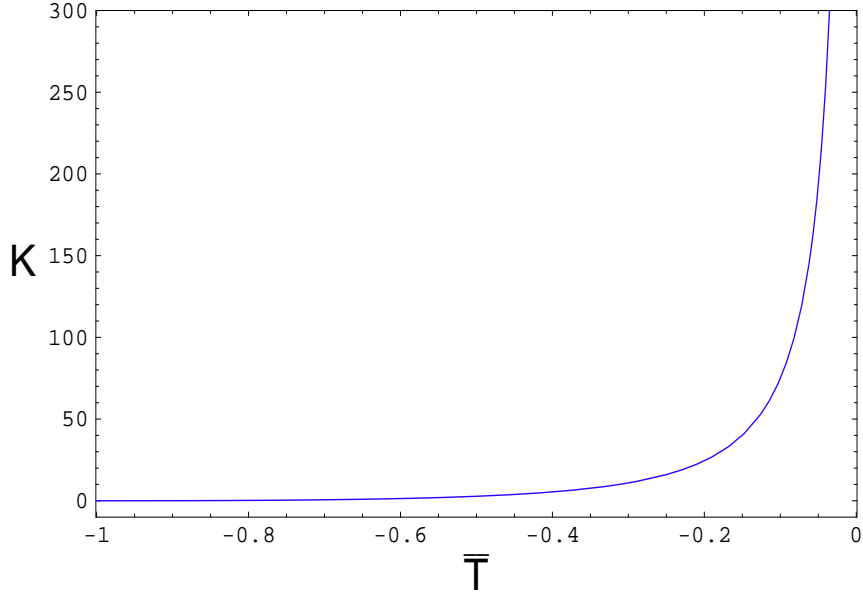}
\caption{{\footnotesize The behaviour of $K$ between the
singularities in the Kantowski-Sachs model with $b=1$. }}
\label{kantsachs2.img}
\end{figure}

Due to the degeneracy of the conformal metric, we again find a scalar polynomial curvature singularity at $\bar{T}=0$ in the unphysical space-time, and an unbounded unphysical expansion
\begin{eqnarray}
\mathop {\lim }\limits_{\bar{T} \to
0^{-}}\bar{R}_{ab}\bar{R}^{ab}=\mathop {\lim }\limits_{\bar{T} \to
0^{-}}\bar{C}_{abcd}\bar{C}^{abcd}=+\infty, \qquad \mathop {\lim
}\limits_{\bar{T} \to 0^{-}}\bar{\theta}=-\infty.
\end{eqnarray}


The essential structural difference to the Kantowski models is that, not only the determinant of the unphysical metric vanishes as $\bar{T}\rightarrow 0^{-}$, but also the determinant of the physical metric,
\begin{eqnarray}
\mathop {\lim }\limits_{\bar{T} \to 0^{-}}g=\mathop {\lim }\limits_{\bar{T} \to 0^{-}}\bar{\Omega}^{8}\bar{g}=0.
\end{eqnarray}
This state of affairs will have an impact on the definitions presented in the next section.

\subsection{Summary of the example cosmologies}

Comparing the four cosmologies discussed in this section provides
further valuable information regarding the possible conformal
structures for an anisotropic future behaviour. In the light of
\emph{quiescent cosmology} and the WCH we would expect
anisotropies to be formed with cosmic evolution by the enhanced
(local) gravitational clumping, at least in the sense that
asymptotically (as $\bar{T}\rightarrow 0^{-}$) the ratios
discussed in (\ref{K.eqn}) and (\ref{kiniso.eqn}) do not vanish
anymore for the non-zero quantities - such that we may talk about
an asymptotic anisotropy. An understanding of the behaviour of the
curvature invariants and the expansion, on the other hand, is
essential for uncovering the structural differences between models
which are ever-expanding and models with recollapsing scenarios.

To facilitate the comparison of the cosmologies we summarise the
behaviour of the characteristic quantities at $\bar{T}=0$ in
tables \ref{examplesphys.tab}, \ref{exampleskin.tab} and
\ref{examplesunphys.tab}. \begin{table}
\caption{\label{examplesphys.tab} Summary of the behaviour of
several quantities as $\bar{T}\rightarrow 0^{-}$ in the physical
space-time of the presented cosmologies. The quantity $\tau_{s}$
denotes proper time elapsed from the initial state to the slice
$\bar{T}=0$ along the fluid flow. (a) means that the relevant
quantity diverges to $+\infty$ at $\bar{T}=0$, but is $C^{\infty}$
for $\bar{T}<0$ away from the IPS, (b) means that the relevant
quantity is finite and $C^{\infty}$ for $\bar{T}\leq0$ away from
the IPS and (c) means that the relevant quantity vanishes as
$\bar{T}\rightarrow0^{-}$.}
\vspace{.4cm}
\begin{tabular}{@{}llllll}
\hline\hline
        Model &$K$&$R_{ab}R^{ab}$&$C_{abcd}C^{abcd}$&$\tau_{s}$&
        IPS\cr  \hline
        rad. FRW ($k=+1$)&  \ 0&  (a)&    \ 0&       finite&      yes
        \cr
    dust FRW ($k=+1$)&    \ 0&   (a)&  \ 0&     finite&    yes \cr
Kantowski&  (b)& (c)& (c)& infinite&   yes\cr
 Kantowski-Sachs&
(a)&  (a)&  (a)&  finite&  yes\cr \hline\hline
    \end{tabular}
\end{table}
\begin{table}
\caption{\label{exampleskin.tab} Behaviour of the fluid flow and
its kinematic quantities as $\bar{T}\rightarrow 0^{-}$ in the
physical space-time, and of the conformal fluid flow and its
expansion, in the investigated cosmologies. (d) means that the
relevant quantity diverges to $-\infty$ at $\bar{T}=0$, but is
$C^{\infty}$ for $\bar{T}<0$ away from the IPS (see table
\ref{examplesphys.tab} for an explanation of the other
abbreviations).}
\vspace{.4cm}
\begin{tabular}{@{}lllllllll}
\hline\hline
       Model
        &$\mathbf{u}$&$\theta$&$\dot{u}^{a}$&$\sigma$&$\omega$&$\sigma/\theta$&$\mathbf{\bar{u}}$&$\bar{\theta}$\cr
        \hline
        rad. FRW ($k=+1$)&  (a)&    (d)&   \ 0&       \ 0&  0&   \ 0&   (b)&
        (c)\cr
    dust FRW ($k=+1$)&   (a)&   (d)&  \ 0&     \ 0&  0&  \ 0&  (b)&  \ 0
    \cr
Kantowski&  (c)& (c)& \ 0&  (c)&   0&    (b)&  (b)&  (d)\cr
Kantowski-Sachs&  (c)&  (d)&  \ 0&  (a)& 0&   (b)&  (b)& (d)\cr
\hline\hline
    \end{tabular}
\end{table}
\begin{table}
\caption{\label{examplesunphys.tab} Some characteristic properties
of the conformal structure in the investigated cosmologies as
$\bar{T}\rightarrow 0^{-}$. $C_{0}^{n}$ denotes the degree of
differentiability of $\mathbf{\bar{g}}$ at $\bar{T}=0$. (e) means
that some spatial components of the metric vanish as
$\bar{T}\rightarrow 0^{-}$. See table \ref{examplesphys.tab} for
the explanation of the other abbreviations.}
\vspace{.4cm}
\begin{tabular}{@{}lllllllll} \hline\hline
       Model &degenerate
        $\mathbf{\bar{g}}$&$C_{0}^{n}$&$\bar{\Omega}$&$\bar{L}$&$\bar{\Omega}^{8}|\bar{g}|$&$\bar{C}_{abcd}\bar{C}^{abcd}$&$\bar{R}_{ab}\bar{R}^{ab}$\cr
        \hline
        rad. FRW ($k=+1$)&  no& $n=\infty$& (c)&   \ 0&    (c)&    \ 0& (b)
        \cr
    dust FRW ($k=+1$)&   no&   $n=\infty$&  (c)&   $<1$&  (c)&    \ 0&  (b)
    \cr
Kantowski&  (e)& $n=\infty$& (a)&  $>1$& (a)&    (a)&  (a)\cr
Kantowski-Sachs&  (e)&  $n=0$&  (a)&  $>1$& (c)&  (a)&  (a)\cr \hline\hline
    \end{tabular}
\end{table}From the behaviours of $K$ in table
\ref{examplesphys.tab}, of the fluid flow and the asymptotic ratio
of the non-zero kinematic quantities in table
\ref{exampleskin.tab} and of the conformal factor and conformal
metric in table \ref{examplesunphys.tab}, we expect that a
conformal structure which leads to an anisotropic future behaviour
should satisfy the following conditions at $\bar{T}=0$:
\begin{enumerate}
\item the conformal factor diverges and $\mathop {\lim }\limits_{\bar{T} \to 0^{-}}\bar{L}=\bar{\lambda}\geq1$,
\item the conformal metric becomes degenerate with some spatial components vanishing,
\item the physical fluid flow $\mathbf{u}$ vanishes, and
\item the unphysical fluid flow $\mathbf{\bar{u}}$ remains at least $C^{2}$.
\end{enumerate}
For the first point, the fact that $\bar{\lambda}\geq 1$ for a
diverging $\bar{\Omega}$ will be proven in Lemma
\ref{conffac1.lem} of \ref{conffac}. For the time being it still
remains unclear, however, whether a degenerate conformal metric
and a vanishing conformal factor can also lead to an anisotropic
scenario. The second point is already justified by the results of
section \ref{confiso} in the sense that an irregular conformal
metric is necessary in a conformal structure with a conformal
factor $\bar{\Omega}$ which is well-behaved for $\bar{T}\in(-c,0)$
(for some $c>0$), if an asymptotic anisotropy is to occur.

From this state of affairs we can already infer that it will not
be possible to construct a similar condition to requirement (ii)
of Definition \ref{ISdef1.def} in our new framework, i.e., we lose
part of the regularity in the unphysical space-time. As can be
seen from tables \ref{exampleskin.tab} and
\ref{examplesunphys.tab}, the degeneracy in the metric apparently
leads to a scalar polynomial curvature singularity and a diverging
$\bar{\theta}$ at $\bar{T}=0$ for the unphysical space-time.  This
complicates investigations significantly and will be discussed in
an upcoming paper \cite{HoehnScott2}. Thus, we cannot require that
the physical manifold be a submanifold of the conformal space-time
for the cosmological future.

The third and fourth points concerning the fluid flow will be essential, as in the case of the IPS, and will be employed in results concerning the expansion scalar, which will appear in a forthcoming paper in which further example cosmologies will be discussed \cite{HoehnScott2}.

In order not to be too restrictive on the conformal metric we
will, furthermore, take into account that it may become only
$C^{0}$ at $\bar{T}=0$, in accordance with table
\ref{examplesunphys.tab}. Finally, a distinguishing characteristic
between models which possess a future singularity, and those which
describe ever-expanding universes, is the determinant of the
physical metric, $\bar{\Omega}^{8}\bar{g}$, which vanishes in the
former cosmologies and diverges in the latter cosmologies. This
will be of importance in results concerning the notion of strong
curvature in the new frameworks \cite{HoehnScott2}.

We emphasise that, irrespective of regularity or irregularity, the above conditions leave (as explicity seen in the example of section \ref{radFRW}) a considerable freedom in choosing the conformal factor.

\section{New definitions for cosmological futures}\label{newdefs}

Motivated by the previous two sections, we will now proceed to give the new definitions for conformal structures for cosmological futures. In fact, in the interests of completeness, we will start with definitions for isotropic future behaviour in subsection \ref{isotropic} prior to presenting the physically more interesting conformal structures with an asymptotically anisotropic future evolution in subsection \ref{anisotropic}.

Unlike in the case of Definition \ref{ISdef1.def}, we will require
that the conformal factor and the metric be at least $C^{2}$
(instead of $C^{3}$), in agreement with the common definition of a
space-time in which the physical metric is at least $C^{2}$. Some
technical details concerning the conformal factor are clarified in
\ref{conffac}, such as, for example, why in some cases we require
$\bar{\lambda}\neq1,2$ for which a general conclusion about the
behaviour of some functions of $\bar{\Omega}$ and its derivatives
is not evident.

As before, we will denote some relevant quantities with a $\
\bar{} \ $, in order to emphasise that we are now dealing with
conformal structures for cosmological futures.

\subsection{New definitions for isotropic future behaviour}\label{isotropic}

The two FRW models exhibited a cosmological future structure and physical behaviour which is essentially the time-reverse of an IPS. In close analogy to Definition \ref{ISdef1.def}, we provide the following definition:

\begin{defn}[\bf{Isotropic future singularity (IFS)}]\label{IFS.def}{A space-time $(\mathcal{M},\mathbf{g})$ is said to admit an \emph{isotropic future singularity} if there exists a space-time $(\bar{\mathcal{M}},\mathbf{\bar{g}})$, a smooth cosmic time function $\bar{T}$ defined on $\bar{\mathcal{M}}$, and a conformal factor $\bar{\Omega}\left(\bar{T}\right)$ which satisfy
\begin{enumerate}
\item $\mathcal{M}$ is the open submanifold $\bar{T}<0$,
\item $\mathbf{g}=\bar{\Omega}^{2}\left(\bar{T}\right)\mathbf{\bar{g}}$ on $\mathcal{M}$, with $\mathbf{\bar{g}}$ regular (at least $C^{2}$ and non-degenerate) on an open neighbourhood of $\bar{T}=0$,
\item $\bar{\Omega}\left(0\right)=0$ and $\exists \ c>0$ such that $\bar{\Omega}\in C^{0}[-c,0]\cap C^{2}[-c,0)$ and $\bar{\Omega}$ is positive on $[-c,0)$,
\item $\bar{\lambda}\equiv \mathop {\lim }\limits_{\bar{T} \to 0^{-}}\bar{L}\left(\bar{T}\right)$ exists, $\bar{\lambda}\neq 1$, where $\bar{L}\equiv \frac{\bar{\Omega} ''}{\bar{\Omega}}\left(\frac{\bar{\Omega}}{\bar{\Omega} '}\right)^{2}$ and a prime denotes differentiation with respect to $\bar{T}$.
\end{enumerate}}
\end{defn}

It should be noted that if the model admits an IPS as well, the above conformal relation will not, in general, be the same as that for the IPS.

Additionally, in analogy to Definition \ref{fluidflow1.def}, we require:

\begin{defn}[\bf{IFS fluid congruence}]\label{IFSfluid.def}{With any unit timelike congruence $\mathbf{u}$ in $\mathcal{M}$ we can associate a unit timelike congruence $\mathbf{\bar{u}}$ in $\bar{\mathcal{M}}$ such that
\begin{eqnarray}
\mathbf{\bar{u}}=\bar{\Omega}\mathbf{u} \qquad \mbox{in} \qquad
\mathcal{M}.\nonumber
\end{eqnarray}
\begin{description}
\item[(a)] If we can choose $\mathbf{\bar{u}}$ to be regular (at least $C^{2}$) on an open neighbourhood of $\bar{T}=0$ in $\bar{\mathcal{M}}$, we say that $\mathbf{u}$ is \emph{regular at the IFS.}
\item[(b)] If, in addition, $\mathbf{\bar{u}}$ is orthogonal to $\bar{T}=0$, we say that $\mathbf{u}$ is \emph{orthogonal to the IFS.}
\end{description}}
\end{defn}

Based on the results in section \ref{confiso} there is a further possibility for a conformal structure with an isotropic future behaviour, which does not necessarily lead to a future singularity. Some open FRW universes, for example, might satisfy these conditions.

\begin{defn}[\bf{Future isotropic universe (FIU)}]\label{FIU.def}{A space-time $(\mathcal{M},\mathbf{g})$ is said to be a \emph{future isotropic universe} if there exists a space-time $(\bar{\mathcal{M}},\mathbf{\bar{g}})$, a smooth cosmic time function $\bar{T}$ defined on $\bar{\mathcal{M}}$, and a conformal factor $\bar{\Omega}\left(\bar{T}\right)$ which satisfy
\begin{enumerate}
\item $\mathop {\lim }\limits_{\bar{T} \to 0^{-}}\bar{\Omega}\left(\bar{T}\right)=+\infty$ and $\exists \ c>0$ such that $\bar{\Omega}\in C^{2}[-c,0)$ and $\bar{\Omega}$ is strictly monotonically increasing and positive on $[-c,0)$,
\item $\bar{\lambda}$ as defined above exists, $\bar{\lambda}\neq 1,2$, and $\bar{L}$ is continuous on $[-c,0)$, and
\item otherwise the conditions of Definitions \ref{IFS.def} and \ref{IFSfluid.def} are fulfilled.
\end{enumerate}}
\end{defn}

\subsection{Anisotropic future endless universes and\\anisotropic future singularities}\label{anisotropic}

In an attempt to complete the framework of an IPS and the
formalisation of \emph{quiescent cosmology} and the WCH we now
present new geometric definitions which describe a non-isotropic
future evolution of the universe. Since it is not yet clear
whether our universe will expand forever or recollapse in finite
proper time, it seems reasonable, in fact, to give two
definitions, one for each scenario.

Before we present the definitions for the conformal structures, we need to define the notion of a limiting causal future, which will be essential in defining where $\bar{T}$ becomes zero, since a cosmic time function cannot exist where the metric becomes degenerate (the existence of a cosmic time function requires stable causality). This limiting causal future is required to be a subset of a larger manifold of which the physical manifold is a submanifold.

\begin{defn}[\bf{Limiting causal future}]\label{causalfuture.def}{Let $(\mathcal{M},\g)$ be a space-time, where $\mathcal{M}\subset\bar{\mathcal{M}}$. We define the \emph{limiting causal future} of $\mathcal{M}$, denoted $F^{+}(\mathcal{M})$, as follows}
\begin{eqnarray}
F^{+}(\mathcal{M}):=\{p\in\bar{\mathcal{M}}\left.\right|\ \exists
\ \mbox{a future inextendible causal curve} \
\gamma_{p}(s):[0,a)\rightarrow\mathcal{M} \ \mbox{in} \
\mathcal{M},\nonumber\\ \mbox{where} \
a\in\mathbb{R}^{+}\cup\{\infty\}, \ \mbox{such that, in} \
\bar{\mathcal{M}}, \ p=\gamma_{p}(a)\equiv\mathop {\lim
}\limits_{s \to a}\gamma_{p}(s)\}. \nonumber
\end{eqnarray}
\end{defn}

This, furthermore, enables us to define the following type of
degeneracy which is necessary in defining ``how degenerate" the
conformal metric becomes in the frameworks presented below.

\begin{defn}[\bf{Causal degeneracy}]\label{causaldeg.def}{Consider $p\in F^{+}(\mathcal{M})$. Let $\gamma_{p}(s)$ be a causal curve in $\mathcal{M}$ as defined above with a limiting tangent vector $\gamma_{p} '\neq 0$ at $p$. The metric $\g$ is said to be \emph{causally degenerate} at $p$ if there exists such a curve $\gamma_{p}$ which satisfies $\g(\gamma_{p} '(a),X)=0$ $\forall \ X\in T_{p}\bar{\mathcal{M}}$. (Note that this assumes that the metric is continuous on an open neighbourhood of $p$.)}
\end{defn}

We will henceforth require that $F^{+}(\mathcal{M})$ be a non-empty set, and, for simplicity, that $F^{+}(\mathcal{M})$ corresponds solely to the final cosmological state. The function $\bar{T}$ is defined to be $0$ on $F^{+}(\mathcal{M})$.

\begin{defn}[\bf{Anisotropic future endless universe (AFEU)}]\label{AFEU.def}{A space-time $(\mathcal{M},\g)$ is said to be an \emph{anisotropic future endless universe} if there exists a larger manifold $\bar{\mathcal{M}}\supset\mathcal{M}$, a space-time $(\mathcal{M},\mathbf{\bar{g}})$, a smooth function $\bar{T}$ defined on $\mathcal{M}\cup F^{+}(\mathcal{M})$, where $F^{+}(\mathcal{M})\neq\emptyset$, and a conformal factor $\bar{\Omega}(\bar{T})$ which satisfy
\begin{enumerate}
\item $\bar{T}=0$ on $F^{+}(\mathcal{M})$, and $\bar{T}$ is a cosmic time function on $\mathcal{M}$ with range $\bar{T}<0$,
\item $\mathbf{g}=\bar{\Omega}^{2}\left(\bar{T}\right)\mathbf{\bar{g}}$ on $\mathcal{M}$, and $\mathbf{\bar{g}}$ is $C^{0}$ and degenerate, but \emph{not causally degenerate}, on $F^{+}(\mathcal{M})$,
\item $\mathop {\lim }\limits_{\bar{T} \to 0^{-}}\bar{\Omega}\left(\bar{T}\right)=+\infty$, and $\exists \ c>0$ such that $\bar{\Omega}\in C^{2}[-c,0)$ and $\bar{\Omega}$ is strictly monotonically increasing and positive on $[-c,0)$,
\item $\bar{\lambda}$ as defined above exists, $\bar{\lambda}\neq 1$ and $\bar{L}$ is continuous on $[-c,0)$, and
\item $\mathop {\lim }\limits_{\bar{T} \to 0^{-}}\bar{\Omega}^{6}|\bar{g}|=+\infty$ across all of $F^{+}(\mathcal{M})$\footnote{The reason for this more stringent choice, $\mathop {\lim }\limits_{\bar{T} \to 0^{-}}\bar{\Omega}^{6}|\bar{g}|=+\infty$, rather than $\mathop {\lim }\limits_{\bar{T} \to 0^{-}}\bar{\Omega}^{8}|\bar{g}|=+\infty$, is the expansion $\theta$, which in the latter case can possibly become negative along the flow lines prior to $\bar{T}=0$ \cite{HoehnScott2}.}, where $\bar{g}$ is the determinant of $\mathbf{\bar{g}}$.
\end{enumerate}}
\end{defn}

Note that since $(\mathcal{M},\mathbf{\bar{g}})$ is a space-time by construction, $\mathbf{\bar{g}}$ is automatically regular (at least $C^{2}$ and non-degenerate) on $\mathcal{M}$. Unlike in Definition \ref{ISdef1.def}, the physical space-time and the conformal space-time may possess the same manifold $\mathcal{M}$.

Motivated by the behaviour of the fluid flow in the Kantowski and Kantowski-Sachs cosmologies, we additionally require the following condition on this slice, in order to guarantee an appropriate behaviour of the fluid flow quantities:

\begin{defn}[\bf{AFEU fluid congruence}]\label{AFEUfluid.def}{With any unit timelike congruence $\mathbf{u}$ in $(\mathcal{M},\g)$ we can associate a unit timelike congruence $\mathbf{\bar{u}}$ in $(\mathcal{M},\mathbf{\bar{g}})$ such that
\begin{eqnarray}
\mathbf{\bar{u}}=\bar{\Omega}\mathbf{u} \qquad \mbox{in} \qquad
\mathcal{M}.\nonumber
\end{eqnarray}
\begin{description}
\item[(a)] If we can further choose $\mathbf{\bar{u}}$ to be regular (at least $C^{2}$) on $\mathcal{M}\cup F^{+}(\mathcal{M})$, we say that $\mathbf{u}$ is \emph{regular on the slice $\bar{T}=0$.}
\item[(b)] If, in addition, $\mathbf{\bar{u}}$ is orthogonal to $\bar{T}=0$, we say that $\mathbf{u}$ is \emph{orthogonal to the slice $\bar{T}=0$.}
\end{description}}
\end{defn}

Case (b) is possible since $\mathbf{\bar{g}}$ is not causally
degenerate at $\bar{T}=0$. The definition of an AFEU is
schematically represented in figure \ref{Afeu.img}.
\begin{figure}[h!] \centering
\includegraphics[width=0.75\textwidth, height=0.55\textwidth]{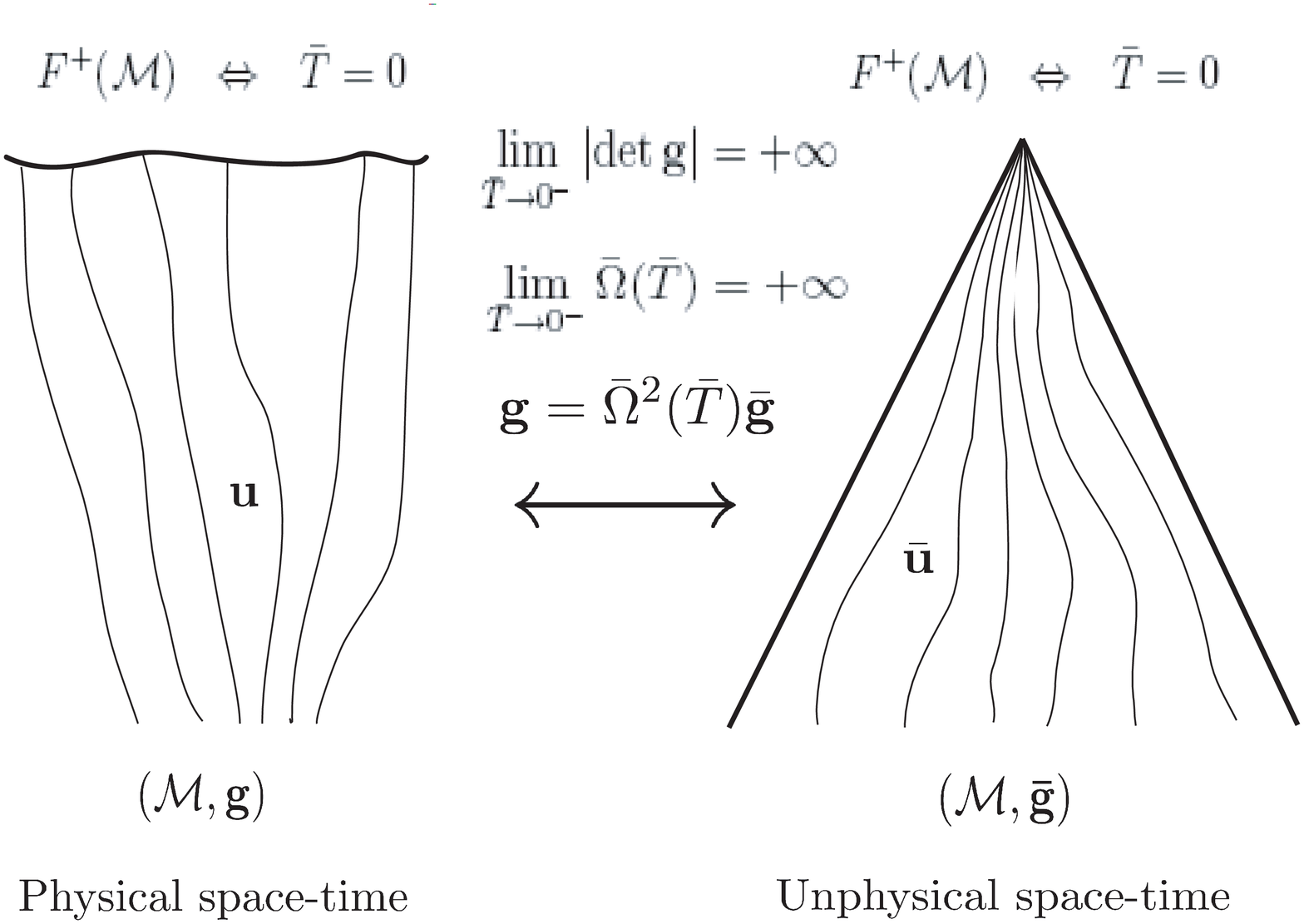}
\caption{{\footnotesize Pictorial interpretation of the definition of an AFEU. The fact that there now exists a singularity for the unphysical space-time, but not for the physical space-time, should not be interpreted as the conformal space-time possessing a more complicated structure. In fact, due to the continuity of $\mathbf{\bar{g}}$ the conformal space-time is, in a sense, more regular than the physical space-time, which is important for analytical considerations.}}
\label{Afeu.img}
\end{figure}

Finally, for an asymptotically anisotropic recollapsing universe we define:

\begin{defn}[\bf{Anisotropic future singularity (AFS)}]\label{AFS.def}{A space-time $(\mathcal{M},\g)$ is said to admit an \emph{anisotropic future singularity} if there exists a larger manifold $\bar{\mathcal{M}}\supset\mathcal{M}$, a space-time $(\mathcal{M},\mathbf{\bar{g}})$, a smooth function $\bar{T}$ defined on $\mathcal{M}\cup F^{+}(\mathcal{M})$, where $F^{+}(\mathcal{M})\neq\emptyset$, and a conformal factor $\bar{\Omega}(\bar{T})$ which satisfy
\begin{enumerate}
\item conditions 1. - 4. of Definition \ref{AFEU.def}, and
\item $\mathop {\lim }\limits_{\bar{T} \to 0^{-}}\bar{\Omega}^{8}|\bar{g}|=0$ across all of $F^{+}(\mathcal{M})$, where $\bar{g}$ is the determinant of $\mathbf{\bar{g}}$.
\end{enumerate}}
\end{defn}

Once more, we additionally define the following for the fluid flow:

\begin{defn}[\bf{AFS fluid congruence}]\label{AFSfluid.def}{With any unit timelike congruence $\mathbf{u}$ in $(\mathcal{M},\g)$ we can associate a unit timelike congruence $\mathbf{\bar{u}}$ in $(\mathcal{M},\mathbf{\bar{g}})$ such that
\begin{eqnarray}
\mathbf{\bar{u}}=\bar{\Omega}\mathbf{u} \qquad \mbox{in} \qquad
\mathcal{M}. \nonumber
\end{eqnarray}
\begin{description}
\item[(a)] If we can further choose $\mathbf{\bar{u}}$ to be regular (at least $C^{2}$) on $\mathcal{M}\cup F^{+}(\mathcal{M})$, we say that $\mathbf{u}$ is \emph{regular at the AFS.}
\item[(b)] If, in addition, $\mathbf{\bar{u}}$ is orthogonal to $\bar{T}=0$, we say that $\mathbf{u}$ is \emph{orthogonal to the AFS.}
\end{description}}
\end{defn}

The definition of an AFS is schematically represented in figure
\ref{Afs.img}. \begin{figure}[h!] \centering
\includegraphics[width=0.75\textwidth, height=0.55\textwidth]{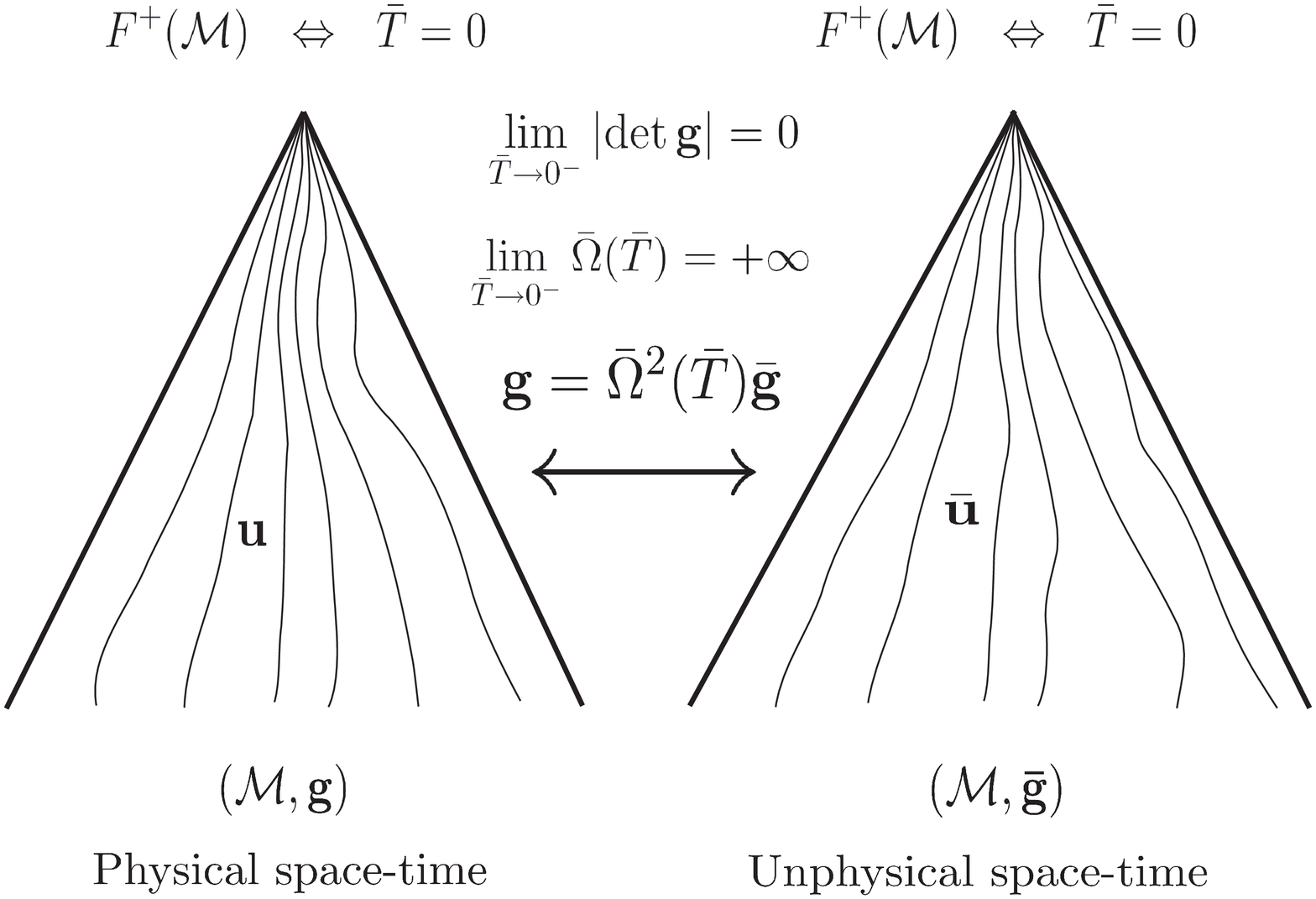}
\caption{{\footnotesize Pictorial interpretation of the definition of an AFS. The fact that the schematic representation of both space-times appears similar should not be interpreted as both space-times possessing a similar structure. Due to the continuity of $\mathbf{\bar{g}}$ the conformal space-time is slightly more regular than the physical space-time, which is important for analytical considerations.}}
\label{Afs.img}
\end{figure}

\begin{rem}{For simplicity, and without loss of generality, we will henceforth refer to $F^{+}(\mathcal{M})$ as $\bar{T}=0$. As in the case of the IPS, we will refer to $(\mathcal{M},\mathbf{g})$ as the \emph{physical space-time}, while the conformal space-time $(\bar{\mathcal{M}},\mathbf{\bar{g}})$ ($(\mathcal{M},\mathbf{\bar{g}})$ respectively) will be referred to as the \emph{unphysical space-time}.
}
\end{rem}

Some physical implications of the definitions of an AFEU and an AFS, concerning the expansion $\theta$ and the notion of strong curvature in these frameworks, will appear in the forthcoming article \cite{HoehnScott2}.

\subsection{Remark concerning the conformal factor}

At this point it should be emphasised that the conformal factor plays a prominent role in all the new frameworks, by absorbing the divergent behaviour of the physical metric in the definitions of an AFEU and an AFS and the entire singular behaviour in the case of the IFS and FIU. Its asymptotic differentiability and monotonicity properties are, furthermore, of key importance in the derivation of implications of all the new definitions, for which \ref{conffac} provides essential technical properties. In fact, the conditions on the conformal factor in the respective definitions do not uniquely define it, but rather leave a considerable freedom in choosing it, as explicitly seen in the example demonstrated in section \ref{radFRW}. For the case of the IPS it was shown that there exists an entire class of conformal factors, related by rescalings \cite{Scott2002}. Similar results should hold in the present context.

\section{Discussion and Outlook}\label{discussion}

In this article we have argued that the definition of an IPS (see
Definition \ref{ISdef1.def}) does not guarantee an anisotropic
cosmological future as one would desire in the light of
\emph{quiescent cosmology} and the WCH. It is therefore necessary
to devise a complementary framework to complete the formalisation
of these two schools of thought.

In order to elaborate an appropriate new framework in terms of
analogous conformal structures we have firstly analysed the
relation between isotropy and conformal structures. In section
\ref{confiso} we could show that conformal structures, whose
conformal factor is merely a function of a cosmic time function
$T$ and which shows a monotonic behaviour near the cosmological
past or future, $T=0$, necessarily lead to an asymptotic isotropy
as discussed in (\ref{K.eqn}) and (\ref{kiniso.eqn}), if a
conformal metric is employed which remains regular at $T=0$.
Interestingly, this result holds for both a vanishing and a
diverging conformal factor at $T=0$ and a cosmic time function
which approaches $0$ from above (initial scenario) or below (final
scenario). The upshot of this state of affairs is, that \emph{if
we are to employ conformal structures in cosmology with a
monotonic conformal factor as a function of a cosmic time, then the choice of a conformal structure with 1.\ a regular conformal
metric necessarily produces an asymptotically isotropic past or future scenario, and 2.\ an irregular conformal metric is required,
if the past or future scenario is supposed to be asymptotically
anisotropic}.

For further motivation and guidance in constructing appropriate
conformal structures for cosmological futures, we have analysed a number of example cosmologies of which four are displayed in section \ref{examples}, namely two
closed FRW universes, the Kantowski models as examples for
anisotropic ever-expanding universes and the Kantowski-Sachs
models as examples for anisotropic recollapsing universes.

For technical interest, we have, moreover, in \ref{conffac},
analysed the behaviour of a monotonic conformal factor, which is a
function of a cosmic time and which may diverge or vanish at the
cosmological future or past, as well as of some functions of it
and its derivatives. These results are essential for the
derivation of the implications of the new frameworks.

The collected information has led us to define new conformal
structures with isotropic future evolution, namely the
\emph{Isotropic Future Singularity} (IFS) and the \emph{Future
Isotropic Universe} (FIU), and, most importantly, we defined new
conformal structures with degenerate conformal metrics and
presumably anisotropic future evolution, namely the
\emph{Anisotropic Future Endless Universe} (AFEU) and the
\emph{Anisotropic Future Singularity} (AFS). The structures of the
latter two definitions differ significantly from the isotropic
cases due to the inherent irregularities. The choice of a
degenerate conformal metric is emphasised by the results of
section \ref{confiso} and the explored example cosmologies. As regards comparisons to other approaches, it should be stressed that the new geometric frameworks are independent of the matter content. Furthermore, we have thus far not made explicit use of the EFE which suggests possible applicability in modified and extended gravity theories as well.

The two closed FRW universes serve as concrete example models which admit
an IFS and the Kantowski and Kantowski-Sachs models serve as AFEU
and AFS example cosmologies, respectively. The fact that the
latter two models do not possess a conformal metric which becomes
\emph{causally degenerate} at $\bar{T}=0$ can most readily be seen
in the diagonal form of the metric. The $\bar{g}_{00}$ components
of their (continuous) conformal metrics do not vanish, hence
$\mathbf{\bar{g}}$ does not become \emph{causally degenerate} at
$\bar{T}=0$. The structures of further example cosmologies which influenced and support the shape of the new frameworks will be presented in the forthcoming paper and, along with the ones presented here, concretely show that there exist a number of explicit models which do indeed satisfy the new definitions (including the FIU definition, as will be seen in \cite{HoehnScott2}).

The isotropic physical implications of the IFS and FIU definitions
are essentially given by the results in section \ref{confiso},
which also show that the IFS conditions necessarily lead to a
future space-time singularity, while the conditions of the FIU
admit a variety of scenarios which are not, in general, singular.

Deriving the physical implications of the AFEU and the AFS
definitions is significantly more complicated as a consequence of
the irregularities in the unphysical space-time. The results of
section \ref{confiso}, however, indicate that the structure of the
definitions cannot be any simpler if conformal structures are to
be maintained. Nevertheless, due to the inherent continuity of the
conformal metric and the regularity of the reference fluid congruence in the unphysical space-time as $\bar{T}\rightarrow 0^-$, the conformal space-time is more
regular than the physical space-time, which is very important
analytically, allowing a number of physical properties regarding the associated final states of the universe to be derived, as will be seen in our forthcoming article
\cite{HoehnScott2}. As in the case of the IPS, where the physical metric becomes degenerate, the limits along the flow lines of the various scalar quantities constructed from curvature tensors and kinematic quantities have meaning as the cosmic time function approaches zero from below, irrespective of whether or not, due to the degeneracy of the metric, the scalar quantities concerned actually exist at $\bar{T}=0$. In view of the above, the requirement that the conformal metric be degenerate thus does not render the definitions of an AFEU and an AFS dysfunctional, in spite of the departure from the high degree of regularity built into the conformal techniques originally developed by Penrose in the differential-topological context for characterisation of asymptotically-flat space-times and which were subsequently utilised in the IPS-framework. The lesser degree of regularity required by the new definitions is indeed sufficient to prove important physical results, relating to the final states of the universe encoded by these definitions which can therefore be employed in a useful way in the geometric discussion of asymptotic states in a class of cosmologies. 

In particular, by making use of the regularity of the unphysical reference congruence and the degeneracy, it is possible to derive matter independent physical results concerning the behaviour of the expansion $\theta$ at late times in the AFEU/AFS-framework. Furthermore, the notion of strong curvature will be analysed and using the continuity and degeneracy of the new definitions it will be shown that, for a wide class of cosmologies admitting an AFS, the singularity turns out to be a \emph{deformationally strong singularity} for timelike geodesics, which provides some information about curvature and the geometric structure \cite{HoehnScott2}. It also underlines the final physical character of the AFS. 

We emphasise that, along with the degeneracy of the conformal metric and the regularity of the unphysical fluid flow, the conformal factor plays a key role in the constructed frameworks. The results and conclusions of section \ref{confiso} are based on its ``regular'' asymptotic differentiability and monotonicity properties and do not hold under different circumstances. Utilising a conformal factor which satisfies different conditions, such as, e.g., an oscillatory behaviour, might significantly alter conclusions and the physical properties of an associated conformal framework. Additionally, we have already seen, that the requirements on the conformal factor, provided by the definitions, do not uniquely define it, but rather leave a considerable freedom in choosing it, which, in view of constructing a conformal structure for a specific cosmology, we regard as an advantage. 

In the light of \emph{quiescent cosmology} and the WCH we are more
interested in the definitions of an AFEU and an AFS. Given that we have not made explicit use of an initial state in the construction of these definitions, we note that
they should, in principal, also be compatible with inflationary
cosmologies which evolve into asymptotically anisotropic
cosmological futures.

The examination of the precise degree of anisotropy in the AFS and the
AFEU, still remains to be finalised. Thus far, the analysed
example cosmologies satisfying these definitions merely indicate
that the AFS and AFEU admit a great variety of anisotropic future
evolutions \cite{HoehnScott2}.

At this point we would like to comment briefly on the relationship of the present framework to another approach, which has been in vogue during the last years, namely the \emph{dynamical systems approach} \cite{DS,UEWE,LEUW,RU05}. Both approaches are based on conformal rescalings and exhibit a certain ``duality'' in their features, as we will point out in the sequel. While the choice of a conformal factor in the AFEU/AFS-framework is of a rather mathematical nature with the aim to render the conformal metric non-singular (although degenerate) and thereby does not usually equip the conformal factor with a particular dimension (the choices in the Kantowski and Kantowski-Sachs models were dimensionless), the motivation for the conformal factor in the dynamical systems approach has a physical basis. In order to regularise the EFE in the vicinity of singularities via conformal rescalings, the Raychaudhuri equation and singularity theorems suggest that it is often advantageous to employ a conformal factor which is proportional to $\theta^{-1}$ \cite{gensing}, thereby factoring out the dominant scale, as well as providing the conformal factor with the dimension ``length'' and rendering the conformal metric dimensionless \cite{RU05,gensing}. This choice yields a state space and a set of regularised dimensionless equations for expansion-normalised variables, which are equivalent to the EFE and from which the dynamical equations for the conformal factor, i.e., equivalently $\theta$, decouple. It is therefore the reduced dimensionless system which contains the essential physical dynamics (this is related to the infamous conformal mode problem which appears in quantum gravity approaches). This choice, furthermore, underlines the role of \emph{self-similar}, i.e., scale-invariant, solutions as important building blocks for the understanding of non-scale-invariant solutions since the former correspond to equilibrium points in state space (hence the dynamical systems approach is also referred to as the \emph{conformal scale-invariant method}) \cite{DS,CC}. Whereas the conformal factor of the dynamical systems approach vanishes towards a singularity and diverges in the ever-expanding case, the conformal factor of the AFEU/AFS-framework carries the diverging properties of the metric in both scenarios. The central object of the dynamical systems approach is the conformal contravariant 3-metric (or equivalently the Hubble-normalised spatial frame components) and the analysis focuses on the regularised EFE and an associated state space. The central objects of the new framework presented in this article, on the other hand, are the conformal covariant metric, the conformal factor and, when it comes to matter dynamics, the conformal reference fluid, and the focus of the new approach lies rather in the conformal geometry. 

The asymptotics further emphasise the ``duality''. The Hubble-normalisation in dynamical systems goes in hand with introducing a new dimensionless unphysical time parameter which takes an infinite value at asymptotic states \cite{DS,UEWE}. This stands in contrast to our framework, which maps the cosmological futures to finite cosmic time values (i.e., $\bar{T}=0$). The dynamical systems approach therefore effectively pushes the singularity away to infinity, while we adopt the strategy from asymptotically-flat space-times of ``pulling in'' the cosmological futures to a finite place. Finally, the asymptotic behaviour of the conformal spatial metric highlights the ``duality'' of the two approaches. On the one hand, the conformal covariant spatial metric of the dynamical systems method becomes singular on approach to a generic initial singularity (and to an IPS as well) while the conformal contravariant spatial metric becomes degenerate \cite{UEWE,LEUW,gensing}. This reflects asymptotic causal properties and is related to \emph{asymptotic silence} (see below). On the other hand, in a suitable frame, the conditions of both the AFEU- and AFS-definition translate into the opposite behaviour, namely a degenerate conformal covariant spatial metric and a singular conformal contravariant spatial metric. We therefore see, that degeneracy already plays an essential role in another conformal approach, even though this degeneracy is dual to the one introduced in the present article.

The dynamical systems approach has advantages when performing an asymptotic analysis of the EFE in the state space, since, due to the regularisation, one is dealing with finite-valued variables. This has been successfully employed in the spatially homogeneous context \cite{DS} and was promisingly extended to the inhomogeneous case and the generic analysis of cosmological initial singularities \cite{UEWE,LEUW,gensing,AELU}. Nevertheless, this technique also has disadvantages when compared to the newly constructed framework of this paper. For instance, the conformal structure of the IPS leads to advantages when formulating the initial value problem (see \cite{GW1985,GoodeColeyWainwright1992,Anguige} and references therein) and the non-singular behaviour of the conformal covariant metric of the AFEU/AFS-framework might offer advantages for numerical simulations when compared to the dynamical systems approach. Lastly, the new conformal structures, certainly, provide advantages for the geometric discussion of asymptotic structures, as is explicitly illustrated by the possibility of deriving results concerning the notion of strong curvature in the AFS scenario \cite{HoehnScott2}. 

The ``dual'' relationship between these two approaches is thus of considerable interest to relativistic cosmology and, consequently, warrants more research.

The dynamical systems techniques are, moreover, mostly applied to the simplified case of a perfect fluid with a linear barotropic equation of state (polytrope) in which case the inhomogeneous cosmologies admitting an IPS could be accommodated in the state space \cite{LEUW}. Analogously, it should also be possible to accommodate and analyse cosmologies satisfying the presented new definitions in the dynamical systems state space, once restricted to polytropic perfect fluids and a suitable reformulation of the new definitions has been performed. 

We now briefly return to the notion of \emph{self-similarity} and \emph{asymptotic silence}. The homogeneous example cosmologies presented in this article are known to be asymptotically self-similar which means that these solutions are asymptotic to a self-similar, i.e., scale-invariant, solution of the EFE \cite{DS}. Since it has recently been shown that \emph{asymptotic self-similarity breaking} occurs generically (in different ways) at both the initial singularity and towards the cosmological future for spatially homogeneous cosmologies \cite{Wain99,HHTW,WHU,CC}, it seems reasonable that this state of affairs will hold in the more general inhomogeneous context as well, and therefore one might wonder whether the definitions of an AFEU and an AFS, together with their example cosmologies, may be misleading for the generic case. In our forthcoming article \cite{HoehnScott2}, however, we will display an inhomogeneous AFEU-example model, which is \emph{not} asymptotically self-similar. The conformal framework presented in this article is hence indeed relevant to a rather generic scenario. The general significance of asymptotic self-similarity breaking in AFEU- and AFS-cosmologies is an interesting topic and requires further investigation.

Likewise, it is of interest to elaborate the relation of the AFS to \emph{asymptotic silence} and its breaking. Asymptotic silence means that communication is interrupted due to vanishing particle horizons which form on approach to the singularity as a consequence of strong gravity effects and is related to the vanishing of the Hubble-normalised spatial frame components \cite{UEWE,gensing} and conformal properties \cite{RU05}. The Kantowski-Sachs solution, i.e., the AFS example, is asymptotically silent at $\bar{T}=0$, since the Hubble-normalised spatial frame components vanish. The opposite scenario, namely asymptotic silence breaking, has been found at special initial singularities \cite{LUW2006}. Current research indicates, however, that generic initial singularities are asymptotically silent \cite{UEWE,gensing,AELU}, while the possibility of asymptotic silence breaking at a future singularity has not been sufficiently investigated to already discuss the relevance of this scenario to the AFS at this point. We leave this question open for future research. 

Finally, since, from our current knowledge
of the universe, it is not clear whether our universe will expand
indefinitely or recollapse in finite future time, we suggest that
the definitions of an AFEU and an AFS should both be regarded as a
completion to the definition of an IPS, and conjecture that the
conjunction of these definitions provides a possible version of a
complete mathematical formalisation of \emph{quiescent cosmology}
and the WCH.

\appendix

\section{Technical properties relating to conformal factors}\label{conffac}

A number of technical properties of a monotonic conformal factor,
which is a function of a cosmic time function $T$, are necessary
for the derivation and proofs of several implications of the
employed conformal structures, e.g., of the results of section
\ref{confiso}, which will be proven in the following appendix. The
quick reader may skip this rather technical appendix and jump
straight to \ref{proofs} and refer back to the following lemmas
once going through the proofs of the above mentioned results.

In this appendix we will examine the existence and the possible values of the limit of several ratios of the conformal factor and its derivatives. Some technicalities, concerning the conformal factor in the new definitions of section \ref{newdefs}, will subsequently become clear.

We will cover the cases of cosmic time functions which approach $0$ from above or below and conformal factors which diverge or vanish at $T=0$. Recall that if the conformal factor and the cosmic time function are denoted with a $\ \bar{} \ $, we are specifically referring to the situation in which the cosmic time function approaches $0$ from below. We begin by analysing the possible values of $\bar{\lambda}=\mathop {\lim }\limits_{\bar{T} \to 0^{-}}\bar{L}$, where $\bar{L}\equiv\frac{\bar{\Omega}''\bar{\Omega}}{(\bar{\Omega}')^{2}}$, for the case $\bar{\Omega}(0)=\infty$.
\begin{lem}\label{conffac1.lem} {If $\bar{T}$ is a cosmic time function which approaches $0$ from below, and
\begin{enumerate}
\item $\bar{\Omega}\left(\bar{T}\right)$ is strictly monotonically increasing, positive and $C^{2}$ on $I=[-c,0)$ (where $c>0$),
\item $\bar{\Omega}\left(0\right)=\infty$, and
\item $\bar{\lambda}$ as defined above exists,
\end{enumerate}
then $\bar{\lambda}\geq 1$.}
\end{lem}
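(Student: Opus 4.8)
The plan is to rewrite the combination $\bar{L}-1$ as an exact derivative of a simple auxiliary function, so that the hypothesis $\bar{\Omega}(0)=\infty$ can be fed in through the divergence of the logarithmic derivative of $\bar{\Omega}$. First I would observe that, since $\bar{\Omega}$ is strictly increasing and $\bar{L}$ (hence $\bar{\lambda}$) is assumed to exist near $\bar{T}=0^{-}$, one necessarily has $\bar{\Omega}'>0$ on some interval $[\bar{T}_{0},0)$; otherwise $\bar{L}=\bar{\Omega}''\bar{\Omega}/(\bar{\Omega}')^{2}$ would fail to be defined there. On this interval set $g:=\bar{\Omega}/\bar{\Omega}'$, the reciprocal of the logarithmic derivative, which is positive and $C^{1}$. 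A direct computation gives the key identity
\begin{equation}
g' = \frac{(\bar{\Omega}')^{2}-\bar{\Omega}\,\bar{\Omega}''}{(\bar{\Omega}')^{2}} = 1-\bar{L},
\end{equation}
so that $\bar{\lambda} = 1-\mathop{\lim}\limits_{\bar{T}\to 0^{-}} g'$. Proving $\bar{\lambda}\geq 1$ is therefore equivalent to proving $\mathop{\lim}\limits_{\bar{T}\to 0^{-}} g' \leq 0$.

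I would establish the latter by contradiction. Suppose $\ell:=\mathop{\lim}\limits_{\bar{T}\to 0^{-}} g' = 1-\bar{\lambda} >0$. Then for all $\bar{T}$ sufficiently close to $0^{-}$, say on $[\bar{T}_{1},0)\subseteq[\bar{T}_{0},0)$, one has $g'(\bar{T}) \geq \ell/2>0$, so $g$ is strictly increasing there and hence bounded below by the positive constant $g(\bar{T}_{1})>0$. Consequently the logarithmic derivative is bounded above,
\begin{equation}
\frac{\bar{\Omega}'}{\bar{\Omega}} = \frac{1}{g} \leq \frac{1}{g(\bar{T}_{1})} \qquad \text{on } [\bar{T}_{1},0).
\end{equation}

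Integrating this bound over $[\bar{T}_{1},\bar{T}]$ yields $\ln\bar{\Omega}(\bar{T}) \leq \ln\bar{\Omega}(\bar{T}_{1}) + |\bar{T}_{1}|/g(\bar{T}_{1})$ for every $\bar{T}\in[\bar{T}_{1},0)$, so $\bar{\Omega}$ remains bounded as $\bar{T}\to 0^{-}$. This contradicts the hypothesis $\bar{\Omega}(0)=\infty$, and the contradiction forces $\ell\leq 0$, i.e.\ $\bar{\lambda}\geq 1$.

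The arguments are elementary once the identity $g'=1-\bar{L}$ is in hand, so the only point requiring care---and the main obstacle---is the bookkeeping at the endpoint: one must first secure $\bar{\Omega}'>0$ near $0^{-}$ (to make $g$, and indeed $\bar{L}$ itself, well defined), and then exploit the fact that the \emph{existence} of $\bar{\lambda}$ guarantees $g'$ has a genuine limit, which is exactly what upgrades ``$g$ eventually increasing'' to the clean uniform lower bound $g\geq g(\bar{T}_{1})$ driving the contradiction. This structure also explains the threshold value $\bar{\lambda}=1$: it is precisely the borderline $\lim g'=0$ separating a logarithmic derivative whose integral diverges (allowing $\bar{\Omega}\to\infty$) from one whose integral converges (forcing $\bar{\Omega}$ to stay bounded).
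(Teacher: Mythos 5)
Your proof is correct and is essentially the paper's own argument in a different guise: the paper sets $\phi=\ln\bar{\Omega}$ and shows that $\bar{\lambda}<1$ forces $\phi''<0$ near $0^{-}$, so that $\phi'=\bar{\Omega}'/\bar{\Omega}$ is decreasing and hence bounded above, while your $g=\bar{\Omega}/\bar{\Omega}'=1/\phi'$ with the identity $g'=1-\bar{L}$ encodes exactly the same monotonicity statement through the reciprocal. Both proofs then integrate (the paper via the mean value theorem) to bound $\ln\bar{\Omega}$ above, contradicting $\bar{\Omega}(0)=\infty$, so there is no substantive difference in method.
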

\begin{proof} Define $\phi =\ln\bar{\Omega}$ on $I$. Then
\begin{eqnarray}
\phi '=\frac{\bar{\Omega} '}{\bar{\Omega}} \qquad \mbox{and}
\qquad \phi ''=\bar{\Omega}^{-2}\left[\bar{\Omega} ''\bar{\Omega}
-(\bar{\Omega }')^{2}\right].\label{lam1.eqn}
\end{eqnarray}

From condition (i) it follows immediately that $\phi$ is a
strictly monotonically increasing function on $I$.

Suppose $\bar{\lambda} <1$. Since $\bar{\Omega}\left(\bar{T}\right)$ is $C^{2}$ and $\bar{\Omega}'\left(\bar{T}\right)>0$ on $I$, it follows that $\bar{L}$ is continuous on $I$. So $\exists$ $-\eta$ $\in$ $I$, such that $\bar{L}\left(\bar{T}\right)<1$ $\forall$ $\bar{T}\in J=[-\eta,0)$.
\begin{eqnarray}\fl
\Rightarrow \qquad \frac{ \bar{\Omega}\bar{\Omega}
''}{(\bar{\Omega} ')^{2}}<1 \qquad \Leftrightarrow \qquad
\bar{\Omega} ''\bar{\Omega} - (\bar{\Omega }')^{2}<0 \qquad
\Leftrightarrow \qquad \phi ''<0,
\end{eqnarray}
i.e.,\ $\phi '$ is a strictly monotonically decreasing function on $J$ and thus $\phi '$ is bounded above by $\phi '\left(-\eta\right)$ on $J$. With the help of the mean value theorem we find that
\begin{eqnarray}
\frac{\phi(\bar{T})-\phi(-\eta)}{\bar{T}+\eta}&=&\phi '\left(\xi\right)\leq\phi '\left(-\eta\right) \qquad \mbox{where} \qquad \xi\in[-\eta,\bar{T}]\\
\Rightarrow \qquad \phi\left(\bar{T}\right)
&<&\phi\left(-\eta\right)+\eta\phi '\left(-\eta\right) \qquad
\forall \qquad \bar{T}\in J.\label{lam2.eqn}
\end{eqnarray}
The two terms on the r.h.s.\ of (\ref{lam2.eqn}) are both finite
by construction, therefore $\phi$ is bounded above on $J$. This is
a contradiction to condition (ii) which says that
$\phi\rightarrow\infty$ as $\bar{T}\rightarrow 0^{-}$.

Hence, $\bar{\lambda}\geq 1$.
\end{proof}
A similar result holds for $\bar{\Omega}(0)=0$.
\begin{lem} \label{conffac2.lem}{If $\bar{T}$ is a cosmic time function which approaches $0$ from below, and
\begin{enumerate}
\item $\bar{\Omega}\left(\bar{T}\right)$ is strictly monotonically decreasing, positive and at least $C^{2}$ on $I=[-c,0)$ (where $c>0$), and $C^{0}$ on $[-c,0]$,
\item $\bar{\Omega}\left(0\right)=0$, and
\item $\bar{\lambda}$ as defined above exists,
\end{enumerate}
then $\bar{\lambda}\leq 1$.}
\end{lem}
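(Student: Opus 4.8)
The plan is to mirror the proof of Lemma \ref{conffac1.lem}, reversing the relevant inequalities to account for the fact that here $\bar{\Omega}$ is decreasing and \emph{vanishes}, rather than increasing and diverging. As before, I would set $\phi=\ln\bar{\Omega}$ on $I=[-c,0)$; this is well defined and $C^{2}$ since $\bar{\Omega}$ is positive and $C^{2}$ there, and from condition (ii) one has $\phi\to-\infty$ as $\bar{T}\to 0^{-}$ (because $\bar{\Omega}\to 0^{+}$). The same computation as in (\ref{lam1.eqn}) gives $\phi'=\bar{\Omega}'/\bar{\Omega}$ and $\phi''=\bar{\Omega}^{-2}[\bar{\Omega}''\bar{\Omega}-(\bar{\Omega}')^{2}]$, so that the sign of $\phi''$ coincides with the sign of $\bar{L}-1$; in particular $\bar{L}>1\Leftrightarrow\phi''>0$.

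I would argue by contradiction, assuming $\bar{\lambda}>1$. Since $\bar{\Omega}$ is $C^{2}$ with $\bar{\Omega}'\neq 0$ on $I$, $\bar{L}$ is continuous there, so there exists $-\eta\in I$ with $\bar{L}(\bar{T})>1$ for all $\bar{T}\in J=[-\eta,0)$. Hence $\phi''>0$ on $J$, i.e.\ $\phi'$ is strictly increasing on $J$ and therefore bounded below by $\phi'(-\eta)$. Applying the mean value theorem on $[-\eta,\bar{T}]$ for $\bar{T}\in J$ then yields, for some $\xi\in(-\eta,\bar{T})$,
\begin{eqnarray}
\frac{\phi(\bar{T})-\phi(-\eta)}{\bar{T}+\eta}&=&\phi'(\xi)\geq\phi'(-\eta),
\end{eqnarray}
and since $\bar{T}+\eta>0$ this gives $\phi(\bar{T})\geq\phi(-\eta)+\phi'(-\eta)(\bar{T}+\eta)$ for all $\bar{T}\in J$. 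As $\bar{T}\to 0^{-}$ the right-hand side stays finite, being bounded below by $\phi(-\eta)+\phi'(-\eta)\eta$, so $\phi$ is bounded below on $J$. This contradicts $\phi\to-\infty$, and therefore forces $\bar{\lambda}\leq 1$.

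The argument is essentially routine given Lemma \ref{conffac1.lem}; the only point requiring care is the bookkeeping of signs. Relative to Lemma \ref{conffac1.lem} every monotonicity and inequality is reversed: $\phi$ now \emph{decreases} to $-\infty$, the hypothesis for contradiction becomes $\bar{\lambda}>1$ (rather than $<1$), which makes $\phi'$ \emph{increasing} (rather than decreasing), and the mean value estimate produces a finite \emph{lower} bound (rather than an upper bound) on $\phi$. I expect the main thing to verify is that this reversed chain is internally consistent---in particular that $\phi'$ increasing indeed delivers the lower bound $\phi'(\bar{T})\geq\phi'(-\eta)$, and that the positivity of $\bar{T}+\eta$ together with $\phi'(-\eta)<0$ is used correctly so that the term $\phi'(-\eta)(\bar{T}+\eta)$ remains bounded as $\bar{T}\to 0^{-}$.
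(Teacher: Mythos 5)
Your proposal is correct and follows essentially the same route as the paper's own proof: both define $\phi=\ln\bar{\Omega}$, assume $\bar{\lambda}>1$ for contradiction, use continuity of $\bar{L}$ to get $\phi''>0$ on an interval $J=[-\eta,0)$, and apply the mean value theorem to obtain a finite lower bound on $\phi$, contradicting $\phi\to-\infty$. Your explicit sign bookkeeping (that $\phi'(-\eta)<0$ and $\bar{T}+\eta<\eta$ give the lower bound $\phi(-\eta)+\eta\phi'(-\eta)$) is exactly what the paper's inequality (\ref{lam4.eqn}) asserts.
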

\begin{proof} As before, define $\phi =\ln\bar{\Omega}$ on $I$. Condition (i) immediately implies that $\phi$ is a strictly monotonically decreasing function on $I$.

Suppose $\bar{\lambda} >1$. Since $\bar{\Omega}\left(\bar{T}\right)$ is $C^{2}$ and $\bar{\Omega}'\left(\bar{T}\right)<0$ on $I$, it follows that $\bar{L}$ is continuous on $I$. So $\exists$ $-\eta$ $\in$ $I$, such that $\bar{L}\left(\bar{T}\right)>1$ $\forall$ $\bar{T}\in J=[-\eta,0)$. By the same arguments as in the previous lemma, this implies $\phi ''>0$, i.e., $\phi '$ is a strictly monotonically increasing function on $J$ and thus $\phi '$ is bounded below by $\phi '\left(-\eta\right)$ on $J$. The mean value theorem implies
\begin{eqnarray}
\frac{\phi(\bar{T})-\phi(-\eta)}{\bar{T}+\eta}&=&\phi '\left(\xi\right)\geq\phi '\left(-\eta\right) \qquad \mbox{where} \qquad \xi\in[-\eta,\bar{T}]\\
\Rightarrow \qquad \phi\left(\bar{T}\right)
&>&\phi\left(-\eta\right)+\eta\phi '\left(-\eta\right) \qquad
\forall \qquad \bar{T}\in J.\label{lam4.eqn}
\end{eqnarray}
The two terms on the r.h.s.\ of (\ref{lam4.eqn}) are both finite
by construction, therefore $\phi$ is bounded below on $J$. This
contradicts condition (ii) which indicates that $\phi\rightarrow
-\infty$ as $\bar{T}\rightarrow 0^{-}$.

Hence, $\bar{\lambda}\leq 1$.
\end{proof}

\begin{rem}\label{conffac1.rem}{Scott \emph{\cite{GVR}} has already proven that if $T\rightarrow 0^{+}$, $\Omega(0)=0$ and $\Omega$ is positive and at least $C^{2}$ on $(0,c]$, $c>0$, and continuous on $[0,c]$, then $\lambda\leq 1$. If one is aware of the signs, by Lemma \ref{conffac1.lem} one can easily verify that under the same conditions, and $\Omega(0)=\infty$, one again finds $\lambda\geq 1$.}
\end{rem}

The limit of $\frac{\bar{\Omega}'}{\bar{\Omega}}$ as $\bar{T}\rightarrow 0^{-}$ turns out to be an essential property in the derivation of some results. We will first prove the existence and the value of this limit for the case $\bar{\Omega}(0)=\infty$.

\begin{lem}\label{conffac3.lem}{Let $\bar{T}$ be a cosmic time function which approaches $0$ from below and let $\bar{\Omega}(\bar{T})$ be positive, strictly monotonically increasing and $C^{2}$ on some interval $[-c,0)$, $c>0$. If, furthermore, $\bar{\Omega}(0)=\infty$ and $\bar{\lambda}$ as defined above exists, $\bar{\lambda}\neq 1$, then $\mathop {\lim }\limits_{\bar{T} \to 0^{-}}\frac{\bar{\Omega}'}{\bar{\Omega}}\mbox{\emph{(exists)}}=+\infty$.}
\end{lem}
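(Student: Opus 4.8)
The plan is to pass to the logarithm $\phi=\ln\bar{\Omega}$, exactly as in the proofs of Lemmas \ref{conffac1.lem} and \ref{conffac2.lem}, and to convert the hypothesis on $\bar{L}$ into sign information on $\phi''$. First I would record the consequences of the hypotheses in logarithmic form: by condition (i), $\phi'=\bar{\Omega}'/\bar{\Omega}>0$ on $[-c,0)$, so $\phi$ is strictly increasing; and by $\bar{\Omega}(0)=\infty$ we have $\phi\to+\infty$ as $\bar{T}\to 0^{-}$. Using (\ref{lam1.eqn}) I would then compute the key differential identity
\begin{eqnarray}
\phi''=\bar{\Omega}^{-2}\left[\bar{\Omega}''\bar{\Omega}-(\bar{\Omega}')^{2}\right]=(\phi')^{2}\left(\bar{L}-1\right),
\end{eqnarray}
so that the sign of $\phi''$ is governed entirely by whether $\bar{L}$ exceeds $1$.

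Next I would pin down the regime. Since $\bar{\lambda}$ exists and, by Lemma \ref{conffac1.lem}, satisfies $\bar{\lambda}\geq 1$, the assumption $\bar{\lambda}\neq 1$ forces $\bar{\lambda}>1$. Because $\bar{\Omega}$ is $C^{2}$ with $\bar{\Omega}'>0$, the ratio $\bar{L}$ is continuous, so there exists $\eta\in(0,c]$ with $\bar{L}-1$ bounded below by a positive constant on $J=[-\eta,0)$; hence $\phi''>0$ on $J$ and $\phi'$ is strictly increasing there. A positive, monotonically increasing function has a limit as $\bar{T}\to 0^{-}$ lying in $(0,+\infty]$, which already secures existence of the limit.

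To complete the proof I would exclude a finite limit by the same mean-value-theorem estimate used in the earlier lemmas. If $\phi'\to\ell<\infty$, then monotonicity gives $\phi'\leq\ell$ on $J$, so by the mean value theorem $\phi(\bar{T})\leq\phi(-\eta)+\ell\,\eta$ for all $\bar{T}\in J$; thus $\phi$ would be bounded above on $J$, contradicting $\phi\to+\infty$. Therefore the limit is $+\infty$, i.e.\ $\mathop{\lim}\limits_{\bar{T}\to 0^{-}}\bar{\Omega}'/\bar{\Omega}=+\infty$.

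The only genuinely delicate point is the upgrade from ``monotone, hence convergent'' to ``the limit is infinite''. This is precisely where the divergence $\bar{\Omega}(0)=\infty$ (equivalently $\phi\to+\infty$) is indispensable, and where invoking $\bar{\lambda}>1$ through Lemma \ref{conffac1.lem} is essential: the borderline value $\bar{\lambda}=1$ would be compatible with a finite limit, which is exactly why it is excluded by hypothesis. Everything else reduces to the routine differentiation above and the monotone-limit argument already rehearsed in Lemmas \ref{conffac1.lem} and \ref{conffac2.lem}.
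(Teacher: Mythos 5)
Your proof is correct and follows essentially the same route as the paper's: pass to $\phi=\ln\bar{\Omega}$, invoke Lemma \ref{conffac1.lem} to get $\bar{\lambda}>1$, deduce $\phi''>0$ near $\bar{T}=0$ so that the monotone function $\phi'$ has a limit, and then rule out a finite limit because a bounded $\phi'$ on a finite interval would force $\phi$ to be bounded, contradicting $\bar{\Omega}(0)=\infty$. The only cosmetic difference is that you exclude the finite limit via the mean value theorem while the paper integrates $\bar{\Omega}'/\bar{\Omega}$ and appeals to the fundamental theorem of calculus; these are the same estimate in different clothing.
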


\begin{proof}
By Lemma \ref{conffac1.lem} we know that, in this case, $\bar{\lambda}>1$, and proceeding as in the proof of Lemma \ref{conffac1.lem} we find an interval $[-\eta,0)$, $\eta>0$, on which $\phi ''>0$, i.e., on which $\phi '$ is a strictly monotonically increasing function. Thus, $\mathop {\lim }\limits_{\bar{T} \to 0^{-}}\phi'=\mathop {\lim }\limits_{\bar{T} \to 0^{-}}\frac{\bar{\Omega}'}{\bar{\Omega}}$ exists. We note that  $\phi '>0$ on $[-\eta,0)$, which implies that $\mathop {\lim }\limits_{\bar{T} \to 0^{-}}\phi'>0$.

Now assume that
\begin{eqnarray}
\mathop {\lim }\limits_{\bar{T} \to 0^{-}}\frac{\bar{\Omega}
'}{\bar{\Omega}}=k, \qquad \mbox{where} \qquad 0<k<\infty.
\end{eqnarray}
It follows that $\displaystyle\int^0_{-\eta}\frac{\bar{\Omega}'}{\bar{\Omega}} \, d\bar{T}$ is bounded. If we consider the function $F\left(\bar{T}\right) = \ln\bar{\Omega}$, $F'\left(\bar{T}\right) = \frac{\bar{\Omega}'}{\bar{\Omega}}$, we encounter a contradiction, however, since $F\left(\bar{T}\right) \rightarrow \infty$ as $\bar{T}\rightarrow 0^{-}$.

Thus $\mathop {\lim }\limits_{\bar{T} \to 0^{-}}\frac{\bar{\Omega}'}{\bar{\Omega}}=+\infty$.
\end{proof}

An almost identical result holds for the case $\bar{\Omega}(0)=0$.

\begin{lem}\label{conffac4.lem}{Let $\bar{T}$ be a cosmic time function which approaches $0$ from below and let $\bar{\Omega}(\bar{T})$ be positive, strictly monotonically decreasing and $C^{2}$ on some interval $[-c,0)$ and continuous on $[-c,0]$, $c>0$. If, furthermore, $\bar{\Omega}(0)=0$ and $\bar{\lambda}$ as defined above exists, $\bar{\lambda}\neq 1$, then $\mathop {\lim }\limits_{\bar{T} \to 0^{-}}\frac{\bar{\Omega}'}{\bar{\Omega}}\mbox{\emph{(exists)}}=-\infty$.}
\end{lem}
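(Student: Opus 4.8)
The plan is to mirror the argument of Lemma \ref{conffac3.lem}, reversing each monotonicity and sign. First I would invoke Lemma \ref{conffac2.lem}: its hypotheses coincide with conditions (i)--(iii) of the present statement, so it delivers $\bar{\lambda}\leq 1$, and combined with the assumption $\bar{\lambda}\neq 1$ this sharpens to $\bar{\lambda}<1$.

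Next, set $\phi=\ln\bar{\Omega}$ on $[-c,0)$, so that, exactly as in (\ref{lam1.eqn}), $\phi'=\bar{\Omega}'/\bar{\Omega}$ and $\phi''=\bar{\Omega}^{-2}[\bar{\Omega}''\bar{\Omega}-(\bar{\Omega}')^{2}]$, the sign of $\phi''$ therefore agreeing with that of $\bar{L}-1$. Because $\bar{\Omega}$ is strictly decreasing and positive, $\phi'<0$ throughout $[-c,0)$. Since $\bar{L}$ is continuous there with $\bar{L}\to\bar{\lambda}<1$, there is an interval $[-\eta,0)$, $\eta>0$, on which $\bar{L}<1$ and hence $\phi''<0$; thus $\phi'$ is strictly monotonically decreasing on $[-\eta,0)$. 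A monotone function possesses a (possibly infinite) one-sided limit, so $\mathop{\lim}\limits_{\bar{T}\to 0^{-}}\phi'$ exists in $[-\infty,0)$.

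It then remains only to exclude a finite value. Suppose $\mathop{\lim}\limits_{\bar{T}\to 0^{-}}\phi'=k$ with $-\infty<k<0$. Being monotone with a finite limit, $\phi'$ is bounded on $[-\eta,0)$, so $\int_{-\eta}^{0}\phi'\,d\bar{T}$ is finite. On the other hand, with $F=\ln\bar{\Omega}=\phi$ and $F'=\bar{\Omega}'/\bar{\Omega}$, we have $\int_{-\eta}^{0}\phi'\,d\bar{T}=\mathop{\lim}\limits_{\bar{T}\to 0^{-}}\phi(\bar{T})-\phi(-\eta)$. But condition (ii), $\bar{\Omega}(0)=0$ with $\bar{\Omega}>0$, forces $\phi=\ln\bar{\Omega}\to-\infty$ as $\bar{T}\to 0^{-}$, so this integral is $-\infty$, a contradiction. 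Hence the limit is not finite, and as it is negative it must equal $-\infty$, i.e. $\mathop{\lim}\limits_{\bar{T}\to 0^{-}}\bar{\Omega}'/\bar{\Omega}=-\infty$.

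I expect no genuine obstacle, since this is the precise sign-reversed analogue of Lemma \ref{conffac3.lem}. The only point requiring care is the bookkeeping of signs---that the strict decrease of $\bar{\Omega}$ gives $\phi'<0$, while $\bar{\lambda}<1$ forces the strict decrease of $\phi'$ (and hence the existence of the limit)---together with the observation that a finite one-sided limit of $\phi'$ would make $\int_{-\eta}^{0}\phi'$ finite, in direct conflict with $\ln\bar{\Omega}\to-\infty$.
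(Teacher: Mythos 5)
Your proposal is correct and follows essentially the same route as the paper's own proof: invoking Lemma \ref{conffac2.lem} to get $\bar{\lambda}<1$, passing to $\phi=\ln\bar{\Omega}$ to obtain an interval $[-\eta,0)$ on which $\phi''<0$ so that the monotone (negative) $\phi'$ has a limit, and then ruling out a finite limit by noting that boundedness of $\int_{-\eta}^{0}\phi'\,d\bar{T}$ contradicts $\ln\bar{\Omega}\to-\infty$. No discrepancies worth flagging.
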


\begin{proof}
Lemma \ref{conffac2.lem} implies that $\bar{\lambda}<1$ and proceeding as in the proof of Lemma \ref{conffac2.lem} we find that $\exists$ $-\eta\in[-c,0)$ such that $\phi ''<0$ on $[-\eta,0)$, i.e., such that $\phi '$ is a strictly monotonically decreasing function on $[-\eta,0)$. Hence, $\mathop {\lim }\limits_{\bar{T} \to 0^{-}}\phi '=\mathop {\lim }\limits_{\bar{T} \to 0^{-}}\frac{\bar{\Omega}'}{\bar{\Omega}}$ exists. We note that  $\phi '<0$ on $[-\eta,0)$, which implies that $\mathop {\lim }\limits_{\bar{T} \to 0^{-}}\phi'<0$.

Now suppose that
\begin{eqnarray}
\mathop {\lim }\limits_{\bar{T} \to 0^{-}}\frac{\bar{\Omega}
'}{\bar{\Omega}}=k, \qquad \mbox{where} \qquad -\infty<k<0.
\end{eqnarray}
It follows that $\displaystyle\int^0_{-\eta}\frac{\bar{\Omega}'}{\bar{\Omega}} \, d\bar{T}$ is bounded. If we consider the function $F\left(\bar{T}\right) = \ln\bar{\Omega}$, $F'\left(\bar{T}\right) = \frac{\bar{\Omega}'}{\bar{\Omega}}$, we encounter a contradiction, however, since $F\left(\bar{T}\right) \rightarrow -\infty$ as $\bar{T}\rightarrow 0^{-}$.

Thus $\mathop {\lim }\limits_{\bar{T} \to 0^{-}}\frac{\bar{\Omega}'}{\bar{\Omega}}=-\infty$.
\end{proof}

The existence of these limits for $\bar{\lambda}=1$ is not evident. This case is therefore omitted in the new definitions in section \ref{newdefs} (as was the case in Definition \ref{ISdef1.def}).

\begin{rem}\label{conffac2.rem}
{If $T\rightarrow 0^{+}$, Lemmas \ref{conffac3.lem} and
\ref{conffac4.lem} are clearly also true if the signs in the
assertions are reversed. }
\end{rem}

The cases $\bar{\Omega}(0)=\infty$ and $\bar{\Omega}(0)=0$ lead to rather different behaviours of the ratio $\bar{M}\equiv\frac{\bar{\Omega}'}{\bar{\Omega}^{2}}$, which will be discussed next.

\begin{lem}\label{conffac5.lem}{Let $\bar{T}$ be a cosmic time function which approaches $0$ from below and let $\bar{\Omega}(\bar{T})$, with $\bar{\Omega}(0)=\infty$, be positive, strictly monotonically increasing and $C^{2}$ on some interval $[-c,0)$, $c>0$. If $\bar{\lambda}$ exists and $\bar{\lambda}\neq 1,2$, then $\mathop {\lim }\limits_{\bar{T} \to 0^{-}}\bar{M}\mbox{\emph{(exists)}}=\kappa$, and $0<\kappa\leq\infty$ if $\bar{\lambda}>2$ and $0\leq\kappa<\infty$ if $\bar{\lambda}<2$.}
\end{lem}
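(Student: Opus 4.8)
The plan is to decide the monotonicity of $\bar{M}=\bar{\Omega}'/\bar{\Omega}^{2}$ near $\bar{T}=0$ by differentiating it once and re-expressing the result through $\bar{L}$. A direct computation gives
\[
\bar{M}'=\frac{\bar{\Omega}''\bar{\Omega}^{2}-2\bar{\Omega}(\bar{\Omega}')^{2}}{\bar{\Omega}^{4}}=\frac{(\bar{\Omega}')^{2}}{\bar{\Omega}^{3}}\left(\frac{\bar{\Omega}''\bar{\Omega}}{(\bar{\Omega}')^{2}}-2\right)=\frac{(\bar{\Omega}')^{2}}{\bar{\Omega}^{3}}\,(\bar{L}-2).
\]
Since $\bar{\Omega}>0$ and $\bar{\Omega}'>0$ on $[-c,0)$ by hypothesis, the prefactor $(\bar{\Omega}')^{2}/\bar{\Omega}^{3}$ is strictly positive, so the sign of $\bar{M}'$ coincides with the sign of $\bar{L}-2$. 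This single identity is what makes the whole statement fall out, and I expect it to be the crux; everything afterwards is soft analysis.

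Next I would exploit the hypotheses on $\bar{\lambda}$. Because $\bar{\Omega}$ is $C^{2}$ with $\bar{\Omega}'\neq 0$, the function $\bar{L}$ is continuous on $[-c,0)$, and by assumption $\bar{\lambda}=\lim_{\bar{T}\to 0^{-}}\bar{L}$ exists with $\bar{\lambda}\neq 2$. Hence there is an $\eta\in(0,c]$ such that $\bar{L}-2$ keeps a constant nonzero sign on $J=[-\eta,0)$: positive if $\bar{\lambda}>2$ and negative if $\bar{\lambda}<2$. By the sign analysis above, $\bar{M}$ is therefore strictly monotonic on $J$---increasing when $\bar{\lambda}>2$ and decreasing when $\bar{\lambda}<2$. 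A monotone function on the half-open interval $J$ possesses a one-sided limit as $\bar{T}\to 0^{-}$ in $[-\infty,+\infty]$, which establishes the existence of $\kappa$.

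It then remains to pin down the range of $\kappa$, and here the key observation is that $\bar{M}=\bar{\Omega}'/\bar{\Omega}^{2}>0$ throughout $[-c,0)$. If $\bar{\lambda}>2$, then $\bar{M}$ is increasing on $J$, so $\kappa\geq\bar{M}(-\eta)>0$, while the limit may still be $+\infty$; this yields $0<\kappa\leq\infty$. If $\bar{\lambda}<2$, then $\bar{M}$ is decreasing on $J$ yet bounded below by $0$, so it converges to a finite value $\kappa\geq 0$; this yields $0\leq\kappa<\infty$. Combining the two cases gives exactly the claimed dichotomy. The only point requiring a little care is the endpoint behaviour of $\kappa$, which is settled entirely by the positivity of $\bar{M}$ together with the direction of monotonicity; note that no estimate on the rate of divergence of $\bar{\Omega}$ enters, and the excluded value $\bar{\lambda}=1$ plays no role here, only $\bar{\lambda}\neq 2$ being used.
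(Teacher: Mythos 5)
Your proof is correct and follows essentially the same route as the paper's: the identity $\bar{M}'=(\bar{L}-2)(\bar{\Omega}')^{2}/\bar{\Omega}^{3}$, the sign analysis of $\bar{L}-2$ near $\bar{T}=0$ via the existence of $\bar{\lambda}\neq 2$, and the combination of monotonicity with positivity of $\bar{M}$ to split into the two cases $\bar{\lambda}>2$ and $\bar{\lambda}<2$. Your closing observation that only $\bar{\lambda}\neq 2$ (not $\bar{\lambda}\neq 1$) is actually used is also accurate, since under the stated hypotheses Lemma \ref{conffac1.lem} already forces $\bar{\lambda}\geq 1$.
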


\begin{proof}
The conditions imply $\bar{M}>0$ on $[-c,0)$. Furthermore,
\begin{eqnarray}
\bar{M}'=\frac{\bar{\Omega}''\bar{\Omega}-2(\bar{\Omega}')^{2}}{\bar{\Omega}^{3}}=(\bar{L}-2)\frac{(\bar{\Omega}')^{2}}{\bar{\Omega}^{3}}\label{lem6.eqn}
\end{eqnarray}
and $\frac{(\bar{\Omega}')^{2}}{\bar{\Omega}^{3}}>0$ on $[-c,0)$ and $\bar{\lambda}>1$. There are two cases:
\begin{description}
\item[(a)] $\bar{\lambda}>2$, then $\exists$ $d>0$, such that $\bar{M}'>0$ on $[-d,0)$. Now $\bar{M}>0$, $\bar{M}'>0$ $\Rightarrow$ $\mathop {\lim }\limits_{\bar{T} \to 0^{-}}\bar{M}\mbox{(exists)}=\alpha\in\mathbb{R}^{+}\cup\{+\infty\}$.
\item[(b)] $\bar{\lambda}<2$, then $\exists$ $d>0$, such that $\bar{M}'<0$ on $[-d,0)$. Now $\bar{M}>0$, $\bar{M}'<0$ $\Rightarrow$ $\mathop {\lim }\limits_{\bar{T} \to 0^{-}}\bar{M}\mbox{(exists)}=\beta\in\mathbb{R}^{+}\cup\{0\}$.
\end{description}
\end{proof}


\begin{lem} \label{conffac6.lem}{If all conditions of Lemma \ref{conffac5.lem} are satisfied, except that here we assume that $\bar{\lambda}=2$ and $\bar{L}$ is a strictly monotonic function on $[-c,0)$, then $\mathop {\lim }\limits_{\bar{T} \to 0^{-}}\bar{M}\mbox{\emph{(exists)}}\geq0$.}
\end{lem}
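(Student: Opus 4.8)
The plan is to reuse the derivative identity established in the proof of Lemma \ref{conffac5.lem}, namely
\begin{eqnarray}
\bar{M}'=(\bar{L}-2)\frac{(\bar{\Omega}')^{2}}{\bar{\Omega}^{3}},\nonumber
\end{eqnarray}
see (\ref{lem6.eqn}), together with the two structural facts that $\bar{M}>0$ and $\frac{(\bar{\Omega}')^{2}}{\bar{\Omega}^{3}}>0$ on $[-c,0)$, both of which follow immediately from $\bar{\Omega}>0$ and $\bar{\Omega}'>0$. The whole argument then reduces to pinning down the sign of $\bar{L}-2$ on the interval.

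First I would exploit the extra hypothesis that $\bar{L}$ is strictly monotonic on $[-c,0)$ while $\bar{\lambda}=\mathop {\lim }\limits_{\bar{T} \to 0^{-}}\bar{L}=2$. A strictly monotonic function which converges to a finite limit must remain strictly on one side of that limit: if $\bar{L}$ is strictly increasing then $\bar{L}(\bar{T})<2$ for all $\bar{T}\in[-c,0)$, and if $\bar{L}$ is strictly decreasing then $\bar{L}(\bar{T})>2$ for all $\bar{T}\in[-c,0)$. In either case $\bar{L}-2$ has a constant, nonzero sign on the whole interval.

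Consequently $\bar{M}'$ has a constant sign on $[-c,0)$, so $\bar{M}$ is itself strictly monotonic there, and I would finish by splitting into the two cases. If $\bar{L}<2$, then $\bar{M}'<0$, so $\bar{M}$ is strictly decreasing and bounded below by $0$ (since $\bar{M}>0$); hence $\mathop {\lim }\limits_{\bar{T} \to 0^{-}}\bar{M}$ exists and is $\geq 0$. If $\bar{L}>2$, then $\bar{M}'>0$, so $\bar{M}$ is strictly increasing; being monotone, it has a limit in $\mathbb{R}^{+}\cup\{+\infty\}$, which is again $\geq 0$ (indeed $\geq\bar{M}(-c)>0$). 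This is exactly the bounded-monotone-limit reasoning already used in Lemma \ref{conffac5.lem}.

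The only genuinely delicate point is the claim that a strictly monotonic function approaching the limit $2$ stays strictly below (respectively above) $2$; this is precisely the step that the added monotonicity hypothesis on $\bar{L}$ is designed to supply. It is also exactly what fails in the borderline case $\bar{\lambda}=2$ without that hypothesis, since then $\bar{L}-2$ could change sign infinitely often near $\bar{T}=0$, $\bar{M}$ need no longer be eventually monotonic, and the existence of $\mathop {\lim }\limits_{\bar{T} \to 0^{-}}\bar{M}$ could not be concluded by this method; everything else is routine.
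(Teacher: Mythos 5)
Your proof is correct and follows essentially the same route as the paper: both use the identity $\bar{M}'=(\bar{L}-2)\frac{(\bar{\Omega}')^{2}}{\bar{\Omega}^{3}}$ from (\ref{lem6.eqn}), split into the two cases according to the side from which $\bar{L}$ approaches $2$ (your strictly increasing/decreasing dichotomy is equivalent to the paper's $\bar{L}\rightarrow 2^{-}$ versus $\bar{L}\rightarrow 2^{+}$), and conclude via monotone convergence of the positive function $\bar{M}$. Your observation that the monotonicity hypothesis forces $\bar{L}-2$ to have a fixed sign on the whole interval is a slightly sharper statement than the paper's ``on some subinterval $[-d,0)$'', but the argument is the same.
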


\begin{proof}
Since $\bar{L}$ is continuous and strictly monotonic on $[-c,0)$, there are two possibilities:
\begin{enumerate}
\item $\bar{L}\rightarrow 2^{+}$ as $\bar{T}\rightarrow 0^{-}$. By equation (\ref{lem6.eqn}) $\exists$ $d>0$ such that $\bar{M}>0$ and $\bar{M}'>0$ on $[-d,0)$. Thus, $\mathop {\lim }\limits_{\bar{T} \to 0^{-}}\bar{M}\mbox{(exists)}=\alpha\in \mathbb{R}^{+}\cup\{+\infty\}$.
\item $\bar{L}\rightarrow 2^{-}$ as $\bar{T}\rightarrow 0^{-}$. By equation (\ref{lem6.eqn}) $\exists$ $d>0$ such that $\bar{M}>0$ and $\bar{M}'<0$ on $[-d,0)$. Thus, $\mathop {\lim }\limits_{\bar{T} \to 0^{-}}\bar{M}\mbox{(exists)}=\beta\in\mathbb{R}^{+}\cup\{0\}$.
\end{enumerate}

\end{proof}

\begin{rem}\label{conffac4.rem}{In the case $T\rightarrow 0^{+}$ Lemmas \ref{conffac5.lem} and \ref{conffac6.lem} are clearly also true if the signs in the limit of $M$ are reversed.}
\end{rem}

For the case $\bar{\Omega}(0)=0$ the analyses are much simpler.

\begin{lem}\label{conffac7.lem}{Let $\bar{T}$ be a cosmic time function which approaches $0$ from below and let $\bar{\Omega}(\bar{T})$, with $\bar{\Omega}(0)=0$, be positive, strictly monotonically decreasing and $C^{2}$ on some interval $[-c,0)$ and continuous on $[-c,0]$, $c>0$. If $\bar{\lambda}$ exists and $\bar{\lambda}\neq 1$, then $\mathop {\lim }\limits_{\bar{T} \to 0^{-}}\bar{M}\mbox{\emph{(exists)}}=-\infty$.}
\end{lem}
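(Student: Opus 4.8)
The plan is to reduce the claim directly to Lemma \ref{conffac4.lem}. The first thing I would note is that the hypotheses of the present lemma---$\bar{\Omega}$ positive, strictly monotonically decreasing and $C^{2}$ on $[-c,0)$, continuous on $[-c,0]$, with $\bar{\Omega}(0)=0$, and $\bar{\lambda}$ existing with $\bar{\lambda}\neq 1$---coincide exactly with those of Lemma \ref{conffac4.lem}. That lemma therefore applies without modification and gives
\begin{eqnarray}
\mathop {\lim }\limits_{\bar{T} \to 0^{-}}\frac{\bar{\Omega}'}{\bar{\Omega}}=-\infty.\nonumber
\end{eqnarray}

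The key step is then the elementary factorisation $\bar{M}=\frac{\bar{\Omega}'}{\bar{\Omega}^{2}}=\frac{\bar{\Omega}'}{\bar{\Omega}}\cdot\frac{1}{\bar{\Omega}}$. Since $\bar{\Omega}$ is positive on $[-c,0)$ and $\bar{\Omega}(0)=0$, we have $\bar{\Omega}(\bar{T})\to 0^{+}$, so the factor $1/\bar{\Omega}$ tends to $+\infty$, while the factor $\bar{\Omega}'/\bar{\Omega}$ tends to $-\infty$ by the displayed limit above. Both factors thus possess definite, divergent limits of opposite sign, so their product possesses a limit equal to $-\infty$, which is precisely the assertion.

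There is essentially no hard part here, since all the analytic content is inherited from Lemma \ref{conffac4.lem} (which in turn uses Lemma \ref{conffac2.lem} to pin down $\bar{\lambda}<1$, together with a mean-value-theorem argument). The only point deserving a word of care is that one is multiplying two limits each of which is infinite; this is legitimate precisely because the individual limits are $-\infty$ and $+\infty$, so no indeterminate form arises and the product is unambiguously $-\infty$.

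Should a fully self-contained argument be preferred, one could instead argue directly from equation (\ref{lem6.eqn}): there $\bar{M}'=(\bar{L}-2)(\bar{\Omega}')^{2}/\bar{\Omega}^{3}$, and since $\bar{\lambda}<1<2$ one has $\bar{M}'<0$ on some $[-d,0)$, so $\bar{M}$---which is negative because $\bar{\Omega}'<0$ and $\bar{\Omega}^{2}>0$---is monotonic there and its limit exists in $[-\infty,0)$. A finite limit would force $-1/\bar{\Omega}$, whose derivative is $\bar{M}$, to stay bounded near $\bar{T}=0$, contradicting $-1/\bar{\Omega}\to-\infty$; hence the limit is $-\infty$, mirroring the contradiction used in Lemma \ref{conffac4.lem}.
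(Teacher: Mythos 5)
Your proof is correct and is essentially identical to the paper's own argument, which likewise invokes Lemma \ref{conffac4.lem} to get $\bar{\Omega}'/\bar{\Omega}\to-\infty$ and then concludes from $\bar{\Omega}(0)=0$ that $\bar{M}\to-\infty$; you merely spell out the factorisation $\bar{M}=(\bar{\Omega}'/\bar{\Omega})\cdot(1/\bar{\Omega})$ that the paper leaves implicit behind the word ``clearly''. The alternative self-contained argument you sketch at the end is also sound, but unnecessary given the reduction.
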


\begin{proof}
By Lemma \ref{conffac4.lem} we know that $\mathop {\lim
}\limits_{\bar{T} \to
0^{-}}\frac{\bar{\Omega}'}{\bar{\Omega}}\mbox{(exists)}=-\infty$.
Since $\bar{\Omega}(0)=0$, clearly $\mathop {\lim
}\limits_{\bar{T} \to 0^{-}}\bar{M}\mbox{(exists)}=-\infty$.
\end{proof}

Finally, we examine the function $\bar{N}\equiv\bar{L}^{2}\bar{M}^{4}$.

\begin{lem}\label{conffac8.lem} {Let $\bar{T}$ be a cosmic time function which approaches $0$ from below and let, furthermore, $\bar{\Omega}(\bar{T})$ be a conformal factor which satisfies $\bar{\Omega}(0)=\infty$ and $\bar{\Omega}$ is positive, strictly monotonically increasing and $C^{2}$ on $[-c,0)$, $c>0$. Then, if $\bar{\lambda}$ exists and $\bar{\lambda}\neq 1,2$, $\mathop {\lim }\limits_{\bar{T} \to 0^{-}}\bar{N}\mbox{\emph{(exists)}}=\varrho$, and $0<\varrho\leq\infty$ if $\bar{\lambda}>2$ and $0\leq\varrho<\infty$ if $\bar{\lambda}<2$.}
\end{lem}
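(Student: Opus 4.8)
The plan is to realise $\bar{N}$ as a product of two factors whose limits have already been determined, and then to combine those limits carefully in the extended reals. Since $\bar{N}=\bar{L}^{2}\bar{M}^{4}$, and we are given that $\bar{\lambda}=\lim_{\bar{T}\to 0^{-}}\bar{L}$ exists, the first factor is immediate: $\bar{L}^{2}\to\bar{\lambda}^{2}$. Note that by Lemma \ref{conffac1.lem} the present hypotheses force $\bar{\lambda}\geq 1$, and together with $\bar{\lambda}\neq 1,2$ this confines $\bar{\lambda}$ to $(1,2)\cup(2,\infty]$; in particular $\bar{L}$ is eventually bounded away from $0$, so the passage $\bar{L}^{2}\to\bar{\lambda}^{2}$ remains valid in the extended-real sense even when $\bar{\lambda}=+\infty$.

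The second factor is supplied directly by Lemma \ref{conffac5.lem}, whose hypotheses coincide precisely with those stated here. That lemma yields $\lim_{\bar{T}\to 0^{-}}\bar{M}=\kappa$ together with the sign and finiteness information that is decisive for the present argument: $0<\kappa\leq\infty$ when $\bar{\lambda}>2$, and $0\leq\kappa<\infty$ when $\bar{\lambda}<2$. Hence $\bar{M}^{4}\to\kappa^{4}$, and I would define $\varrho:=\lim_{\bar{T}\to 0^{-}}\bar{N}=\bar{\lambda}^{2}\kappa^{4}$, which exists as the product of two convergent extended-real quantities.

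I would then split into the two cases. If $\bar{\lambda}>2$, then $\bar{\lambda}^{2}>4$ (or $+\infty$) is strictly positive and $\kappa>0$ strictly, so the product $\bar{\lambda}^{2}\kappa^{4}$ is strictly positive and at most $+\infty$, giving $0<\varrho\leq\infty$. If instead $1<\bar{\lambda}<2$, then $\bar{\lambda}^{2}$ is finite while $\kappa\in[0,\infty)$ forces $\kappa^{4}\in[0,\infty)$, so the product is finite and non-negative, i.e.\ $0\leq\varrho<\infty$.

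The only genuine subtlety, and hence the step I would state with care, is the extended-real arithmetic in $\bar{\lambda}^{2}\kappa^{4}$: one must check that in neither case can the two factors produce an indeterminate $0\cdot\infty$. This is exactly what the sign information of Lemma \ref{conffac5.lem} secures, since $\kappa$ is strictly positive precisely in the regime ($\bar{\lambda}>2$) where $\bar{\lambda}^{2}$ may diverge, and $\bar{\lambda}$ is finite precisely in the regime ($\bar{\lambda}<2$) where $\kappa$ may vanish. Everything else is a routine passage to the limit, requiring no estimate beyond Lemmas \ref{conffac1.lem} and \ref{conffac5.lem}.
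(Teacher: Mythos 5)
Your proof is correct and follows essentially the same route as the paper's: the same decomposition $\bar{N}=\bar{L}^{2}\bar{M}^{4}$, the same use of Lemma \ref{conffac1.lem} to confine $\bar{\lambda}$ to $(1,2)\cup(2,\infty]$, and the same case split $\bar{\lambda}>2$ versus $\bar{\lambda}<2$ resolved by the sign and finiteness information of Lemma \ref{conffac5.lem}. Your explicit check that no indeterminate $0\cdot\infty$ arises in the extended-real product is a detail the paper leaves implicit, but it is the same argument.
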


\begin{proof}
We know that $1<\bar{\lambda}\leq\infty$ and $\bar{\lambda}\neq 2$. The same cases apply as in Lemma \ref{conffac5.lem}.
\begin{description}
\item[(a)] $\bar{\lambda}>2$, $4<\bar{\lambda}^{2}\leq\infty$. By Lemma \ref{conffac5.lem} $\mathop {\lim }\limits_{\bar{T} \to 0^{-}}\bar{M}\mbox{(exists)}=\alpha\in\mathbb{R}^{+}\cup\{+\infty\}$. Thus, $\mathop {\lim }\limits_{\bar{T} \to 0^{-}}\bar{N}\mbox{(exists)}=\alpha '\in\mathbb{R}^{+}\cup\{+\infty\}$.
\item[(b)] $\bar{\lambda}<2$, $1<\bar{\lambda}^{2}<4$. From Lemma \ref{conffac5.lem} we know that $\mathop {\lim }\limits_{\bar{T} \to 0^{-}}\bar{M}\mbox{(exists)}=\beta\in\mathbb{R}^{+}\cup\{0\}$. Hence, $\mathop {\lim }\limits_{\bar{T} \to 0^{-}}\bar{N}\mbox{(exists)}=\beta '\in\mathbb{R}^{+}\cup\{0\}$.
\end{description}
\end{proof}


\begin{rem}\label{conffac6.rem}{Remark \ref{conffac4.rem} implies that the result of Lemma \ref{conffac8.lem} is still valid for $N$ if $T\rightarrow 0^{+}$ and $\Omega(0)=\infty$.}
\end{rem}

\begin{lem}\label{conffac9.lem}{Let $\bar{T}$ be a cosmic time function which approaches $0$ from below and let, furthermore, $\bar{\Omega}(\bar{T})$ be a conformal factor which satisfies $\bar{\Omega}(0)=0$ and $\bar{\Omega}$ is positive, strictly monotonically decreasing and $C^{2}$ on $[-c,0)$ and continuous on $[-c,0]$, $c>0$. Then, if $\bar{\lambda}$ exists and $\bar{\lambda}\neq 0,1$, $\mathop {\lim }\limits_{\bar{T} \to 0^{-}}\bar{N}\mbox{\emph{(exists)}}=+\infty$.}
\end{lem}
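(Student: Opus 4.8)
The plan is to assemble the result from the two facts already available about $\bar{M}$ and $\bar{L}$, and then to dispatch the product $\bar{N}=\bar{L}^{2}\bar{M}^{4}$ by a simple lower-bound estimate. The hypotheses stated here are precisely those of Lemma \ref{conffac7.lem} (with the stronger exclusion $\bar{\lambda}\neq 0,1$ in place of merely $\bar{\lambda}\neq 1$), so all the groundwork is in place and only the combination remains.

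First I would record the sign and value of $\bar{\lambda}$. Lemma \ref{conffac2.lem} already gives $\bar{\lambda}\leq 1$ under the present assumptions, so combining this with $\bar{\lambda}\neq 0,1$ leaves $\bar{\lambda}$ either a finite nonzero number strictly less than $1$, or $-\infty$. In either case $\bar{\lambda}^{2}>0$ (with the convention $\bar{\lambda}^{2}=+\infty$ when $\bar{\lambda}=-\infty$). This is exactly what the hypothesis $\bar{\lambda}\neq 0$ secures, and it is the only place that hypothesis is used.

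Next I would apply Lemma \ref{conffac7.lem} verbatim to conclude $\mathop{\lim}\limits_{\bar{T}\to 0^{-}}\bar{M}=-\infty$, whence $\mathop{\lim}\limits_{\bar{T}\to 0^{-}}\bar{M}^{4}=+\infty$. To handle the factor $\bar{L}^{2}$, I would use that $\bar{L}\to\bar{\lambda}$ with $\bar{\lambda}\neq 0$ to produce an interval $[-d,0)$, $d>0$, on which $\bar{L}^{2}\geq m$ for some fixed constant $m>0$ (for instance $m=\bar{\lambda}^{2}/4$ if $\bar{\lambda}$ is finite, or $m=1$ if $\bar{\lambda}=-\infty$). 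On this interval $\bar{N}=\bar{L}^{2}\bar{M}^{4}\geq m\,\bar{M}^{4}$, and letting $\bar{T}\to 0^{-}$ the right-hand side tends to $+\infty$; hence $\mathop{\lim}\limits_{\bar{T}\to 0^{-}}\bar{N}=+\infty$, as claimed.

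The computation is routine once Lemmas \ref{conffac2.lem} and \ref{conffac7.lem} are invoked; the one point requiring care, and the sole reason for assuming $\bar{\lambda}\neq 0$, is that allowing $\bar{\lambda}=0$ would turn $\bar{L}^{2}\bar{M}^{4}$ into an indeterminate $0\cdot\infty$ form, in which the decay of $\bar{L}^{2}$ could in principle offset the blow-up of $\bar{M}^{4}$. Bounding $\bar{L}^{2}$ away from zero near $\bar{T}=0$, which $\bar{\lambda}\neq 0$ makes possible, is precisely what removes this obstruction.
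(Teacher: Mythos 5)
Your proof is correct and follows essentially the same route as the paper's: both deduce $0<\bar{\lambda}^{2}\leq+\infty$ from Lemma \ref{conffac2.lem} together with $\bar{\lambda}\neq 0,1$, invoke Lemma \ref{conffac7.lem} to get $\mathop{\lim}\limits_{\bar{T}\to 0^{-}}\bar{M}=-\infty$, and conclude that $\bar{N}=\bar{L}^{2}\bar{M}^{4}\to+\infty$. The only difference is that you make explicit the final limit step (bounding $\bar{L}^{2}$ below by a positive constant near $\bar{T}=0$), which the paper compresses into a single ``Thus''.
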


\begin{proof}
We have $\bar{\lambda}<1$ and $\bar{\lambda}\neq0$, i.e.,
$0<\bar{\lambda}^{2}\leq+\infty$. By Lemma \ref{conffac7.lem},
$\mathop {\lim }\limits_{\bar{T} \to
0^{-}}\bar{M}\mbox{(exists)}=-\infty$. Thus, $\mathop {\lim
}\limits_{\bar{T} \to 0^{-}}\bar{N}\mbox{(exists)}=+\infty$.
\end{proof}

The special cases of $\bar{\lambda}=0,1,2$ in the above lemmas require special treatment and warrant further investigation.

\section{Proofs of the results of section \ref{confiso}}\label{proofs}

Since we are dealing with cosmic time functions which approach $0$
from above or below in the same proofs, we choose the notation of
Definition \ref{ISdef1.def} and section \ref{confiso} for
simplicity in this appendix, i.e., we denote the cosmic time
function and the conformal factor without a $\ \bar{} \ $ and the
quantities of the conformal space-time with a $\ \tilde{} \ $ for
both future and past scenarios.

We begin by proving Theorem \ref{K.thm}.

\begin{proof}
We will analyse $K$ directly. Using the following well-known
conformal relation for the Ricci tensor (e.g., see
\cite{GW1985,HawkEll1973,Geoffthesis}, a colon denotes covariant
differentiation with respect to $\mathbf{\tilde{g}}$)
\begin{align}
 R_{ab}=&\left(\frac{{\Omega}'}{{\Omega}}\right)^{2}\{2\left(2-{L}\right){T}_{,a}{T}^{,c}\tilde{g}_{cb}-\left(1+L\right)\tilde{g}_{ab}T_{,c}T_{,d}\tilde{g}^{cd}\} \\
&-\left(\frac{\Omega '
}{\Omega}\right)\{2T_{:ac}{\delta^{c}}_{b}+\tilde{g}_{ab}T_{:cd}\tilde{g}^{cd}\}+\tilde{R}_{ab},
\end{align}
we obtain
\begin{align}
R_{ab}R^{ab}= & \Omega^{-4}\left[\left(\frac{\Omega ' }{\Omega}\right)^{4}12\left(T_{,a}T^{,a}\right)^{2}\left(L^{2}-L+1\right)\right.\\
&-2\left(\frac{\Omega ' }{\Omega}\right)^{3}\{\left(8-4L\right)T^{,a}T^{,b}T_{:ba}-\left(8L+2\right)T_{,a}T^{,a}{T_{:b}} {^{b}}\}\\
&+\left(\frac{\Omega ' }{\Omega}\right)^{2}\{4T_{:ab}T^{:ba}+8\left({T_{:a}} {^{a}}\right)^{2}+4\left(2-L\right)T_{,a}T_{,b}\tilde{R}^{ba}-2\left(1+L\right)\tilde{R}T_{,a}T^{,a}\}\\
&-\left.2\left(\frac{\Omega '
}{\Omega}\right)\{2T_{:ab}\tilde{R}^{ba}+\tilde{R}{T_{:a}}
{^{a}}\}+\tilde{R}_{ab}\tilde{R}^{ab}\right]. \label{weric1.eqn}
\end{align}

Furthermore, $g_{ab}=\Omega^{2}\tilde{g}_{ab}$, $g^{ab}=\Omega^{-2}\tilde{g}^{ab}$ and ${C^{a}}_{bcd}={{\tilde{C}}^{a}}{_{bcd}}$. Thus,
\begin{eqnarray}
C_{abcd}C^{abcd}=\Omega^{-4}\tilde{C}_{abcd}\tilde{C}^{abcd}.\label{weric2.eqn}
\end{eqnarray}

By the conjunction of equations (\ref{weric1.eqn}) and (\ref{weric2.eqn}) we find
\begin{eqnarray}\fl
K=\frac{\tilde{C}_{abcd}\tilde{C}^{abcd}}{\left(\frac{\Omega ' }{\Omega}\right)^{4}12\left(T_{,e}T^{,e}\right)^{2}\left(L^{2}-L+1\right)-2\left(\frac{\Omega ' }{\Omega}\right)^{3}\left[\cdots\right]+\cdots+\tilde{R}_{ef}\tilde{R}^{ef}}  \label{weric3.eqn}\,.
\end{eqnarray}

$\tilde{R}_{ef}\tilde{R}^{ef}$, as well as $\tilde{C}_{abcd}\tilde{C}^{abcd}$, and the derivatives of $T$ are well-behaved at $T=0$, because we have required that the conformal metric be non-degenerate and at least $C^{2}$ on an open neighbourhood of $T=0$. Additionally, since a cosmic time function has a nowhere vanishing timelike derivative, we have $T_{,a}T^{,a}\neq 0$ on an open neighbourhood of $T=0$.

In Lemmas \ref{conffac3.lem} and \ref{conffac4.lem} and Remark
\ref{conffac2.rem} we have seen that under the imposed conditions,
$\mathop {\lim }\limits_{T \to 0^{\pm}}\frac{\Omega
'}{\Omega}\mbox{(exists)}=\pm\infty$ (the sign depending on
whether $T \to 0^{\pm}$ and the value of $\Omega(0)$).
Asymptotically, the sign of $\frac{\Omega '}{\Omega}$ will not
matter, since the dominant term in (\ref{weric3.eqn}) is an even
power of $\frac{\Omega '}{\Omega}$. Then, as $T\rightarrow
0^{\pm}$, we have two possible cases for ${\lambda}=\mathop {\lim
}\limits_{T \to 0^{\pm}}L$:
\begin{enumerate}
\item ${\lambda}$ finite or zero. Clearly,
\begin{eqnarray}
K\approx\frac{\tilde{C}_{abcd}\tilde{C}^{abcd}}{\left(\frac{\Omega ' }{\Omega}\right)^{4}12\left(T_{,e}T^{,e}\right)^{2}\left({\lambda}^{2}-{\lambda}+1\right)}.
\end{eqnarray}
Since ${\lambda}^{2}-{\lambda}+1>0$ $\forall$ ${\lambda}\in\mathbb{R}$, we immediately have that $\mathop {\lim }\limits_{T \to 0^{\pm} }K=0$.
\item  ${\lambda}=\pm\infty$. Then
\begin{eqnarray}
K\approx\frac{\tilde{C}_{abcd}\tilde{C}^{abcd}}{\left(\frac{\Omega ' }{\Omega}\right)^{4}12\left(T_{,e}T^{,e}\right)^{2}L^{2}},
\end{eqnarray}
and consequently $\mathop {\lim }\limits_{T \to 0^{\pm} }K=0$.
\end{enumerate}
\end{proof}

This proof is the source of the following results. We next prove Corollary \ref{regmet1.thm}.

\begin{proof}
The proof is evident by the regularity in
$(\tilde{\mathcal{M}},\mathbf{\tilde{g}})$, the expressions given
in (\ref{weric1.eqn}) and (\ref{weric2.eqn}), and the fact that
$\mathop {\lim }\limits_{{T} \to 0^{\pm} }{\Omega}({T})=0$ as well
as $\mathop {\lim }\limits_{T \to 0^{\pm}}\frac{\Omega
'}{\Omega}\mbox{(exists)}=\pm\infty$.
\end{proof}

Proof of Corollary \ref{regmet3.thm}.

\begin{proof}
By (\ref{weric1.eqn}) in the proof of Theorem \ref{K.thm}, we find
the following asymptotic relationship under the required
conditions:
\begin{eqnarray}
R_{ab}R^{ab}&\approx&{\Omega}^{-4}\left(\frac{{\Omega} '
}{{\Omega}}\right)^{4}12\left({T}_{,a}{T}^{,a}\right)^{2}\left({\lambda}^{2}-{\lambda}+1\right)\cr
&=&{M}^{4}12\left({T}_{,a}{T}^{,a}\right)^{2}\left({\lambda}^{2}-{\lambda}+1\right).
\end{eqnarray}
Since ${\lambda}^{2}-{\lambda}+1>0$ $\forall$ ${\lambda}\in\mathbb{R}$ and ${T}_{,a}{T}^{,a}\neq 0$, the relationship only depends on the possible value of ${M}$. In Lemmas \ref{conffac5.lem} and \ref{conffac6.lem} and Remark \ref{conffac4.rem} we have proven the existence and the possible values of ${M}_{0}$, in accordance with the assertion.
\end{proof}

Proof of Corollary \ref{regmet4.thm}.

\begin{proof}
Under these conditions, (\ref{weric1.eqn}) implies the following
asymptotic relationship:
\begin{eqnarray}
R_{ab}R^{ab}&\approx&{\Omega}^{-4}\left(\frac{{\Omega} '
}{{\Omega}}\right)^{4}{L}^{2}12\left({T}_{,a}{T}^{,a}\right)^{2}\cr
&=&{N}12\left({T}_{,a}{T}^{,a}\right)^{2},
\end{eqnarray}
which merely depends on ${N}$ since ${T}_{,a}{T}^{,a}\neq 0$. Lemma \ref{conffac8.lem} and Remark \ref{conffac6.rem} prove $\mathop {\lim }\limits_{{T} \to 0^{\pm} }{N}=\infty$.
\end{proof}

Proof of Corollary \ref{regmet5.thm}.

\begin{proof}
(\ref{weric2.eqn}) implies the assertion, since
$\tilde{C}_{abcd}\tilde{C}^{abcd}$ is well-behaved on an open
neighbourhood of ${T}=0$ and ${\Omega}(0)=\infty$.
\end{proof}

We now present the proof of the asymptotic kinematic isotropy (Theorem \ref{asykiniso.thm}), which is a generalisation of Theorem 3.3 of \cite{GW1985}.

\begin{proof}
We employ the conformal relations for the kinematic quantities
(e.g., see \cite{GW1985,Geoffthesis}):
\begin{eqnarray}
\theta&=&\Omega^{-1}\left[3\frac{\Omega'}{\Omega}T_{,a}\tilde{u}^{a}+\tilde{\theta}\right],\label{regkin1.eqn}\\
\sigma^{2}&=&\Omega^{-2}\tilde{\sigma}^{2}, \qquad  \omega^{2}=\Omega^{-2}\tilde{\omega}^{2},\label{regkin2.eqn}\\
\dot{u}^{a}&=&\Omega^{-2}\left[\dot{\tilde{u}}^{a}+\tilde{h}^{ab}\frac{\Omega'}{\Omega}{T_{,b}}\right],\label{regkin3.eqn}
\end{eqnarray}
where $\tilde{h}_{ab}=\tilde{g}_{ab}+\tilde{u}_a \tilde{u}_b$ denotes the projection operator into the rest space orthogonal to $\tilde{u}^{a}$. The expression $\dot{u}^{a}\dot{u}_{a}$ is given by
\begin{eqnarray}
\dot{u}^{a}\dot{u}_{a}=\Omega^{-2}\left[\left(\frac{\Omega'}{\Omega}\right)^{2}\tilde{h}^{ab}T_{,a}T_{,b}+2\frac{\Omega'}{\Omega}\tilde{h}^{ab}\dot{\tilde{u}}_{a}T_{,b}+\dot{\tilde{u}}^{a}\dot{\tilde{u}}_{a}\right].\label{regkin4.eqn}
\end{eqnarray}
Due to the regularity in the unphysical space-time and the fact
that $\mathop {\lim }\limits_{T \to 0^{\pm}}\frac{\Omega
'}{\Omega}\mbox{(exists)}=\pm\infty$, we obtain the following
expressions, as $T\rightarrow 0^{\pm}$,
\begin{eqnarray}\fl
\frac{\sigma^{2}}{\theta^{2}}=\frac{\tilde{\sigma}^{2}}{\left[3\frac{\Omega'}{\Omega}T_{,a}\tilde{u}^{a}+\tilde{\theta}\right]^{2}}\,,
\qquad
\frac{\omega^{2}}{\theta^{2}}=\frac{\tilde{\omega}^{2}}{\left[3\frac{\Omega'}{\Omega}T_{,a}\tilde{u}^{a}+\tilde{\theta}\right]^{2}}
\,,\qquad
\frac{\dot{u}^{a}\dot{u}_{a}}{\theta^{2}}\approx\frac{\tilde{h}^{bc}T_{,b}T_{,c}}{9(T_{,a}\tilde{u}^{a})^{2}}\,.
\end{eqnarray}

Since $\mathbf{u}$ (and therefore $\mathbf{\tilde{u}}$) is
orthogonal to the slice $T=0$, $\tilde{h}^{bc}T_{,b}T_{,c} = 0$ at
$T=0$. This, together with the regularity in the unphysical
space-time, the fact that $\mathop {\lim }\limits_{T \to
0^{\pm}}\frac{\Omega '}{\Omega}\mbox{(exists)}=\pm\infty$ and that
$T_{,a}\tilde{u}^{a}$ is everywhere non-zero, yields the
assertions of the theorem.
\end{proof}

This proof is the source of the following results. We continue with the proof of Corollary \ref{regkin1.thm}.

\begin{proof}
The expressions given in (\ref{regkin1.eqn}), (\ref{regkin2.eqn})
and (\ref{regkin4.eqn}), the regularity in
$(\tilde{\mathcal{M}},\mathbf{\tilde{g}})$, and the existence and
the value of the limit of $\frac{{\Omega}'}{{\Omega}}$ immediately
imply the assertion.
\end{proof}

Proof of Corollary \ref{regkin3.thm}.

\begin{proof}
By (\ref{regkin2.eqn}), the regularity in
$(\tilde{\mathcal{M}},\mathbf{\tilde{g}})$, and the fact that
$\mathop {\lim }\limits_{T \to 0^{\pm}}{\Omega}(T)=\infty$, we
immediately find the assertion.
\end{proof}

Proof of Corollary \ref{regkin4.thm}

\begin{proof}
(\ref{regkin1.eqn}) and (\ref{regkin4.eqn}) imply the following
asymptotic behaviours as $T\rightarrow 0^{\pm}$:
\begin{eqnarray}
\theta\approx 3{M}{T}_{,a}\tilde{{u}}^{a}, \qquad
\dot{u}^{a}\dot{u}_{a}\approx M^{2}\tilde{h}^{ab}T_{,a}T_{,b},
\end{eqnarray}
which solely depend on ${M}$ since ${T}_{,a}\tilde{{u}}^{a}> 0$,
$\tilde{h}^{ab}T_{,a}T_{,b}>0$ and both are regular on an open
neighbourhood of ${T}=0$. Lemma \ref{conffac5.lem} and Remark
\ref{conffac4.rem} establish the existence and the range of
possible values of ${M}_{0}$, in agreement with the assertion.
\end{proof}

\vspace{1cm}

\begin{center}{\bf  ACKNOWLEDGMENTS}\end{center}
~\\
 P.\ A.\ H.\ acknowledges the support of the German Academic
Exchange Service (DAAD) while undertaking this research at The
Australian National University. The authors thank Michael Ashley
and Benjamin Whale for useful technical discussions relevant to
this work and an anonymous referee for constructive comments.

\end{document}